\newcommand{\icf}{{independence covering family}}
\DeclareMathAlphabet{\w}{OT1}{pzc}{m}{it}
\newcommand{\C}[1]{\ensuremath{\operatorname{\mathcal {#1} }}}
\newcommand{\F}{{\mathcal F}}
\newcommand{\OO}{{\mathcal O}}
\newcommand{\FPT} {{\sf FPT}\xspace}
\crefname{claim}{claim}{claims}
\crefname{fact}{fact}{facts}
\newcommand{\h}[1]{
\bibliographystyle{plainurl}
\newcommand{\longversion}[1]{#1}
\newcommand{\shortversion}[1]{}

\newcommand{\pgcORbic}{{\sc Bipartite Partial Grundy Coloring/Bicliques~}}
\newcommand{\pgcORbicshort}{{\sc Bip.~PGC/Biclq}}

\newcommand{\pgcstruct}{{\sc Structural Partial Grundy Coloring~}}
\newcommand{\pgcstructshort}{{\sc SPGC~}}

\title{Parameterized Saga of First-Fit and Last-Fit Coloring}


\author{Akanksha Agrawal}{Department of Computer Science and Engineering, Indian Institute of Technology Madras, India}{akanksha@cse.iitm.ac.in}{}{}
\author{Daniel Lokshtanov}{Department of Computer Science, University of California Santa Barbara, USA}{daniello@ucsb.edu}{}{}
\author{Fahad Panolan}{School of Computing, University of Leeds, UK}{F.Panolan@leeds.ac.uk}{}{}
\author{Saket Saurabh}{Theoretical Computer Science, The Institute of Mathematical Sciences, HBNI,  India\\ Department of Informatics, University of Bergen, Norway}{saket@imsc.res.in}{}{}
\author{Shaily Verma}{Algorithm Engineering Group, Hasso Plattner Institute, Germany }{Shaily.Verma@hpi.de}{}{}

\authorrunning{Agrawal et al.} 

\Copyright{Anonymous} 

\begin{CCSXML}
<ccs2012>
   <concept>
       <concept_id>10003752.10003809.10010052.10010053</concept_id>
       <concept_desc>Theory of computation~Fixed parameter tractability</concept_desc>
       <concept_significance>500</concept_significance>
       </concept>
 </ccs2012>
\end{CCSXML}

\ccsdesc[500]{Theory of computation~Fixed parameter tractability} 

\keywords{Grundy Coloring, Partial Grundy Coloring, FPT Algorithm, $K_{i,j}$-free graphs} 

\category{} 






\begin{document}

\maketitle

\begin{abstract}
The classic greedy coloring (first-fit) algorithm considers the vertices of an input graph $G$ in a given order and assigns the first available color to each vertex $v$ in $G$. In the {\sc Grundy Coloring} problem, the task is to find an ordering of the vertices that will force the greedy algorithm to use as many colors as possible. 
In the {\sc Partial Grundy Coloring}, the task is also to color the graph using as many colors as possible. This time, however, we may select both the ordering in which the vertices are considered and which color to assign the vertex. 
The only constraint is that the color assigned to a vertex $v$ is a color previously used for another vertex if such a color is available. 
{Partial Grundy coloring of a graph corresponds to the vertex coloring  produced by the greedy coloring heuristics called \emph{last-fit} coloring, which assigns the last available color to each vertex in the given order.}

Whether {\sc Grundy Coloring} and {\sc Partial Grundy Coloring} admit fixed-parameter tractable (FPT) algorithms, algorithms with running time $f(k)n^{\OO(1)}$, where $k$ is the number of colors, was posed as an open problem by Zaker and by Effantin et al., respectively.

Recently, Aboulker et al. (STACS 2020 and Algorithmica 2022) resolved the question for \Grundycol\ in the negative by showing that the problem is W[1]-hard. For {\sc Partial Grundy Coloring}, they obtain an FPT algorithm on graphs that do not contain $K_{i,j}$ as a subgraph (a.k.a. $K_{i,j}$-free graphs). Aboulker et al.~re-iterate the question of whether there exists an FPT algorithm for {\sc Partial Grundy Coloring} on general graphs and also asks whether {\sc Grundy Coloring} admits an FPT algorithm on $K_{i,j}$-free graphs. We give FPT algorithms for {\sc Partial Grundy Coloring} on general graphs and for {\sc Grundy Coloring} on $K_{i,j}$-free graphs, resolving both the questions in the affirmative. We believe that our new structural theorems for partial Grundy coloring and ``representative-family'' like sets for $K_{i,j}$-free graphs that we use in obtaining our results may have wider algorithmic applications.

\end{abstract}

\newpage
\setcounter{page}{1}

\section{Introduction}
%
%
%
A {\em proper coloring} of a graph $G$ is an assignment of colors to its vertices such that none of the edges have endpoints of the same color. In {\sc $k$-Coloring}, we are given a graph $G$, and the objective is to test whether $G$ admits a proper coloring using at most $k$ colors.
The {\sc $k$-Coloring} problem is one of the classical NP-hard problems, and it is NP-complete for every fixed $k \geq 3$. The problem is notoriously hard even to approximate. Indeed, approximating {\sc $k$-Coloring} within $\C{O}(n^{1-\epsilon})$, for any $\epsilon >0$, is hard~\cite{DBLP:journals/jcss/FeigeK98}. 
Also, under a variant of Unique Games Conjecture, there is no constant factor approximation for {\sc $3$-Coloring}~\cite{DBLP:journals/siamcomp/DinurMR09}.

The {\sc $k$-Coloring} problem has varied applications ranging from scheduling, register allocations, pattern matching, and beyond~\cite{DBLP:journals/cl/ChaitinACCHM81,chen1998unifying,marx2004graph}.
Owing to this,
several heuristics-based algorithms have been developed for the problem.
One of the most natural greedy strategies considers the vertices of an input graph $G$ in an arbitrary order and assigns to each vertex the first available
color in the palette (the color palette for us is $\mathbb{N}$). In literature, this is called the {\em first-fit} rule. Notice that there is nothing special about using the ``first'' available color; one may instead opt for any of the previously used colors, if available, before using a new color; let us call this greedy rule the {\em any-available} rule. It leads to yet another greedy strategy to properly color a graph, and one can easily prove that this greedy strategy is equivalent to the ``last-(available) fit'' rule.

For any greedy strategy, one may wonder: {\em How bad can the strategy perform for the given instance?} The above leads us to the well-studied fundamental combinatorial problems, \Grundycol\ and \pgc, that arise from the aforementioned greedy strategies for proper coloring. In the \Grundycol\ problem, we are given a graph $G$ on $n$ vertices and an integer $k$, and the goal is to check if there is an ordering of the vertices on which the first-fit greedy algorithm for proper coloring uses at least $k$ colors?
Similarly, we can define the \pgc\ problem, where the objective is to check if, for the given graph $G$ on $n$ vertices and integer $k$, there is an ordering of the vertices on which the any-available greedy algorithm uses at least $k$ colors. In this paper, we consider these two problems in the realm of parameterized complexity.



The \Grundycol\ problem has a rich history dating back to 1939~\cite{grundy1939mathematics}. Goyal and Vishwanathan~\cite{goyal1997np} proved that \Grundycol\ is NP-hard. Since then, there has been a flurry of results on the computational and combinatorial aspects of the problem both on general graphs and on restricted graph classes, see, for instance~\cite{havet2013grundy,zaker2005grundy,verma2020grundy,sampaio2012algorithmic,DBLP:journals/dam/BonnetFKS18,erdos2003equality,DBLP:journals/dam/KiersteadS11,christen1979some,DBLP:journals/dam/Zaker07a,DBLP:journals/jco/TangWHZ17,DBLP:journals/endm/AraujoS09,kortsarz2007lower,hedetniemi1982linear,zaker2006results,belmonte2020grundy} (this list is only illustrative, not comprehensive). The problem \pgc\ was introduced by Erd{\"o}s et al.~\cite{erdos2003equality} and was first shown to be NP-hard by Shi et al.~\cite{shi2005algorithm}. The problem has gained quite some attention thereafter; see, for instance~\cite{aboulker2022grundy,effantin2016characterization,panda2019partial,IBIAPINA2020111920,verma2020grundy,DBLP:conf/ispa/DekarEK07,effantin2016characterization,DBLP:journals/gc/BalakrishnanK11,balakrishnan2013interpolation,verma2020grundy}.



These problems have also been extensively studied from the parameterized complexity perspective. Unlike {\sc $k$-Coloring}, both these problems admit XP algorithms~\cite{zaker2006results,effantin2016characterization}, i.e., an algorithm running in time bounded by $|V(G)|^{f(k)}$. The above naturally raises the question of whether they are fixed-parameter tractable (FPT), i.e., admit an algorithm running in time $f(k)\cdot |V(G)|^{\C{O}(1)}$. In fact, these problems have also been explicitly stated as open problems~\cite{aboulker2022grundy,DBLP:journals/dam/BonnetFKS18,havet2013grundy,sampaio2012algorithmic}. 

Havet and Sampaio~\cite{havet2013grundy} studied \Grundycol with an alternate parameter and showed that the problem of testing whether there is a Grundy coloring with at least $|V(G)|-q$ colors is FPT parameterized by $q$. Bonnet et al.~\cite{DBLP:journals/dam/BonnetFKS18} initiated a systematic study of designing parameterized and exact exponential time algorithms for \Grundycol\ and obtained FPT algorithms for the problem for several structured graph classes. They gave an exact algorithm for \Grundycol running in time $2.443^n\cdot n^{\C{O}(1)}$ and also showed that the problem is \FPT on chordal graphs, claw-free graphs and graphs excluding a fixed minor. In the same paper, they stated the tractability status of \Grundycol on general graphs parameterized by the treewidth or the number of colors as central open questions. Belmonte et al.~\cite{belmonte2020grundy} resolved the first question by proving that \Grundycol is W[1]-hard parameterized by treewidth, but surprisingly, it becomes FPT parameterized by pathwidth. Later, Aboulker et al.~\cite{aboulker2022grundy} proved that \Grundycol\ does not admit an FPT algorithm (parameterized by the number of colors) and obtained an FPT algorithm for \pgc~on $K_{t,t}$-free graphs (which includes graphs of bounded degeneracy, graphs excluding some fixed graph as minor/topological minors, graphs of bounded expansion and no-where dense graphs).
A graph is $K_{i,j}$-free if it does not have a {\em subgraph} isomorphic to the complete bipartite graph with $i$ and $j$ vertices, respectively, on the two sides. 
They concluded their work with the following natural open questions:
\vspace{-1mm}
\begin{tcolorbox}[colback=blue!5!white,colframe=white!100!black]
\begin{description}
\item[Question 1:] Does \pgc\ admit an FPT algorithm?
\label{abc}
\item[Question 2:] Does \Grundycol\ admit an FPT algorithm on $K_{i,j}$-free graphs?
\end{description}
\end{tcolorbox}

In this paper, we resolve the questions $1$ and $2$ in the affirmative by a new structural result and a new notion of representative families for $K_{i,j}$-free graphs, respectively. In the next section, we give an intuitive overview of both results, highlighting our difficulties and our approaches to overcome them. 

\subsection{Our Results, Methods and Overview}
Our first result is the following.
\begin{theorem}
\label{thm:pgc}
\pgc\ is solvable in time $2^{\OO(k^5)} \cdot n^{\OO(1)}$.
\end{theorem}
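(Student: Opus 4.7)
The plan is to decouple the problem into a structural step and an algorithmic step. The structural step is a theorem I will prove stating that $G$ admits a partial Grundy $k$-coloring if and only if there is a vertex set $S\subseteq V(G)$ with $|S| = \OO(k^5)$ such that $G[S]$ itself admits a partial Grundy $k$-coloring. The algorithmic step then searches for such an $S$ using color coding. One direction of the structural theorem is immediate by monotonicity: given an ordering of $G[S]$ that forces $k$ colors under the any-available rule, appending the vertices of $V(G)\setminus S$ to the end of the ordering keeps each vertex of $S$ with exactly the same previously processed neighbors, so the colors assigned to $S$ are unchanged and the $k$ colors persist in $G$.

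The non-trivial direction consists of extracting a small witness from an arbitrary partial Grundy coloring $c$ of $G$. A proper coloring is partial Grundy exactly when every color $i\in[k]$ admits a \emph{witness}: a vertex of color $i$ whose neighborhood contains every lower color. I would first pick a witness $w_i$ for every $i$ and a representative neighbor $N_i^{(j)}$ of each lower color $j<i$, obtaining a preliminary set $S_0 = \{w_i\}\cup\{N_i^{(j)}\}$ of $\OO(k^2)$ vertices on which the restriction of $c$ is trivially partial Grundy. The $\OO(k^2)$ bound is enough for monotonicity, but the structural theorem in the form required by the algorithm must also remain robust to how we later recover the witness pattern from an unknown embedding, which is where the bound degrades. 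I plan to iteratively augment $S_0$ to resolve these dependencies, either by pulling in alternative witnesses when a chosen one fails to be recoverable or by inserting additional low-color neighbors to disambiguate roles; a careful potential argument---bounding the number of augmentation rounds and the cost per round---should control the blow-up at $\OO(k^5)$.

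Given the structural theorem, the algorithm is standard color coding. Using an $(n,t)$-perfect hash family of size $2^{\OO(t)}\log n$ with $t=\OO(k^5)$, the target set $S$ becomes colorful under at least one hash function. For each hash function I would enumerate the $2^{\OO(k^5)}$ possible colored witness configurations on $t$ vertices---an adjacency pattern together with a coloring and a role assignment (which vertex is $w_i$ and which is $N_i^{(j)}$)---and, for each, check in time polynomial in $t$ and $n$ whether such a configuration embeds colorfully into $G$ and satisfies the witnessing condition. Summing over the family yields the bound $2^{\OO(k^5)}\cdot n^{\OO(1)}$.

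The hardest step is the structural theorem. When one restricts $c$ to a subset $S$, vertices in $S$ suddenly see smaller neighborhoods, and a vertex that was forced in $G$ to open a new color under any-available may now be allowed to reuse an older one, collapsing the certificate in hard-to-predict ways; conversely, the shrinking neighborhood can also destroy a witness if the specific low-color neighbor was left outside $S$. Producing an $S$ of polynomial size that is simultaneously small, self-contained as a partial Grundy instance, and rigid enough to be recovered by color coding---and in particular pinning down the exponent at $5$---is where I expect the main technical effort of the proof to concentrate.
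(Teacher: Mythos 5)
Your decomposition — a structural theorem giving a size-$\OO(k^5)$ witness, then color coding — does not actually match the paper's proof, and more importantly the algorithmic half has a gap that cannot be patched by tuning the structural half.

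The structural theorem you propose is not where the difficulty lies. A witness of size $\OO(k^2)$ is already known (Effantin et al., restated as Observation~\ref{obs:k-witness}): one can always pick a dominator $w_i$ per color class and a representative neighbor in each lower class, giving a set $S_0$ of at most $k^2$ vertices such that $(G[S_0],k)$ is a yes-instance. Blowing this up to $\OO(k^5)$ to make it ``robust to recovery'' is not the missing ingredient, because the real obstruction is in the second step.

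That second step is where the plan breaks. You color $G$ with a perfect hash family on $t = \OO(k^5)$ colors, so each vertex of the unknown witness $S$ receives its own color, and then propose to enumerate configurations (adjacency pattern, hash colors, roles) and ``check in time polynomial in $t$ and $n$ whether such a configuration embeds colorfully into $G$.'' That check is a colorful subgraph/constraint-satisfaction problem with one variable per pattern vertex, each variable ranging over its own color class $Z_{\chi(p_i)}$, and binary constraints (edges must map to edges, vertices of the same Grundy color must be mapped to non-adjacent vertices). This is not solvable in $n^{\OO(1)}$ time in general; in fact it is W[1]-hard in $t$ (Colorful Clique is a special case). Color coding buys polynomial-time search only when the pattern has structure that supports dynamic programming — bounded treewidth, as for paths and trees. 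The partial Grundy witness pattern has no such structure; the dominators and their representatives form a dense bipartite-like graph, and the independence constraints inside each $W_i$ only make it worse. Nothing in your augmentation scheme produces a bounded-treewidth pattern, so the running-time claim for the embedding step is unsupported.

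The paper circumvents exactly this obstacle by color coding much more coarsely — with only $k$ colors, so that each whole independent set $W_i$ lands inside one class $Z_i$ — and then, rather than guessing a vertex-by-vertex embedding, it invokes the independence covering lemma of Lokshtanov et al.\ to produce an independent set $Y_i \supseteq W_i$ within each $Z_i$. That lemma requires bounded degeneracy; the genuinely novel structural contribution (Theorem~\ref{thm:deg_reduction}) is a polynomial-time procedure that either certifies a yes-instance outright or finds $\OO(k^3)$ bicliques whose removal bounds every degree by $k^3$. Guessing one side of each biclique to discard (a $2^{\ell}$-factor with $\ell \leq 2k^3$) then reduces to the degenerate case. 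The $2^{\OO(k^5)}$ bound comes from the product of these factors, not from the witness size. Your plan, by contrast, contains no mechanism to tame the degeneracy or avoid the hard embedding subproblem.
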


Our algorithm starts with the known ``witness reformulation'' of \pgc. It is known that $(G,k)$ is a yes-instance of \pgc if and only if there is a vertex subset $W$ of size at most $k^2$ such that, $(G[W],k)$ is a yes-instance of the problem. 
In the above, the set $W$ is known as a {\em small witness set}. Our algorithm is about finding such a set $W$ of size at most $k^2$. Observe that this witness reformulation immediately implies that \pgc admits an algorithm with running time $n^{\OO(k^2)}$ time. To build our intuition, we first give a simple algorithm for the problem on graphs of bounded degeneracy (or even, no-where dense graphs). This algorithm has two main steps: (a) classical color-coding of Alon-Yuster-Zwick~\cite{DBLP:journals/jacm/AlonYZ95}, and (b) independence covering lemma of Lokshtanov et al.~\cite{lokshtanov2020covering}. 

Let $(G,k)$ be a yes instance of \pgc, where $G$ is a $d$-degenerate graph on $n$ vertices, and $W$ be a small witness set of size at most $k^2$. As $(G[W],k)$ must be a yes-instance of the problem, there exists an ordering of the vertices such that when we apply {\em any-available} greedy rule, it uses at least $k$ colors. Let $\what{c}$ be this proper coloring of $G[W]$. The tuple $(W_i:= \{v\in W\mid \what{c}(v) = i\})_{i\in [k]}$ is called a {\em $k$-partial Grundy witness} for $G$. 
Now we apply the color-coding step of the algorithm. That is, we color the vertices of $G$ using $k$ colors independently and uniformly at random, and let $Z_1, \cdots, Z_k$ be the color classes of this coloring. The probability that for each $i\in [k]$, $W_i\subseteq Z_i$, is $k^{-k^2}$. Notice that since $G$ is a $d$-degenerate graph, we have that $G_i=G[Z_i]$, for each $i\in [k]$, is $d$-degenerate. Now we exploit this fact and apply the independence covering lemma of Lokshtanov et al.~\cite{lokshtanov2020covering}. That is given as input $(G_i,k^2)$, in time $2^{\OO(dk^2)}n^{\OO(1)}$ it produces a family
$\F_i$, of size $2^{\OO(dk^2)} \cdot \log n$ that contains independent sets of $G_i$. Furthermore, given any independent set $I$ of $G_i$ of size at most $k^2$, there exists an independent set $F\in \F_i$, such that $I\subseteq F$  ($F$ covers $I$). In particular, we know that there is a set $F_i\in \F_i$ that covers $W_i$. So, the algorithm just enumerates each tuple $(F_1,\ldots,F_k)$ of $\F_1\times \cdots \times \F_k$ and checks whether $(F_1,\cdots, F_k)$ is a $k$-partial Grundy coloring of $G[F_1\cup \cdots \cup F_k]$. If $(G,k)$ is a yes instance, our algorithm is successful with probability $k^{-k^2}$. Moreover, we can convert the described randomized algorithm to a deterministic one by using the standard derandomization technique of ``universal sets''~\cite{naor1995splitters,fahad2015dynamic}. Some remarks are in order; it can be shown that each of $|W_i|\leq k$, and hence we can call the independence covering lemma on $(G_i,k)$, resulting in an improved running time of $2^{\C{O}(d k^2 \log k)} \cdot n^{\C{O}(1)}$.  Aboulker et al.~\cite{aboulker2022grundy} proved that \pgc\ on $K_{t,t}$-free graphs is FPT, which includes $t$-degenerate graphs. Aboulker et al. did not explicitly mention the running time, but their running time is at least $2^{k^t}n$.  Our new algorithm improves over this.

For general instances, we do not have a bound on the degeneracy of the input graph.  However, we can achieve this by using our new structural result. 


\begin{restatable}[Structural result]{theorem}{StructResult}
\label{thm:deg_reduction}
There is a polynomial-time algorithm that given a graph $G$ and a positive integer $k$, does one of the following:
\begin{itemize}
\item[(i)] Correctly concludes that $(G,k)$ is a yes-instance of \pgc, or
\item[(ii)] Outputs at most $2k^3$ induced bicliques $A_1\cdots,A_{\ell}$
in $G$ such that the following holds. 
For any $v \in V(G)$, the degree of $v$ in $G-F$ is at most $k^3$, where $F$ is the union of the edges in the above bicliques.
\end{itemize}
\end{restatable}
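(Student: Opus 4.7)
The plan is to build the bicliques greedily. Maintain a set $F \subseteq E(G)$ of removed edges (initially empty) together with a list $\mathcal{L}$ of induced bicliques whose edge sets together form $F$. While some vertex $v$ has $\deg_{G-F}(v) > k^3$, I either extract a new induced biclique of $G$ and add it to $\mathcal{L}$, or, if the procedure has already produced more than $2k^3$ bicliques, certify directly that $(G,k)$ is a yes-instance of \pgc\ and return outcome (i).

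For the extraction step, fix a vertex $v$ with $\deg_{G-F}(v) > k^3$ and let $N' = N_{G-F}(v)$, so $|N'| > k^3$. I aim to produce an induced biclique $A \cup B$ of $G$ with $v \in A$, $B \subseteq N'$, and both $|A|, |B|$ at least $k$. I build this by a neighborhood-drilling argument: iteratively pick a vertex of $N'$ whose common $G$-neighborhood inside the current candidate set $B$ is still sufficiently large, add it to $A$, and shrink $B$ to that common neighborhood; the initial slack $|N'|>k^3$ is budgeted to allow this drill to depth $\ge k$. After the drill, I clean each of $A$ and $B$ into independent sets by a Ramsey-style selection, which costs only a bounded factor in size. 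The resulting biclique is induced, contains $v$, and removes at least $|B| \ge k$ edges incident to $v$, making strict progress against $v$'s degree budget.

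If after the full budget of $2k^3$ rounds some vertex $v^\star$ still has degree more than $k^3$ in $G-F$, I invoke outcome (i) by assembling a $k$-partial Grundy witness from the accumulated bicliques. The pigeonhole observation is that $v^\star$ must participate in, or be adjacent into, at least $k$ of the constructed bicliques, since each biclique consumed a bounded amount of its degree; from these $k$ bicliques I extract representatives one per color class $W_1, \dots, W_k$ so that a chosen $v_i \in W_i$ has a neighbor in every earlier class $W_j$, $j < i$. The cross-adjacencies within each biclique supply exactly the edges needed to chain the color classes, and taking one vertex per biclique keeps each $W_i$ independent. By the known witness reformulation of \pgc, a witness of size at most $k^2$ suffices, so this certifies a yes-instance.

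The step I expect to dominate the difficulty is the induced-biclique extraction: ensuring both sides are independent in $G$, not merely anti-complete across, while both retain size at least $k$ and $B \subseteq N'$. Naive neighborhood-drilling delivers only a bipartite subgraph, and cleaning each side into an independent set requires a Ramsey-type argument whose cost must be absorbed by the initial slack in $|N'|$. Tuning the drilling depth against the Ramsey extraction is where the specific constants $k^3$ (for the degree bound) and $2k^3$ (for the biclique count) will emerge, and getting that accounting tight — rather than, say, $\mathrm{poly}(k)$ with larger exponents — is the delicate part of the argument.
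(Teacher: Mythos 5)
Your approach is fundamentally different from the paper's, and it has genuine gaps. The most serious is the claim that cleaning each side of the drilled biclique into an independent set by a ``Ramsey-style selection\dots costs only a bounded factor in size.'' Ramsey numbers are exponential, not polynomial: a set of $k^3$ vertices in a general graph may have no independent set of size $k$ at all (take $N'$ a clique), and in that case the extraction produces nothing of the required form. You would need to handle the case where $N'$ is clique-dense by certifying a yes-instance, but a clique in $N'$ of size guaranteed by Ramsey out of only $k^3$ vertices has size $O(1)$, far short of $k$. More basically, the very premise that a vertex of degree $>k^3$ must touch a large induced biclique is false for general graphs: in $K_n$ every induced biclique is a single edge, so your while-loop can run $\Theta(n^2)$ times without ever hitting the $2k^3$ budget, and it is never clear from your invariants that the pigeonhole certification will kick in soon enough.

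The certification step itself is also unsupported. Having $v^\star$ participate in $k$ extracted bicliques gives you $k$ anti-complete pairs $(A_i,B_i)$ incident to $v^\star$, but there is no reason a representative chosen from the $i$-th biclique should have a neighbour in \emph{every} earlier class $W_j$, $j<i$ --- the bicliques were extracted independently and carry no cross-adjacency guarantees. Compare this with the paper, which does not extract bicliques from a high-degree vertex at all. Instead it runs first-fit to obtain $k'<k$ proper color classes, and for each pair $(C_p,C_q)$ applies a greedy \emph{partial Grundy witness} construction on the bipartite graph $H_{p,q}$: vertices of one side are processed in non-increasing degree order, and each vertex either becomes a dominator of a new class $Q_q$ or is assigned to a side $B_j$ that (by failure of the dominator condition) is complete to $N(x_j)\setminus P$. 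The degree bound $k^2$ per pair then comes for free: a vertex $v\in B_j$ has $d(v)\le d(x_j)$ by the degree ordering, and $v$ is adjacent to all of $N(x_j)\setminus P$, so its remaining degree outside that biclique is at most $|P|\le k^2$. The bicliques, the yes-certification, and the degree bound are all artefacts of a \emph{single} failed greedy coloring attempt, not of independent extractions. This interlocking structure is the missing idea in your proposal.
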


The structural result (\Cref{thm:deg_reduction}) is one of our main technical contributions. Next, we show how to design an algorithm for \pgc using \Cref{thm:deg_reduction}. We follow the same steps as for the one described for the degenerate case. That is, we have color classes $Z_i$s and they contain the respective $W_i$s (the part of the small witness set $W$). Now, to design a family of independent sets in $G_i=G[Z_i]$, we do as follows. Let $(L_j,R_j)$ be a bipartition of $A_j$, for each $j\in[\ell]$. Observe that any independent set $I$ (in particular of $G_i$) intersects $L_j$ or $R_j$, but {\em not both}, for any $j \in[\ell]$. Thus, we first guess whether $W_i$ intersects $L_j$, $R_j$ or none. Let this be given by a function $f_i:[\ell]\rightarrow \{L, R, N\}$, that is, if $f_i(j)=L$, then $W_i \cap L_j=\emptyset$, if $f_i(j)=R$, then $W_i\cap R_j=\emptyset$, else $W_i\cap (L_j \cup R_j)=\emptyset$. Taking advantage of this property, for each guess of which of $L_j$ or $R_j$ is not contained in $W_i$, we delete the corresponding set (which is one of $L_j$ or $R_j$, for each $j\in [\ell]$) from $G_i$. We call the resulting graph $G_i^{f_i}$. This implies that for any $f_i$, in $G_i^{f_i}$ we delete all edges of $F$ (where $F$ is the union of edges in the bicliques). Hence, the maximum degree of $G_i^{f_i}$ is at most $k^3$, and therefore it has degeneracy at most $k^3$. Now using the independence covering step of the algorithm for degenerate graphs, we can finish the algorithm. The proof \Cref{thm:deg_reduction} is obtained by carefully analyzing the reason for the failure of a greedy algorithm.

\if 0

\todo[inline]{Commented}
The first ingredient of our algorithm is a powerful structural result that, roughly speaking, proves that we can do one of the following in polynomial time: i) using a carefully designed greedy procedure conclude that the instance is a yes-instance, or ii) by exploiting the failure of such a procedure, we can find at most $2k^3$ bicliques so that the degree of vertices in the graph (excluding the degrees in these bicliques) can be bounded by $k^3$. We first begin by intuitively explaining how we obtain the above algorithm (which we will call $A^\star$), which is one of the main techniques that we use for obtaining a proof of the theorem.

To this end, we define the notion of {\em $k$-partial Grundy witness}. We remark that there is a partial Grundy coloring using at least $k$ colors if and only if there exists such a {\em $k$-partial Grundy witness} of ``small'' size. Such a result was first obtained by Effantin et al.~\cite{effantin2016characterization}, and we restate it in a convenient way. 

For a graph $G$ and a positive integer $k$, we can show that there is an execution order of the {\em any-available} greedy algorithm using at least $k$ colors if and only if we can obtain a proper coloring, $c: V(G)\rightarrow [k']$ of $G$, where $k' \geq k$, that satisfies the following condition: for any $z \in [k]$, there is a vertex $v$ with $c(v) = z$, that has neighbors in all the previous color classes. A coloring satisfying the above condition is called a {\em partial Grundy coloring} using $k'$ colors. Using the above interpretation, by focusing on the first $k$ color classes in a partial Grundy coloring, we can guarantee that there is a vertex subset $W$ of size at most $k^2$, such that the graph $G$ induced on $W$ and $G$ are equivalent with respect to computing a partial Grundy coloring using at least $k$ colors. We can now shift our focus on finding this set $W$ and let $\what{c}: W\rightarrow [k]$ be an unknown $k$-partial Grundy coloring of $G[W]$. We say that the tuple $(W_i:= \{v\in W\mid \what{c}(v) = i\})_{i\in [k]}$ is called a {\em $k$-partial Grundy witness} for $G$. Recall another definition for partial Grundy coloring.
A $k$-partial Grundy coloring of $G$ is a proper coloring $c: V(G)\rightarrow [k]$, such that for each $i\in [k]$, there is a vertex $v\in V(G)$ with: $(i)$ $c(v) =i$ and $(ii)$ for every $j \in [i-1]$, there is $u\in N_G(v)$ with $c(u) = j$. Such a vertex $v$ is called a {\em dominator} in color class $c^{-1}(i)$.

Let $(G,k)$ be the input to \pgc. First, we color $G$ using the any-available greedy rule. Let $(C_1,\ldots,C_{k'})$ be the coloring obtained, and it is a partial Grundy coloring of $G$. If $k'\geq k$, then we are done.
Suppose $k'< k$. Fix an ordered pair $(p,q)\in [k']\times [k']$ such that $p\neq q$. Next, we give an overview of a polynomial time algorithm that either concludes that $G_{p,q}=G[C_p\cup C_q]$ (and hence $G$) admits a partial Grundy coloring using at least $k$ colors or outputs $2k$ bicliques in $G_{p,q}$ such that each vertex $v\in C_p$ has degree at most $k^2$ in the graph obtained from $G_{p,q}$ by deleting the edges of the bicliques. Let us call this algorithm as $A_1$. The algorithm $A^{\star}$ executes $A_1$ for all pairs $(p,q)$ with $p\neq q$ and either conclude that $(G,k)$ is a yes-instance or outputs at most $2k^3$ bicliques ${\cal K}$. One can prove that in the latter case, for each $v\in V(G)$, the degree of $v$ in $G-E({\cal K})$ is at most $k^3$.
The overview of algorithm $A_1$ is given below. Let $\sigma=v_1,v_2,\ldots,v_{\ell}$ be an ordering of the vertices in $C_p$ in the non-increasing order of their degree in $G_{p,q}$. Let $C_{k}=\{v_1\}$ and $v_1$ will be the dominator in $C_k$. The algorithm constructs color classes $C_k,C_{k-1},\ldots, C_{r}$ in this order such that $C_{r},\ldots, C_k$ is a partial Grundy coloring and $|C_i|=k-i+1$. Now, if $r=1$, then we are done.
Let us consider the construction of $C_{k-1}$. Let $i$ be the smallest index in $\{2,\ldots,\ell\}$ such that there is a vertex $w\in N_{G_{p,q}}(v_1)$ such that $v_i$ is not adjacent to $w$. Then, $C_{k-1}=\{v_i,w\}$.
Here, $v_i$ is the dominator in color class $C_{k-1}$. Now notice that all the vertices in $B_1=\{v_2,\ldots,v_{i-1}\}$ is adjacent to $N_{G_{p,q}}(v_1)$, and hence it forms a biclique.
This idea is extended while constructing each $C_j$ and gets the required bicliques. Moreover, similar to the idea used in the construction of $C_{k-1}$, for each $j\in [k-2]$, we will add a neighbor of the vertex $v_i$ (which belongs to $C_{k-1}$) to $C_j$ to satisfy the partial Grundy coloring property. Moreover while constructing $C_j$ we may add some vertices to $B_1,B_2,\ldots B_{k-j+1}$.
When we add a vertex to $B_{1}$, we will make sure that this vertex is neighbor to $N(x_1)$ minus the vertices in $C_{k},C_{k-1},\ldots$.
From the ordering $\sigma$, the degrees of the vertices in $B_1$ is at most the degree of $v_1$, which is equal to $|N_{G_{p,q}}(v_1)|$. This implies that each vertex in $B_1$ has at most $k^2$ neigbours outside $N_{G_{p,q}}(v_1)\setminus (C_k\cup C_{k-1}\cup \ldots)$. This implies that after we delete the edge of the biclique between $B_1$ and $N_{G_{p,q}}(v_1)\setminus (C_k\cup C_{k-1}\cup \ldots)$, the degree of each vertex in $B_1$ is at most $k^2$. These arguments can be extended to $B_2,B_3,\ldots, B_r$.

\fi

Our next result is an affirmative answer to Question $2$. 

\begin{theorem}\label{thm:KijGrundy}
For any fixed $i,j \in \mathbb{N}$, \Grundycol\ on $K_{i,j}$-free graphs admits an FPT algorithm when parameterized by $k$.
\end{theorem}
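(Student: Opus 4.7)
The plan is to follow the blueprint of \Cref{thm:pgc} (color-coding plus a combinatorial covering lemma), but to replace the independence covering lemma with a new representative-family-style lemma tailored to $K_{i,j}$-free graphs. The starting point is the classical witness characterization of Zaker: $G$ has Grundy number at least $k$ if and only if $G$ contains a $k$-\emph{atom} as a subgraph, that is, a structure on exactly $2^{k-1}$ vertices built recursively so that the root is adjacent to the roots of pairwise-disjoint $(k-1)$-, $(k-2)$-, \ldots, $1$-atoms, and whose natural coloring (root has color $k$, its $c'$-atom child has color $c'$, and so on) is proper. Because this recursive structure is precisely what makes \Grundycol\ \WOH\ on general graphs, the $K_{i,j}$-free hypothesis must be used to tame the combinatorial explosion.

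With the witness fixed, I would apply color-coding: color $V(G)$ uniformly with $k$ colors. For any fixed $k$-atom $W \subseteq V(G)$, each vertex of $W$ receives its intended atom-color with probability $k^{-|W|} \geq k^{-2^{k-1}}$; derandomize via an $(n, 2^{k-1}, k)$-universal family. After conditioning, only vertices of color class $c$ may play the role of atom-color-$c$ vertices.

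The core is a bottom-up dynamic program indexed by color $c$: for each vertex $v$ in the $c$-th color class, maintain a family $\mathcal{F}(v, c)$ of subsets $F \subseteq V(G)$ with $v \in F$, $|F| \leq 2^{c-1}$, such that $G[F]$ is a $c$-atom rooted at $v$ consistent with the color coding. The recursion builds $F \in \mathcal{F}(v, c)$ by gluing, for each $c' < c$, a set from $\mathcal{F}(u_{c'}, c')$ at some color-$c'$ neighbor $u_{c'}$ of $v$, requiring the children to be pairwise disjoint. Naively, $|\mathcal{F}(v, c)|$ can be exponential in $n$, so I would trim to a representative subfamily $\widehat{\mathcal{F}}(v, c)$ with the property: for every ``obstacle'' set $S \subseteq V(G)$ of size at most $2^{k-1}$, if some $F \in \mathcal{F}(v, c)$ has $F \cap S \subseteq \{v\}$, then some $\widehat{F} \in \widehat{\mathcal{F}}(v, c)$ has $\widehat{F} \cap S \subseteq \{v\}$ as well. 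The target size is $g(k, i, j)$, independent of $n$.

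The main obstacle is proving such a representative-family lemma in the $K_{i,j}$-free setting: standard matroid-based representative families (Lov\'asz, Fomin et al.) do not apply because $V(G)$ carries no suitable matroid of rank $f(k)$. I expect the construction to proceed by a greedy marking procedure: iteratively add to $\widehat{\mathcal{F}}(v, c)$ any $F$ that is compatible with some obstacle not yet handled, and bound the number of marking steps by $g(k, i, j)$ via a Kovari-Sos-Turan-style double-counting argument, showing that too many marked sets would force $i$ vertices with a common neighborhood of size $\geq j$, contradicting $K_{i,j}$-freeness. A delicate point is that this pruning must compose correctly up the DP levels, so the representation property at level $c$ has to anticipate the disjointness constraints imposed by future gluings at levels $c+1, \ldots, k$. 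Granting this lemma, the overall running time is $f(k) \cdot g(k,i,j) \cdot n^{\cO(1)}$, which is \FPT\ in $k$ for any fixed $i, j$, establishing \Cref{thm:KijGrundy}.
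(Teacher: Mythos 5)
Your high-level blueprint matches the paper exactly: reduce via the Zaker/Gyárfás $2^{k-1}$-vertex witness (the paper's $k$-Grundy tree and $k$-Grundy witness), apply color-coding with $k$ colors and derandomize via universal families, and run a bottom-up DP over color classes that keeps, for every vertex $v$ in color class $c$, a bounded-size family of candidate $c$-subtrees rooted at $v$, trimmed via a new representative-family-style lemma whose existence proof exploits $K_{i,j}$-freeness through a Kővári--Sós--Turán-type counting argument. The paper's Lemma~\ref{lem:grundy-rep} is precisely the lemma you anticipate, and its proof indeed extracts a $K_{i,i}$ when too many ``heavy'' vertices arise, which is your KST step.

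However, there is one genuine gap and one important missing idea in how you formulate the representative condition. First, the condition you state --- ``for every obstacle set $S$ of size at most $2^{k-1}$, if some $F$ has $F \cap S \subseteq \{v\}$ then some $\widehat{F}$ does too'' --- is a pure \emph{disjointness} condition, as in standard (matroid) representative families. But for Grundy witnesses, vertices in the same color class (same tree label) need not merely be distinct from a future extension $B$: they must be pairwise \emph{non-adjacent} to $B$'s vertices of the same color. After color-coding, cross-color collisions are impossible, so the binding constraint is graph-independence within each $X_z$, not disjointness. Two disjoint sets $F$ and $B$ whose color-$z$ parts contain an edge cannot be merged into a valid witness, and your condition would incorrectly declare $\widehat{F}$ a usable replacement. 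This is exactly why the paper cannot invoke matroid representative families and instead defines ``$A$ fits $B$'' to mean $A \cup B$ is $\chi$-independent, building the whole $\wvec{q}$-Grundy representation (Definition~\ref{def:vec-rep}) around that relation. Second, you flag the composability problem but do not resolve it: a scalar ``size at most $2^{k-1}$'' budget is too coarse, since later gluings consume vertices unevenly across color classes. The paper's remedy is to index representatives by a \emph{vector} $\wvec{q} = (q_1,\dots,q_k)$ recording the per-color budget of the not-yet-chosen remainder, and to require representation simultaneously for all dominated vectors; Observations~\ref{obs:transitive-vec-rep}, \ref{obs:transitive-vec-rep-one-vertex} and Lemma~\ref{lem:rep-compose} then make the $\star$-composition go through. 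Without the vector indexing and the $\chi$-independence fitting relation, the DP does not soundly propagate, so as written the proposal does not yield a correct algorithm; with those two changes it becomes the paper's argument.
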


For our algorithm, we use a reinterpretation of the problem which is based on the existence of a {\em small witness}. 
Gy{\'{a}}rf{\'{a}}s et al.~\cite{DBLP:journals/siamdm/GyarfasKL99}, and Zaker~\cite{zaker2006results} independently showed that a given instance $(G,k)$ of \Grundycol\ is a yes-instance if and only if there is a vertex subset $W$ of size at most $2^{k-1}$, such that $(G[W],k)$ is also a yes-instance of the problem. The existence of this small induced subgraph directly yields an XP algorithm for the problem~\cite{zaker2006results}. Using characterizations of ~\cite{DBLP:journals/siamdm/GyarfasKL99,zaker2006results} and basic Grundy coloring properties, we can reduce \Grundycol to finding a homomorphic image, satisfying some independence constraints, of some {\em specific labeled trees}. Let the pair
$(T,\ell)$ denote a rooted tree $T$ together with a labeling function $\ell:V(T)\rightarrow [k]$. Given $(T,\ell)$ and a graph $G$, a function $\omega: V(T) \rightarrow V(G)$ is a {\em labeled homomorphism} if: i) for each $\{u,v\}\in E(T)$, we have $\{\omega(u),\omega(v)\}\in E(G)$, and ii) for $u,v\in V(T)$, if $\ell(u)\neq \ell(v)$, then $\omega(u)\neq\omega(v)$. 
In particular, we will reduce the problem to the following.   

\smallskip

\defparproblem{\sc Constrained Label Tree Homomorhism (CLTH)}{A host graph $G$ and a labeled tree pattern $(T,\ell)$}{$|V(T)|$}{Does there exists a labelled homomorphism $\omega: V(T) \rightarrow V(G)$ such that for any $z \in[k]$, $W_i = \{\omega(t) \mid t\in V(T) \mbox{ and $\ell(t)=i$}\}$ is an independent set in $G$?}

\medskip

Now our goal is to identify $\omega$ (and thus the witness set $W$). The first step of our algorithm will be to use the color-coding technique of  
Alon-Yuster-Zwick.~\cite{DBLP:journals/jacm/AlonYZ95} to ensure that the labeling requirement of the graph homomorphism $\omega$ is satisfied. To this end, we randomly color the vertices of $G$ using $k$ colors, where we would like the random coloring to ensure that for each $z \in [k]$, all the vertices in $W_z$ are assigned the color $z$. Let $X_1,X_2,\cdots, X_k$ be the color classes in a coloring that achieves the above property. Our objective will be to find $\omega$ such that vertices of $T$ that are labeled $z\in[k]$ are assigned to vertices in $X_z$. 

Our next challenge is to find a homomorphism that additionally satisfies the independence condition. That is, vertices of the same label in $T$ are assigned to an independent set in $G$. Note that the number of potential $\omega$s that satisfy our requirements can be very huge; however, we will be able to carefully exploit $K_{i,j}$-freeness to design a dynamic programming-based algorithm to identify one such $\omega$ (and thus the set $W$). Our approach here is inspired by dynamic programming in the design of FPT algorithms based on computations of ``representative sets''~\cite{DBLP:journals/tcs/Marx09,DBLP:journals/jacm/FominLPS16}. However, this inspiration ends here, as to apply known methods we need to have an underlying family of sets that form a matroid. Unfortunately, we do not have any matroid structure to apply the known technique. Here, we exploit the fact that we have a specific labeled tree $(T,\ell)$ and a $K_{i,j}$-free graph. Next, we define the specific trees that we will be interested in [refer Figure \ref{fig:GrundySet-tree}(ii)].

\begin{figure}[t]
\centering
\includegraphics[scale=0.8]{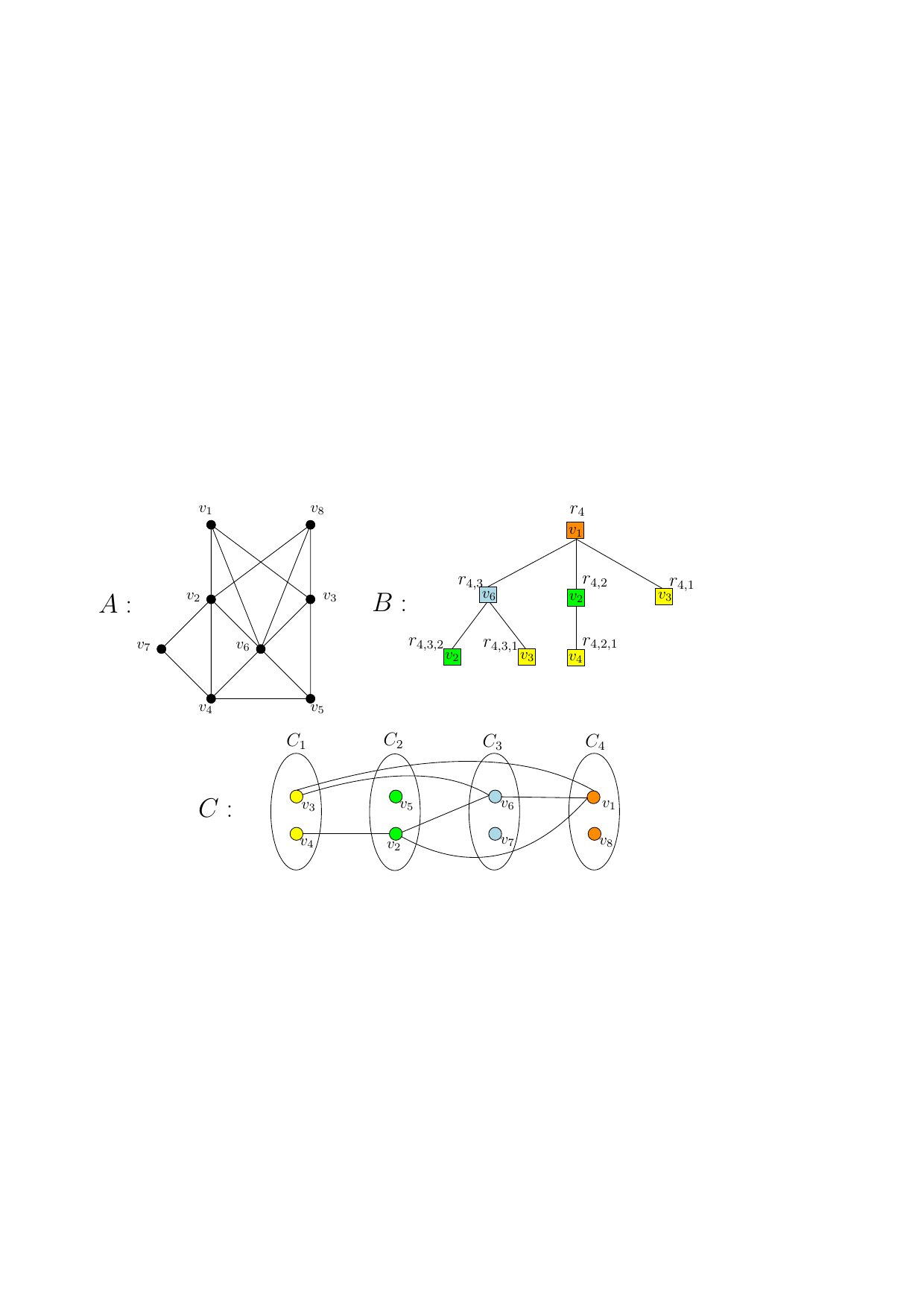}
\caption{Illustration of a labelled homomorphism $\omega : V (T_4) \rightarrow V (G)$, where the graph is
shown in part $(i)$, $T_4$ with $V (T_4) = \{r_4, r_{4,3}, r_{4,2}, r_{4,1}, r_{4,3,2}, r_{4,3,1}, r_{4,2,1}\}$ is shown in part $(ii)$,
and a relation to Grundy coloring is illustrated in part $(iii)$. Here, $\omega(r_4) = v_1, \omega(r_{4,3}) = v_6,
\omega(r_{4,2}) = v_2, \omega(r_{4,1}) = v_3, \omega(r_{4,3,2}) = v_2, \omega(r_{4,3,1}) = v_3$, and $\omega(r_{4,2,1}) = v_4$.}
\label{fig:GrundySet-tree}
\end{figure}

\begin{definition}\label{def:grundy-tree-new}{\rm
For each $k \in \mathbb{N}\setminus \{0\}$, we (recursively) define a pair $(T_k, \ell_k: V(T_k) \rightarrow [k])$, called a {\em $k$-Grundy tree}, where $T_k$ is a tree and $\ell_k$ is a {\em labeling} of $V(T_k)$, as follows\longversion{ (see Fig.~\ref{fig:Grundy-tree})}:
\begin{enumerate}
\item $T_1 = (\{r_1\}, \emptyset)$ is a tree with exactly one vertex $r_1$ (which is also its root), and $\ell_1(r_1) = 1$.

\item Consider any $k \geq 2$, we (recursively) obtain $T_{k}$ as follows. For each $z \in [k-1]$, let $(T_z,\ell_z)$ be the $z$-Grundy tree with root $r_z$. We assume that for distinct $z,z' \in [k-1]$, $T_z$ and $T_{z'}$ have no vertex in common, which we can ensure by renaming the vertices.\footnote{For the sake of notational simplicity we will not explicitly write the renaming of vertices used to ensure pairwise vertex disjointness of the trees. This convention will be followed in the relevant section.} We set $V(T_k) = \big(\cup_{z \in [k-1]}V(T_{z})\big) \cup \{r_k\}$ and $A(T_k) = \big(\cup_{z \in [k-1]}A(T_{z})\big) \cup \{(r_k, r_{z}) \mid z \in [k-1]\}$. We set $\ell_k(r_k) = k$, and for each $z \in [k-1]$ and $t \in V(T_z)$, we set $\ell_k(t) = \ell_z(t)$.
\end{enumerate}
For $v \in V(T_k)$, $\ell_k(v)$ is the {\em label} of $v$ in $(T_k, \ell_k)$, and the elements in $[k]$ are {\em labels} of $T_k$.
}\end{definition}

Observe that the label of a vertex $t\in V(T)$ is the depth of the subtree rooted at $t$ (the depth of a leaf is $1$). In particular, the leaves are assigned the label $1$, and when they are deleted, we get vertices with the label $2$ as leaves, and so on.  This allows us to perform a bottom-up dynamic programming over $T_k$. Roughly speaking, for each $z\in [k]$, we solve the special labelled tree homomorphism from $\omega_z: V(T_z) \rightarrow X_1\cup X_2 \ldots, X_z$, where the root of $T_z$ is mapped to a fixed vertex $v\in X_z$ as follows: instead of having all potential choices for $\omega_z$ (or $W_z =\{\omega_z(t) \mid t\in V(T_z)\}$), we find enough representatives, that will allow us to replace $W_z$ by something that we have stored. It is a priori not clear that such representative sets of small size exist and furthermore, even if they exist, how to find them. The existence and computation of small representative sets in this setting is our main technical contribution to \Grundycol.



We heavily exploit the $K_{i,j}$-freeness in our ``representative set'' computation. Very roughly stating, while we have computed required representatives for $W_z$, and wish to build such a representatives for $W_{z+1}$, by exploiting $K_{i,j}$-freeness, we either find a small hitting set or a large sub-family of pair-wise disjoint sets. In the former case, we can split the family and focus on the subfamily containing a particular vertex from the hitting set and obtain a ``representative'' for it (and then take the union over such families). In the latter case, we show that we are very close to satisfying the required property, except for the sets containing vertices from an appropriately constructed small set $S$ of vertices. The construction of this {\em small} set $S$ is crucially based on the $K_{i,j}$-freeness of the input graph. Once we have the set $S$, we can focus on sets containing a vertex from it and compute ``representatives'' for them.

Again, using standard hash functions, we can obtain a deterministic FPT algorithm for the problem by derandomizing the color coding based step~\cite{DBLP:journals/jacm/AlonYZ95,naor1995splitters}.

\section{Preliminaries}

\noindent{\em Generic Notations.} We denote the set of natural numbers by $\mathbb{N}$. For $n\in \mathbb{N}$, $[n]$ denotes the set $\{1,2,\cdots,n\}$. For a function $f: X \rightarrow Y$ and $y\in Y$, $f^{-1}(y) := \{x\in X \mid f(x) = y\}$. 

For standard graph notations not explicitly stated here, we refer to the textbook of~Diestel~\cite{diestel2005graph}. For a graph $G$, we denote its vertex and edge set by $V(G)$ and $E(G)$, respectively. 
 The neighborhood of a vertex $v$ in a graph $G$ is the set of vertices that are adjacent to $v$ in $G$, and we denote it by $N_G(v)$. The degree of a vertex $v$ is the size of its neighborhood in $G$, and we denote it by $d_G(v)$. For a set of vertices $S \subseteq V (G)$, we define $N_G (S) = (\cup_{v\in S} N(v)) \setminus S$. When the graph is clear from the context, we drop the subscript $G$ from the above notations. For $X \subseteq V(G)$, the induced subgraph of $G$ on $X$, denoted by $G[X]$, is the graph with vertex set $X$ and edge set $\{\{u,v\}  \mid u,v \in X\  \&\ \{u,v\}\in E(G)\}$. Also, $G[V(G)\setminus X]$ is denoted by $G-X$. For $v \in V (G)$, we use
    $G - v$ to denote $G - \{v\}$ for ease of notation. For an edge subset $F\subseteq E(G)$, $G-F$ is the graph with vertex set $V(G)$ and edge set $E(G)\setminus F$. A bipartite graph $G=(A \uplus B,E)$ is called a \emph{biclique} if every vertex in $A$ is adjacent to every vertex in $B$. We assume that $A$ and $B$ are both non-empty sets. 
    For $d\in \mathbb{N}$, a graph is {\em $d$-degenerate} if each of its subgraphs has a vertex of degree at most $d$. 
    For terminologies related to parameterized complexity, we refer to the textbook of Cygan et al.~\cite{cygan2015parameterized}.
    
%

\if 0
\noindent{\em Universal Sets \& Their Computations.} Panolan \cite{fahad2015dynamic} defined an $(n,p,q)$-universal set that is a generalization of an $(n,k)$-universal set given by Naor et al. \cite{naor1995splitters}. 
For $n,p,q \in \mathbb{N}$, an {\em $(n,p, q)$-universal set} is a set of vectors $V \subseteq [q]^n$ such that for any index set $S\subseteq [n]$ of size at most $p$, the set of $p$-dimensional vectors $V|_S = \{\wvec{v}|_S \mid \wvec{v}\in V\}$, contains all the $q^p$ vectors of dimension $p$ with entries from $[q]$.  

\begin{proposition}[\cite{fahad2015dynamic}]\label{prop:universal-set}
Given positive integers $n,p,q$, there is an algorithm
that constructs an $(n, p, q)$-universal set of cardinality $q^p \cdot p^{\OO(\log p)} \cdot\log^2 n$ in time bounded by $q^p \cdot p^{\OO(\log p)}\cdot n\log^2 n$.
\end{proposition}

The next observation immediately follows from the above result.  

\begin{observation}\label{obs:derandom}
Consider numbers $n,p,q\in \mathbb{N}$, where $p,q \leq n$. We can construct a family $\C{Q}$ of functions from $[n]$ to $[q]$ with at most $q^p \cdot p^{\OO(\log p)}\cdot\log^2 n$ functions in time bounded by $q^p \cdot p^{\OO(\log p)}\cdot n\log^2 n$, such that the following holds: for any $X \subseteq [n]$ of size at most $p$ and a function $\chi: X \rightarrow [q]$, there exists a function $f \in \C{Q}$ such that $f|_{X} = \chi$.  
\end{observation} 

\fi

\section{FPT Algorithm for Partial Grundy Coloring}

Consider a graph $G$ and an integer $k\in \mathbb{N}\setminus \{0\}$.~For a (not necessarily proper) coloring $\chi: V(G)\rightarrow [k]$, for simplicity, we sometimes write $\chi$ as the ordered tuple $(\chi^{-1}(1),\chi^{-1}(2),\cdots,$ $\chi^{-1}(k))$. Recall that a proper coloring of a graph is a coloring of its vertices so that for none of its edges, the two endpoints of it are of the same color.  Also, a $k$-partial Grundy coloring of $G$ is a proper coloring $c: V(G)\rightarrow [k]$, such that for each $i\in [k]$, there is a vertex $v\in V(G)$ with: $(i)$ $c(v) =i$ and $(ii)$ for every $j \in [i-1]$, there is $u\in N_G(v)$ with $c(u) = j$.

We will begin with some definitions and results that will be useful in obtaining our main structural result (\Cref{thm:deg_reduction}) and our FPT-algorithm (\Cref{thm:pgc}). 

Note that given a $k$-partial Grundy coloring of an induced subgraph $\what{G}$ of a graph $G$, we can extend this coloring to a partial Grundy coloring of the whole graph $G$ using at least $k$ colors by greedily coloring the uncolored vertices of $G-V(\what{G})$.
The following observation will be particularly useful when we work with a small ``witness''.

\longversion{\begin{observation}}
\shortversion{\begin{observation}[$\spadesuit$]\footnote{Proofs of results marked with $\spadesuit$ can be found in the appended full version of the paper.}}
\label{prop:extension}
Given a graph $G$, an induced subgraph $\what{G}$ of $G$, and a partial Grundy coloring of $\what{G}$ using $k$ colors, we can find a partial Grundy coloring of $G$ using at least $k$ colors in linear time.
\end{observation}

\longversion{
\begin{proof}
Let $\what{c}: V(\what{G}) \rightarrow [k]$ be a partial Grundy coloring of $\what{G}$ with exactly $k$ colors. We construct a partial Grundy coloring $c: V(G) \rightarrow \mathbb{N}$ of $G$ using at least $k$ colors as follows. For each vertex $v \in V(\what{G})$, set $c(v) := \what{c}(v)$. Let $v_1,v_2, \cdots, v_{n'}$ be an arbitrarily fixed order of vertices in $V(G) \setminus V(\what{G})$. Let $G_0 =\what{G}$, and for each $p \in [n']$, $G_p = G[V(\what{G} \cup \{v_1,v_2, \cdots, v_p\})]$. We iteratively create a partial Grundy coloring $c_p$ of $G_p$ using at least $k$ colors (in increasing values of $p$) as follows. Note that $c_0=\what{c}$ is already a partial Grundy coloring of $G_0$ that uses at least $k$ colors. Consider $p \in [n'] \setminus \{0\}$, and assume that we have already computed a partial Grundy coloring $c_{p-1}: V(G_{p-1}) \rightarrow \mathbb{N}$ of $G_{p-1}$ that uses $k' \geq k$ colors. For each $z \in [k']$, let $V_z = c_{p-1}^{-1}(z)$. For each $v \in V(G_{p-1})$, we set $c_p(v) := c_{p-1}(v)$. If the vertex $v_p$ has a neighbor in each of the sets $V_1,V_2, \cdots, V_{k'}$, i.e., if for each $z \in [k']$, $N_G(v_p) \cap V_z \neq \emptyset$, then set $c_p(v_p) := k'+1$. Otherwise, let $z^* \in [k']$ be the smallest number such that $N_G(v_p) \cap V_{z^*} = \emptyset$, and set $c_p(v_p) := z^*$. Notice that by construction, $c_p$ is a partial Grundy coloring of $G_p$ using at least $k$ colors. From the above discussions, $c_{n'}$ is a partial Grundy coloring of $G=G_{n'}$ using at least $k$ colors.
 \end{proof}
}



We now define a ``witness'' for a partial Grundy coloring.

\begin{definition}\label{def:grundy-witness}{\rm
Consider a graph $G$ and an integer $k \in \mathbb{N}\setminus \{0\}$. A sequence of pairwise disjoint independent sets $(Q_1,Q_2,\ldots,Q_k)$ of $G$ is a \emph{$k$-partial Grundy witness} if the following holds. For any $i \in [k]$, there is $v \in Q_i$ such that for all $j\in [i-1]$, $Q_j\cap N_G(v)\neq \emptyset$. The vertex $v$ is called the {\em dominator} in $Q_i$. 
}\end{definition}

The next observation follows directly from the above definition.

\begin{observation}\label{obs:k-witness-supset}
Given a graph $G$ and an integer $k$, let $(Q_1,Q_2,\ldots,Q_k)$ be a $k$-partial Grundy witness.
Suppose $Y_1,Y_2,\ldots,Y_k$ are pairwise disjoint independent sets in $G$ such that $Q_i\subseteq Y_i$, for all $i\in [k]$. Then $(Y_1,Y_2,\ldots,Y_k)$ is also a $k$-partial Grundy witness of $G$.
\end{observation}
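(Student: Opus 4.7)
The plan is to show directly from the definitions that the ``dominator witnesses'' for the original sequence $(Q_1,\ldots,Q_k)$ continue to serve as dominator witnesses for the enlarged sequence $(Y_1,\ldots,Y_k)$. Since all the structural requirements of a $k$-partial Grundy witness (independence, pairwise disjointness, and existence of a dominator in each color class) are monotone under the replacement $Q_i \subseteq Y_i$, the verification should be a routine unfolding of \Cref{def:grundy-witness}.

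Concretely, I would proceed as follows. First, observe that two of the three requirements are immediate from the hypothesis: the $Y_i$'s are pairwise disjoint independent sets by assumption, so nothing needs to be checked for those conditions. Second, fix any $i \in [k]$ and let $v_i \in Q_i$ be the dominator guaranteed by the assumption that $(Q_1,\ldots,Q_k)$ is a $k$-partial Grundy witness; thus for every $j \in [i-1]$ there is some $u_j \in Q_j \cap N_G(v_i)$. Since $Q_i \subseteq Y_i$, we have $v_i \in Y_i$, and since $Q_j \subseteq Y_j$ for every $j \in [i-1]$, the vertex $u_j$ lies in $Y_j \cap N_G(v_i)$. Hence $Y_j \cap N_G(v_i) \neq \emptyset$ for all $j < i$, which shows that $v_i$ is a dominator for $Y_i$ in the enlarged sequence.

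Because $i$ was arbitrary, every class $Y_i$ contains a dominator, and therefore $(Y_1,\ldots,Y_k)$ satisfies all three conditions of \Cref{def:grundy-witness}. No step here is a real obstacle; the only point worth being careful about is confirming that the pairwise disjointness and independence of the $Y_i$'s are assumed outright (which they are) rather than needing to be deduced from those of the $Q_i$'s, since enlarging the sets does not automatically preserve either property.
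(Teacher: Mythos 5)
Your argument is correct and is exactly what the paper intends: the paper gives no proof, merely remarking that the observation ``follows directly from the above definition,'' and your unfolding of \Cref{def:grundy-witness} (disjointness and independence are assumed outright for the $Y_i$, and each dominator $v_i \in Q_i \subseteq Y_i$ retains neighbors $u_j \in Q_j \subseteq Y_j$) is the routine verification behind that remark.
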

A $k$-partial Grundy witness $(X_1,X_2,\ldots,X_k)$ is {\em small} if for each $i\in [k]$, $|X_i|\leq k-i+1$. Next, we prove the existence of a small $k$-partial Grundy witness. This result is the same as the one obtained by Effantin et al.~\cite{effantin2016characterization}; however, it is stated slightly differently for convenience.

\longversion{\begin{observation}}
\shortversion{\begin{observation}[$\spadesuit$]}
\label{obs:k-witness}
Given a graph $G$ and an integer $k$, where $G$ has a $k$-partial Grundy witness. Then, there exists a $k$-partial Grundy witness $(X_1,X_2,\ldots,X_k)$ such that for each $i\in [k]$, $X_i\subseteq Q_i$ and $|X_i|\leq k-i+1$.
\end{observation}
\longversion{
\begin{proof}
Let $(Q_1,Q_2,\ldots,Q_k)$ be a $k$-partial Grundy witness in $G$.
For each $i\in [k]$, arbitrarily choose a vertex $u_i \in Q_i$ such that $N_G(u_i)\cap Q_j \neq \emptyset$ for all $j\in [i-1]$. We mark $u_i$ and one vertex from $N_G(u_i)\cap Q_j$ for all $j\in [i-1]$. Let $M$ be the set of all marked vertices, and for $i \in [k]$, let $X_i=M\cap Q_i$. Then, $(X_1,X_2,\ldots,X_k)$ is the required $k$-partial Grundy witness.
\end{proof}
}

The remainder of this section is organized as follows. In Section~\ref{ingr:one} we prove our key structural result (\Cref{thm:deg_reduction}), and then obtain our algorithm in Section~\ref{sec:main-pgc}. (Readers who may want to read the algorithm directly, may skip Section~\ref{ingr:one}.)

\if 0
The proof of this theorem is based on a carefully constructed greedy partial Grundy coloring. If this coloring  uses at least $k$ colors, we can conclude that the given instance is a yes-instance. Otherwise, by exploiting the failure of the greedy strategy, we construct a small number of bicliques with the desired property. Now to extend the result for general graphs, we start with a simple greedy strategy for obtaining a partial Grundy coloring. If we fail, then for every pair of color classes, we appropriately invoke the result obtained for bipartite graphs. \todo{Remove?}
\fi


\if 0
We will next obtain a (faster) FPT algorithm for the problem on $d$-degenerate graphs.\footnote{\pgc~is known to be in FPT for $K_{t,t}$-free graphs that includes $t$-degenerate graphs \cite{aboulker2022grundy}. }

\begin{theorem}[\Cref{ingr:two}]
\label{thm:degeneracydet}
There is an algorithm for \pgc\ that runs in time $(k+dk)^{\OO(k^2)} n^{\OO(1)}$, where $d$ is the degeneracy of the input graph.
\end{theorem}

Later, we will see how our approach for the proof the above theorem, coupled with Theorem~\ref{thm:deg_reduction}, can be translated to an FPT algorithm for \pgc\ on general graphs in Section~\ref{sec:main-pgc}.

\fi

\subsection{Degree Reduction: Proof of Theorem~\ref{thm:deg_reduction}}\label{ingr:one}

%

One of the key ingredients of our algorithm is the following structural theorem.

\StructResult*

\smallskip

The objective of this section is to prove~\Cref{thm:deg_reduction}. The proof of this theorem is based on the following lemma proved for bipartite graphs.

\begin{lemma}
\label{lem:twoside}
There is a polynomial-time algorithm that, given a bipartite graph $G=(A\uplus B,E)$ and a positive integer $k$, does one of the following.
\begin{itemize}
\item[(i)] Correctly conclude that the partial Grundy coloring of $G$ is at least $k$.
\item[(ii)] Outputs at most $4k-4$ bicliques $A_1\cdots,A_{\ell}$ in $G$ such that for any $v \in V(G)$, degree of $v$ in $G-F$ is at most $k^2$, where $F$ is the union of the edges in the above bicliques.
\end{itemize}
\end{lemma}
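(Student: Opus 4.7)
The approach follows the greedy top-down paradigm sketched in the introduction for the two-color-class case, applied symmetrically with $A$ (respectively $B$) playing the role of the ``dominator side''. I would describe only the $A$-pass, the $B$-pass being analogous. Order $A=(v_1,\dots,v_{|A|})$ by non-increasing degree in $G$ and attempt to build color classes $C_k,C_{k-1},\dots,C_1$ with $|C_j|=k-j+1$. Each $C_j$ will consist of a newly picked \emph{dominator} $u_j \in A$ together with, for each prior dominator $u_i$ with $i>j$, a fresh \emph{witness} $w_{i\to j}\in N(u_i)\cap B$ establishing that $u_i$ has a neighbor in $C_j$. Initialize $C_k:=\{v_1\}$ with $u_k:=v_1$. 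To produce $C_j$, first pick fresh witnesses $w_{i\to j}$ for each $i>j$ from the yet-unused portion of $N(u_i)\cap B$, and then scan the ordering for the smallest-indexed unused $v_m$ non-adjacent to every $w_{i\to j}$; set $u_j:=v_m$. Bipartiteness together with the non-adjacency condition guarantees that $C_j$ is independent.

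If the construction succeeds through $C_1$, the tuple $(C_1,\dots,C_k)$ is a $k$-partial Grundy witness by design, and~\Cref{prop:extension} then yields a partial Grundy coloring of $G$ using at least $k$ colors, establishing conclusion~(i). Otherwise, the procedure stalls at some level $j$: every remaining $v_m$ is adjacent to some candidate witness. By pigeonhole this identifies a prior dominator $u_i$ responsible for the failure, and the whole set $B_{i,j}\subseteq A$ of skipped candidates turns out to be completely adjacent to the remaining unused part of $N(u_i)\cap B$. This is an induced biclique. A careful charging argument shows that each prior dominator $u_i$ produces at most two bicliques over the course of the $A$-pass, so at most $2(k-1)$ bicliques arise there; combined with the symmetric $B$-pass, the total is $4k-4$.

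For the degree bound, a vertex $v \in B_{i,j}$ satisfies $d_G(v) \leq d_G(v_1) = |N(v_1)|$ by the degree ordering, so in particular all of $v$'s neighbors lie in $N(v_1)$. Deleting the biclique edges removes all of $v$'s neighbors in $N(u_i) \cap B$ apart from the at most $\sum_{j=1}^k (k-j+1) \leq k^2$ vertices already used as witnesses or dominators in the construction, so the residual degree of $v$ is at most $k^2$. The few $A$-vertices that were ever promoted to dominators can be handled directly (only $k-1$ of them), while the $B$-vertex degrees are analogously controlled by the $B$-pass.

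The main obstacle I expect is pinning down the charging precisely: showing that failures across the $k-1$ levels consolidate into only $O(k)$ bicliques (rather than $O(k^2)$), and that their combined edge removal erases every ``excess'' neighbor of every vertex that ends up in some $B_{i,j}$. This rests on the invariant that the total number of $B$-vertices ever used during the entire $A$-pass is bounded by $O(k^2)$, and on carefully tracking which prior dominator is charged for each biclique so that no dominator is charged more than twice.
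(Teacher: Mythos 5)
Your proposal is in the right spirit — greedy construction of color classes ordered by degree, with bicliques extracted from failures — but the specific algorithm you describe has a design flaw that prevents the biclique structure from emerging, and you have correctly self-identified that the charging is not pinned down. Let me point out the concrete gaps.

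\textbf{Witness-first vs.\ dominator-first.} You commit to a particular set of witnesses $w_{i\to j}$ before scanning for the dominator $u_j$, and declare a failure when no unused $v_m$ is non-adjacent to all committed witnesses. But adjacency of a skipped $v_m$ to \emph{one} witness $w_{i\to j}$ does not say anything about $v_m$'s adjacency to the rest of $N(u_i)\cap B$; a different choice of witness might have accommodated $v_m$. Consequently the asserted conclusion ``the whole set $B_{i,j}$ of skipped candidates turns out to be completely adjacent to the remaining unused part of $N(u_i)\cap B$'' does not follow, and no biclique falls out. The fix (and what the paper does) is to reverse the order: pick a \emph{candidate dominator} $v_r$ first and check, for each already-fixed dominator $x_j$, whether $N(x_j)\subseteq P\cup N(v_r)$, where $P$ collects the vertices already consumed by the color classes. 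If this containment holds for some $j$, then every fresh witness in $N(x_j)\setminus P$ would be adjacent to $v_r$; so $v_r$ is adjacent to \emph{all} of $N(x_j)\setminus P$, and charging $v_r$ to the set $B_j$ yields $B_j$ complete to $N(x_j)\setminus P$ — a genuine biclique. If the containment fails for every $j$, then $v_r$ can be made the next dominator with fresh non-adjacent witnesses. This is what makes the ``skipped candidates form a biclique'' statement true; your pigeonhole step is not strong enough to recover it.

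\textbf{Degree bound.} The statement ``$d_G(v)\leq d_G(v_1)=|N(v_1)|$ by the degree ordering, so in particular all of $v$'s neighbors lie in $N(v_1)$'' is false: an upper bound on degree does not force containment of neighborhoods. The correct argument is: if $v$ was charged to $B_j$, then $v$ is adjacent to all of $N(x_j)\setminus P$; by the degree ordering $d_G(v)\leq d_G(x_j)=|N(x_j)|$; hence the number of neighbors of $v$ outside the biclique $B_j\cup(N(x_j)\setminus P)$ is at most $|N(x_j)|-|N(x_j)\setminus P|\leq |P|\leq k^2$. This is why the used-vertices set $P$ (of size $\sum_j |Q_j|\leq \sum_{j\leq k} j < k^2$) appears in the bound rather than $k^2$ coming from anywhere else.

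\textbf{Handling dominators themselves.} You note in passing that the $k-1$ promoted dominators ``can be handled directly'' but give no mechanism. The paper handles this by outputting, for each dominator $x_j$, a second biclique $S_j=G[N[x_j]]$ (a star, which is trivially a biclique since $G$ is bipartite), removing all incident edges of $x_j$. This is exactly where the factor-two in $2(k-1)$ per side, hence $4k-4$ in total, comes from — one ``$B_j$-biclique'' and one ``star biclique'' per dominator — rather than from a charging argument over failures across levels.

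\textbf{Biclique count.} With the corrected dominator-first design, each candidate $v_r$ that fails is placed in exactly one $B_j$, where $j$ ranges over the at most $k-1$ dominators discovered; so there are at most $k-1$ nontrivial $B_j$-bicliques plus $k-1$ stars, for $2k-2$ per pass and $4k-4$ in total. Your worry that failures might spawn $O(k^2)$ bicliques disappears once the biclique is indexed by the dominator $x_j$ rather than by the (level, dominator) pair.
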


We first give a proof of Theorem~\ref{thm:deg_reduction} based on the above lemma. 

\begin{proof}[Proof of Theorem~\ref{thm:deg_reduction}]
Consider a graph $G$ and a positive integer $k$. First, we run the first-fit greedy algorithm for proper coloring of the graph $G$ for an arbitrarily fixed ordering $(v_1,v_2,\cdots,v_n)$ of $V(G)$.
For each $j$, let $C_j$ be the vertices colored $j$ and $k'$ be the largest integer such that $C_{k'}\neq \emptyset$. Note that $(C_1,\cdots,C_{k'})$ is a proper coloring of $G$. Also, for any $j\in [k']$ and any vertex $v$ in $C_j$, $v$ has a neighbor in $C_{j'}$ for all $j'\in [j-1]$. If $k'\geq k$, then $(C_1,\cdots,C_{k'})$ is a partial Grundy coloring of $G$ using at least $k$ colors, and thus we can correctly report it.

Next, we assume that $k'<k$. Note that all the edges in $G$ are between the color classes $C_1,\cdots,C_{k'}$. Now for every distinct $i,j\in [k']$, where $i<j$, we apply Lemma~\ref{lem:twoside} on $(H_{i,j}=(C_i\uplus C_j, E(C_i,C_j)),k)$, where $E(C_i,C_j)$ is the set of edges in $G$ between the color classes $C_i$ and $C_j$. Our algorithm will declare that $G$ has a partial Grundy coloring using at least $k$ colors if we get the output given in statement $(i)$ in any of the $\binom{k'}{2}$ applications of Lemma~\ref{lem:twoside}. Otherwise, for every distinct $i,j\in [k']$, where $i<j$, let $A_{i,j,1},\cdots, A_{i,j,\ell_{i,j}}$ be the bicliques, we get as output by the algorithm in Lemma~\ref{lem:twoside} on $(H_{i,j}$ $=(C_i\uplus C_j, E(C_i,C_j)),k)$. Note that
$\ell_{i,j}\leq 4k-4$. Now our algorithm will output the bicliques $\{A_{i,j,r}~\colon~1\leq i<j\leq k', r \in [\ell_{i,j}]\}$. As $k'< k$ for all $1\leq i<j\leq k'$, the number of bicliques we output is at most $ (4k-4)\binom{k'}{2}$, that is, at most $ 2k^3$. Since any vertex in a color class $C_r$ has neighbors in other color classes, for $r\in [k']$ and we applied Lemma~\ref{lem:twoside} for every pair of color classes, the degree of $v$ in $G-F$ is at most $k^3$ for any $v \in V(G)$, where $F$ is the union of the edges in the above bicliques.
This completes the proof of the theorem.
\end{proof}

Next, we focus on the proof of \Cref{lem:twoside}. Toward this, we give a polynomial time procedure that, given a bipartite graph $G=(L\uplus R, E)$ and a positive integer $k$, either concludes that the input graph has partial Grundy coloring using at least $k$ colors or it outputs at most $2k-2$ bicliques $A_1\cdots,A_{\ell}$ in $G$ such that for any $v \in L$, $d_{G-F}(v)\leq k^2$, where $F$ is the union of the edges in the above bicliques. That is, removal of the edges of these bicliques bounds the degree of each vertex in $L$ by $k^2$.  We get the proof of \Cref{lem:twoside} by applying this algorithm once for $L$ and then for $R$. 

\vspace{2mm}

\noindent 
\textbf{Overview of our algorithm.}
 Let $\sigma=v_1,v_2,\ldots,v_{n}$ be an ordering of the vertices in $L$ in the non-increasing order of their degree in $G$.  
 The algorithm constructs {\em specific} color classes $Q_1,Q_{2},\ldots, Q_{r}$ in this order so that $(C_1=Q_r,C_{2}=Q_{r-1},\ldots, C_{r}=Q_1)$ is an $r$-partial Grundy witness, where $|Q_j|\leq j$. Furthermore, we will construct sets $B_i$, $i\in[r]$, which will be used to construct the bicliques. Notice that if we obtain $r\geq k$, then we will be able to conclude that $G$ has a partial Grundy coloring using at least $k$ colors. 
 Let $Q_{1}=\{v_1\}$ and in our construction $v_1$ will be the dominator in $Q_1$ (and our construction needs to ensure this property; see Definition~\ref{def:grundy-witness}). 
Consider the construction of $Q_{2}$. Let $i$ be the smallest index in $\{2,3,\ldots,n\}$ for which there is a vertex $w\in N_{G}(v_1)$ such that $v_i$ is not adjacent to $w$. Then, we set $Q_{2}=\{v_i,w\}$, and designate $v_i$ as the dominator in color class $C_{r-1} = Q_{2}$. Notice that all the vertices in $B_1=\{v_2,\ldots,v_{i-1}\}$ is adjacent to all the vertices in $N_{G}(v_1)$, and hence they together form a biclique (with bipartition $B_1$ and $N_{G}(v_1)$).
This property will be extended in building each $Q_j$s and the required bicliques. 

For the construction of $Q_j$, we consider {\em unprocessed vertices} (i.e., the vertices that do not belong to the previously constructed sets, i.e., to $Q_1\cup B_1\cup \ldots \cup Q_{j-1}\cup B_{j-1}$) as follows. We would now like to choose an unprocessed vertex $v_{i'}$, so that we can make $v_{i'}$ the dominator of $Q_{j}$, and additionally, for each $j'\in [j-1]$, we can include a neighbor of the dominator from $Q_{j'}$ to the set $Q_j$. Note for us to do the above, we need to ensure that the vertices that we add to $Q_{j}$ is an independent set in $G$, and all the vertices that we want to include in the set $Q_{j}$ are outside $Q_1\cup \ldots \cup Q_{j-1}$. That is, among the unprocessed vertices, we choose the first vertex $v_{i'}$ with the following property: for each $j'\in [j-1]$, we have a neighbor $w_{j'}$ of the previously constructed dominator in $Q_{j'}$ 
such that $w_{j'}\notin Q_1\cup \ldots \cup Q_{j-1}$ and $(v_{i'},w_{j'})\notin E(G)$; we set 
$Q_j=\{v_{i'},w_1,\ldots,w_{j-1}\}$. We would like to mention that all the dominators we construct are from the bipartition $L$ and hence $\{w_1,\ldots,w_{j-1}\}\subseteq R$. This will imply that $Q_j$ is an independent set.  



Moreover, by the choice of $v_{i'}$ as the smallest vertex with the desired property, it follows that for any vertex $v_r$ that appears before $v_{i'}$ in the order $\sigma$ and $v_r\notin P=Q_1\cup \ldots \cup Q_{j-1}$, the vertex $v_{r}$ is adjacent to all the vertices in $N(x_{j'}) \setminus P$, where $x_{j'}$ is the dominator in $Q_{j'}$, for some $j'\in [j-1]$. Then we add $v_r$ to $B_{j'}$. Note that in the above process, we still maintain the biclique property, by explicitly ensuing that $B_{j'}$ and $N(x_{j'})\setminus (Q_1\cup \ldots \cup Q_{j-1}\cup Q_j)$ forms a biclique. 

\begin{algorithm}[!ht]
\DontPrintSemicolon

Initialize $B_i:=\emptyset$ and $Q_i:=\emptyset$, for each $i\in [k]$.\;
Let $\sigma=v_1,\cdots,v_n$ be an order of the vertices in $L$ in the non-increasing order of their degrees.\;
Let $x_1:=v_1$, $Q_1:=\{x_1\}$ and $q:=2$.\; \label{step3}
\While{$L \nsubseteq \bigcup_{j\in [q-1]} (Q_j\cup B_j)$}
{
Let $v_r$ be the first vertex in $\sigma$ from $L \setminus \big(\bigcup_{j\in [q-1]} (Q_j\cup B_j) \big)$.\; \label{step5}
Let $P = \cup_{j \in [q-1]} Q_{j}$.\; \label{step6}
\eIf{there exists $j\in [q-1]$ such that $N(x_j) \subseteq P \cup N(v_r) $}  
{
choose an arbitrary $j$ with this property and set $B_j:=B_j \cup \{v_r\}$ \longversion{(see Fig.~\ref{fig:algo-PGC-oneside}}). \; \label{step8}
}
{
For each $j\in [q-1]$, $N(x_j)\setminus (P\cup N(v_r))\neq \emptyset$. Then, for each $j\in [q-1]$ arbitrarily pick a vertex $w_j$ from $N(x_j)\setminus (P\cup N(v_r))$. (Notice that $w_j$ {\em maybe equal to} $w_{j'}$ for two distinct $j,j'\in [q-1]$). \; \label{step10}
Set $x_q:=v_r$\; \label{step11}
Set $Q_q:=\{x_q\}\cup \{w_1,\cdots,w_{q-1}\}$\; \label{step12}
Set $q:=q+1$\; \label{step13}
}

}
\eIf{$q\geq k+1$}{
Declare that partial Grundy coloring of $G$ is at least $k$.\label{stepyes}
}
{
For all $j\in [q-1]$, let $A_j$ be the bipartite graph induced on $B_j$ union $N(x_j)\setminus P$ and $S_j$ be the graph induced on $N[x_j]$, where $P=\bigcup_{j\in [q-1]} Q_j$.\; \label{stepno1}
Output $A_1,\cdots,A_{q-1}$ and $S_1,\cdots, S_{q-1} $ \label{stepno2}
}

\caption{Algorithm for one-sided bipartite structural result}
\label{alg:bipartitestruc}
\end{algorithm}

\noindent 
\textbf{Description of the algorithm:} First, we intialize the sets $B_i$ and $Q_i$ to be the empty set, for all $i\in [k]$ (see Algorithm~\ref{alg:bipartitestruc}). Let $\sigma=v_1,v_2,\ldots,v_n$ be an ordering of the vertex set $L$ in the non-increasing order of their degrees. Now, we want to construct the color classes $Q_1,Q_2,\ldots,Q_k$, iteratively, such that $(C_1,C_2,\ldots,C_k)=(Q_k,Q_{k-1},\ldots,Q_1)$ is a $k$-partial Grundy witness. 
At line \ref{step3}, we intialize with $Q_1:=\{v_1\}$, 
$x_1:=v_1$, and fix the index $q=2$.  Here,  $Q_1$ will be the color class $C_k$ with dominator vertex $x_1$. 
Now consider an iteration of the {\bf  while} loop. 
The algorithm checks if the set $L \setminus \big(\bigcup_{j\in [q-1]} (Q_j\cup B_j) \big)$ is non-empty and executes the {\bf while} loop.  At this point we have constructed sets $Q_1,\ldots,Q_{q-1}$ such that $(Q_{q-1},Q_{q-2},\ldots,Q_1)$ is a $(q-1)$-partial Grundy witness such that each $x_i$ is a dominator vertex in $Q_i$. 
Let $v_r$ be the first unprocessed vertex in $L$ and $P = \cup_{j \in [q-1]} Q_{j}$ by Lines \ref{step5} and \ref{step6}. Now, we check if we can construct the current color class $Q_q$ with vertex $v_r$ as a dominator vertex, and for that, we need to add a neighbor $w_j$ (which is not added to any $Q_{i'}$ before) for each already discovered dominator $x_j$ such that $w_j$ is non-adjacent to $v_r$.  
Now, if there exists some $j\in [q-1]$ such that each neighbor of $x_j$ is a neighbor of $v_r$ (see Fig.~\ref{fig:algo-PGC-oneside}(i)),  then we will not be able to construct $Q_q$ with vertex $v_r$ in it. 
In that case,  we choose such a value $j$ and add $v_r$ to $B_j$(See Line~\ref{step8}).  Here,  notice that  $v_r$ is adjacent to all the vertices $N({x_j})\setminus P$.  We will maintain this property to all the vertices added to $B_j$,  i.e.,  $B_j$ union $N({x_j})\setminus \bigcup_i Q_i$ forms a biclique. 
Now consider the case that the condition in the {\bf if} statement in Line~7 is false. 
Then, we choose a vertex $w_j\in N(x_j)\setminus (P\cup N(v_r))$ for each $j\in [q-1]$, by line \ref{step10} and set the vertex $v_r$ as the dominator for $Q_q$, that is, $Q_q:=\{x_q\}\cup \{w_1,\cdots,w_{q-1}\}$,  at Line \ref{step12}. 
Notice that $Q_q$ is an independent set because there is no edge between $x_1$ and a vertex in $\{w_1,\ldots,w_{q-1}\}$, and $\{w_1,\ldots,w_{q-1}\}$ is a subset of $R$, the right part of the bipartition of $G$. 
We repeat the iteration until one of the {\bf while} loop conditions at line 4 fails.  Next, if $q\geq k+1$, we conclude that $G$ has a partial Grundy coloring using $k$ colors by line \ref{stepyes}, because $(Q_{q-1},\ldots,Q_1)$ is a $(q-1)$-partial Grundy witness, where $q-1\geq k$.  Otherwise,  by line \ref{stepno1}, let $A_j$ be the graph induced on $B_j\cup (N(x_j)\setminus P)$ and $S_j$ be the graph induced on $N[x_j]$, for each $j\in [q-1]$.  Recall that $A_j$ is a biclique.  
It is easy to see that $S_j$ is a biclique,  because $G$ is a bipartite graph. 
At line \ref{stepno2}, the algorithm outputs the set of graphs $A_1,\cdots,A_{q-1}$ and $S_1,\cdots, S_{q-1} $.

Since the number of iterations of the {\bf while} loop is at most $n$ and each step in the algorithm takes polynomial time, the total running time of the algorithm is polynomial in the input size.
Next, we prove the correctness of the algorithm.

\begin{lemma}

Algorithm~\ref{alg:bipartitestruc} is correct.
\end{lemma}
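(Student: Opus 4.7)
The plan is to verify the two output branches of Algorithm~\ref{alg:bipartitestruc} separately, using a small set of loop invariants maintained after every iteration of the \textbf{while} loop. Concretely, I will prove by induction on the iteration count that (a) $Q_1, Q_2, \ldots, Q_{q-1}$ are pairwise disjoint independent sets of $G$ with $L \cap Q_j = \{x_j\}$ and $|Q_j| \leq j$; (b) for any $j \in [q-1]$ and any $j' \in \{j+1, \ldots, q-1\}$, the vertex $w_j$ included in $Q_{j'}$ at the moment $Q_{j'}$ was created is adjacent to $x_j$; and (c) every vertex $v \in B_j$ is adjacent to every vertex of $N(x_j) \setminus P_v$, where $P_v$ is the value of $P$ at the moment $v$ was added to $B_j$. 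Invariant (a) uses bipartiteness ($x_q \in L$, while every $w_j \in R$ because $w_j$ is a neighbor of some $x_{j'} \in L$), together with the choice of $x_q$ as the first unprocessed vertex and of the $w_j$'s outside $P$. Invariants (b) and (c) are immediate from Line~\ref{step10} and the \textbf{if}-condition at Line~7, respectively.

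For the first output branch (Line~\ref{stepyes}) we have $q \geq k+1$, and I argue that the \emph{reversed} sequence $(C_1, \ldots, C_{q-1}) := (Q_{q-1}, Q_{q-2}, \ldots, Q_1)$ is a $(q-1)$-partial Grundy witness of $G[Q_1 \cup \cdots \cup Q_{q-1}]$ in the sense of Definition~\ref{def:grundy-witness}. Independence follows from (a); for the dominator condition, declaring $x_{q-i}$ to be the dominator of $C_i$, any $i' < i$ gives $C_{i'} = Q_{q-i'}$ with $q-i' > q-i$, so (b) supplies $w_{q-i} \in Q_{q-i'}$ adjacent to $x_{q-i}$. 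Combining this with Observation~\ref{prop:extension} yields a partial Grundy coloring of $G$ using at least $q-1 \geq k$ colors, confirming the declaration.

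For the second output branch (Lines~\ref{stepno1}--\ref{stepno2}) we have $q \leq k$, so the total number of output subgraphs is $2(q-1) \leq 2k-2$. I will verify that each $S_j = G[N[x_j]]$ and each $A_j = G[B_j \cup (N(x_j) \setminus P)]$ (with $P$ now the final value $\bigcup_{j' \in [q-1]} Q_{j'}$) is an induced biclique: $S_j$ is a star because $x_j \in L$ and $N(x_j) \subseteq R$ by bipartiteness; and $A_j$ is a biclique because (c), combined with the monotonicity of $P$ (yielding $N(x_j) \setminus P \subseteq N(x_j) \setminus P_v$ for every $v \in B_j$), shows that every $v \in B_j$ is adjacent to every vertex of $N(x_j) \setminus P$, while $B_j \subseteq L$ and $N(x_j) \setminus P \subseteq R$. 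For the degree bound on $v \in L$: the termination condition gives $L \subseteq \bigcup_{j \in [q-1]}(Q_j \cup B_j)$, and since $L \cap Q_j = \{x_j\}$ by (a), every $v \in L$ either equals some $x_j$ (then all edges incident to $v$ lie in $S_j$, giving $d_{G-F}(v) = 0$) or belongs to a unique $B_j$. In the latter case, since $\sigma$ orders $L$ by non-increasing degree and $x_j$ was chosen at an earlier iteration than $v$ was added to $B_j$, we have $d_G(v) \leq d_G(x_j) = |N(x_j)|$; the biclique argument gives $|N(v) \cap (N(x_j) \setminus P)| = |N(x_j) \setminus P|$, hence $d_{G-F}(v) \leq |N(x_j)| - |N(x_j) \setminus P| = |N(x_j) \cap P| \leq |P| \leq \sum_{j=1}^{q-1} |Q_j| \leq \binom{q}{2} \leq k^2$.

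The main subtlety I foresee is the bookkeeping for the partial Grundy witness: the vertex $w_j$ added to $Q_{q'}$ is a neighbor of $x_j$ and not of $x_{q'}$, so the naturally indexed sequence $(Q_1, \ldots, Q_{q-1})$ is \emph{not} a partial Grundy witness; only the reversed order places each dominator on the correct side of the required edges. A second fine point is that invariant (c) is stated with respect to the time-stamped $P_v$ while the biclique claim for $A_j$ needs the analogous property for the \emph{final} $P$, so the monotonicity of $P$ must be invoked explicitly.
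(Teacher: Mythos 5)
Your proof is correct and follows essentially the same structure as the paper's: establish loop invariants (independence of the $Q_j$'s, the dominator--adjacency property $N(x_j)\cap Q_{j'}\neq\emptyset$ for $j<j'$, and the biclique property of the $B_j$'s, the latter needing the monotonicity of $P$), then handle the two output branches by exhibiting the reversed sequence as a $(q-1)$-partial Grundy witness (combined with Observation~\ref{prop:extension}) and by bounding $d_{G-F}(v)$ via $|N(x_j)\cap P|\leq|P|\leq k^2$. These are exactly the three statements of the paper's Claim~\ref{clm:corr} and the surrounding argument.
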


\begin{proof}

Let $q^{\star}$ be the value of $q$ at the end of the algorithm. To prove the correctness of the algorithm, first, we prove the following claim.

\longversion{
\begin{figure}
\centering
\includegraphics[scale=0.8]{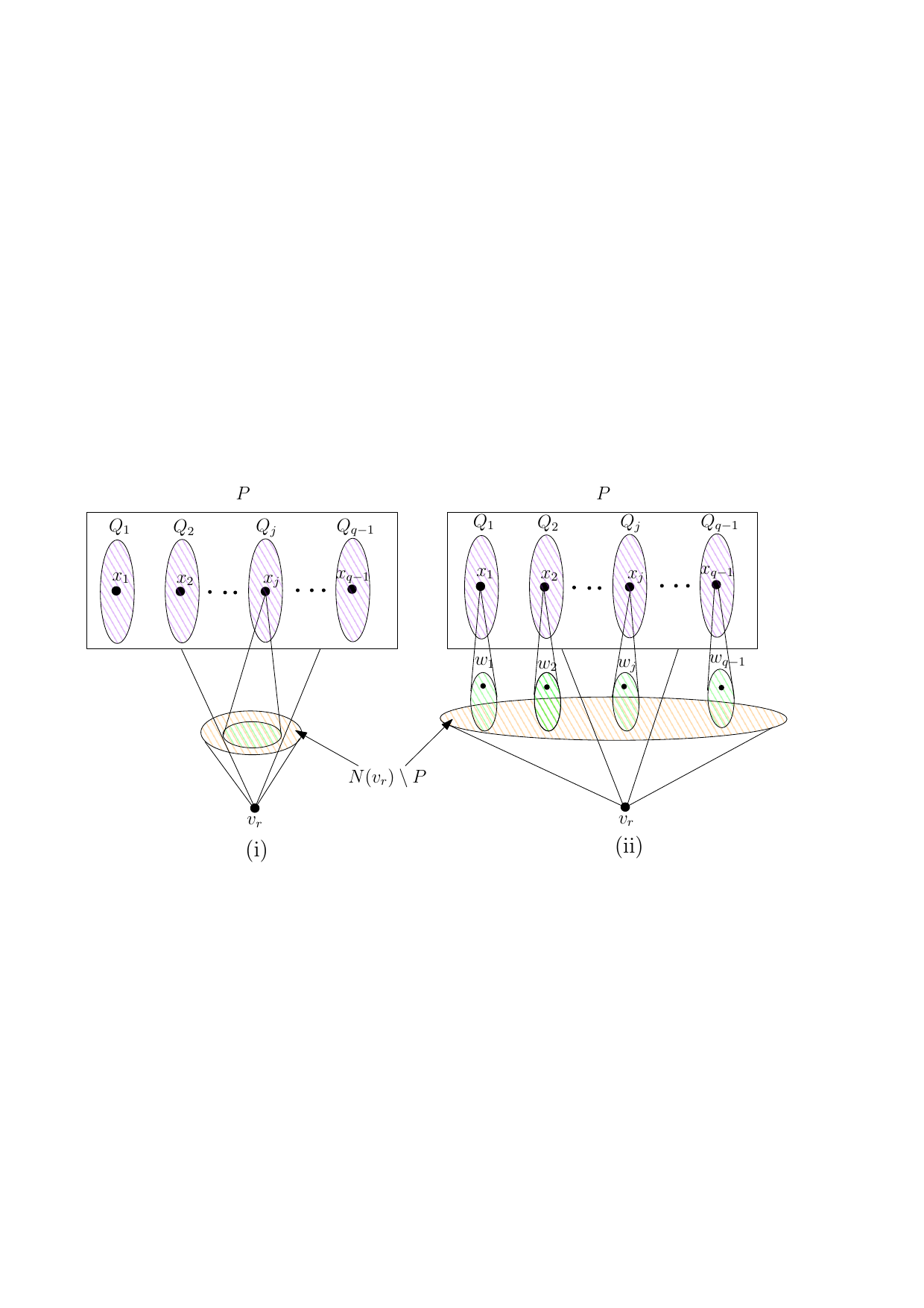}
\caption{An illustration for the cases at Line~7 (on the left) and Line \ref{step10} (on the right) of Algorithm~1. For the vertex $x_j$, green shaded region represents the $N(x_j)\setminus P$.}
\label{fig:algo-PGC-oneside}
\end{figure}
}

\shortversion{\begin{claim}[$\spadesuit$]}
\longversion{\begin{claim}}
\label{clm:corr}
The following statements are true.
\begin{itemize}
\item[(i)] For each $i\in [q^{\star}-1]$, $Q_i$ is an independent set and $Q_i\neq \emptyset$
\item[(ii)] For each $i\in [q^{\star}-1]$ and $j\in [i-1]$, $N(x_j)\cap Q_i \neq \emptyset$.
\item[(iii)] For each $i\in [q^{\star}-1]$ and $v\in B_i$, $v$ is adjacent to all the vertices in $N(x_i) \setminus (\bigcup_{j\in [q^{\star}-1]} Q_j)$ and $d_G(v)\leq d_G(x_i)$.
\end{itemize}
\end{claim}
\longversion{
\begin{proof}
We prove the statements by induction on $i$. The base case is when $i=1$. Clearly, $Q_1=\{x_1\}$ and hence statement (i) is true. Statement (ii) is vacuously true. Next, we prove statement (iii). Notice that in any iteration of the {\bf while} loop, in Step~\ref{step8}, we may add a vertex $v_r$ to $B_1$. If this happens, then we know that $N(x_1)\subseteq P\cup N(v_r)$, where $P$ is a subset of $\bigcup_{j\in [q^{\star}-1]} Q_j$. That is, all the vertices in $N(x_1)\setminus P$ are adjacent to $v_r$. This implies that $v_r$ is adjacent to all the vertices in $N(x_1) \setminus (\bigcup_{j\in [q^{\star}-1]} Q_j)$. Since $x_1$ is the vertex with the maximum degree, we have that
$d_G(v_r)\leq d(x_1)$.

Next, for the induction step, we assume that the induction hypothesis is true for $i-1$, and we will prove that the hypothesis is true for $i$. Consider the iteration $h^{\star}$ of the {\bf while} loop when $q=i$ and Steps~\ref{step10}-\ref{step13} is executed. Let $v_r$ be the vertex mentioned in Step~\ref{step5} during that iteration. The vertices $w_1,\cdots,w_{q-1}$ belongs to $R$ (the right side of the bipartition of $G$) and hence $\{w_1,\cdots,w_{q-1}\}$ is an independent set. Also, notice that each $w_j$ does not belong to $N(v_r)$ (See Step~\ref{step10}). Hence, $Q_q:=\{v_r\}\cup \{w_1,\cdots,w_{q-1}\}$ is an independent set and $Q_q\neq \emptyset$. Thus, we proved statement (i).
Again, notice that $w_j \in N(x_j)$ for all $j\in [q-1]$ (See Step~\ref{step10}). Thus, statement (ii) follows.
Next, we prove statement (iii), which is similar to the proof of it in the base case.
Notice that in any iteration of the while loop (after the iteration $h^{\star}$), in Step~\ref{step8}, we may add a vertex $v_{r'}$ to $B_i$. If this happens, then we know that $N(x_i)\subseteq P\cup N(v_{r'})$, where $P$ is a subset of $\bigcup_{j\in [q-1]} Q_j$. That is, all the vertices in $N(x_i)\setminus P$ are adjacent to $v_{r'}$. This implies that $v_{r'}$ is adjacent to all the vertices in $N(x_i) \setminus (\bigcup_{j\in [q-1]} Q_j)$. Since $x_i \in Q_i$, considered in iteration $h^{\star}$, $d_G(v_{r'})\leq d_G(x_i)$ (See Step~\ref{step5}).
This completes the proof of the claim.
\end{proof}
}
\longversion{
\begin{figure}[h!]
    \centering
    \includegraphics{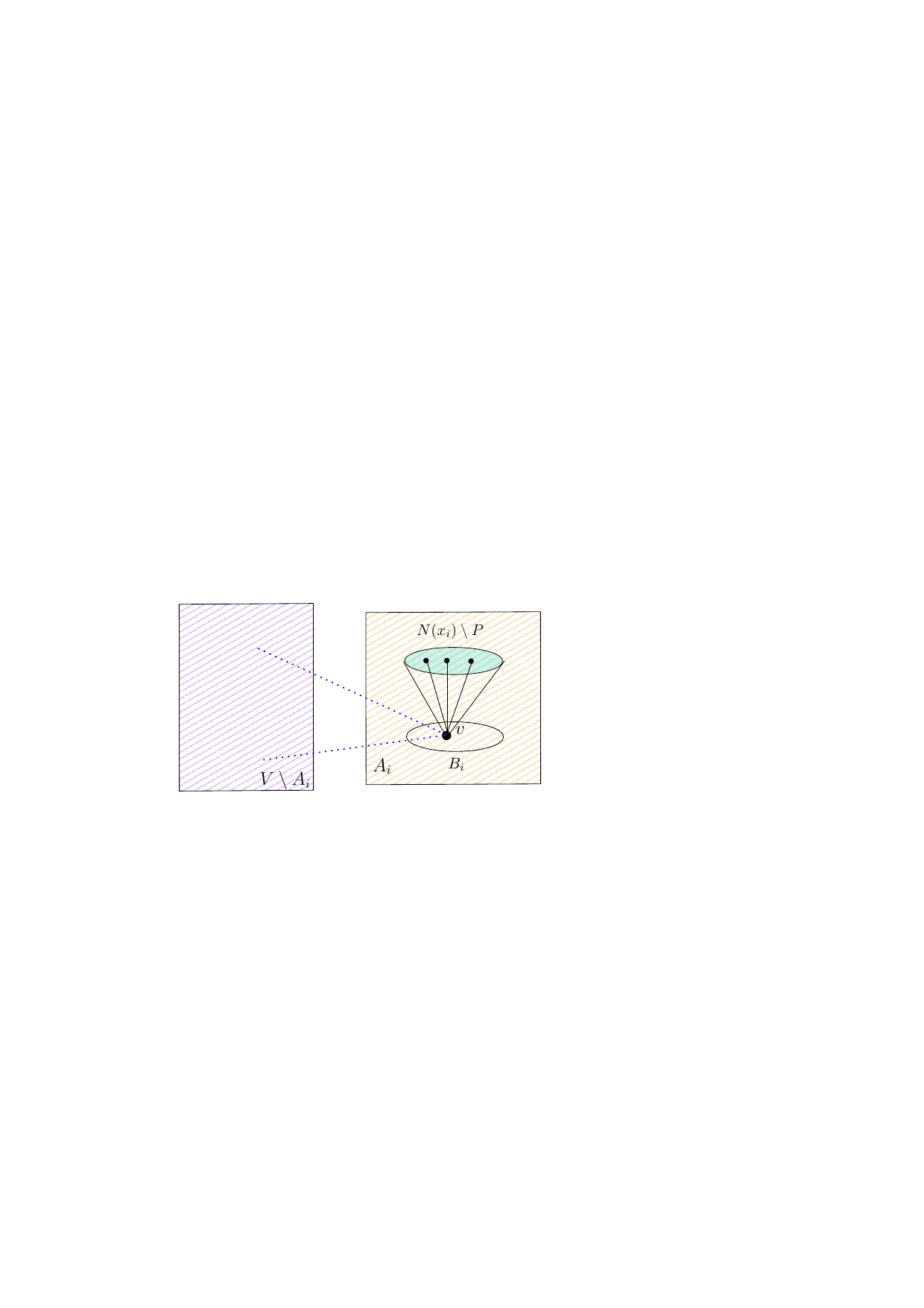}
    \caption{Here the vertex $v\in B_i$ in the biclique $A_i$ (right side). The number of neighbors of $v$ outside $A_i$ (blue edges) cannot be more than $|P|$ as $d_G(v)\leq d_G(x_i)=|N_G(x_i)|$ and $v$ has $|N(x)\setminus P|$ neighbours in the biclique $A_i$. }
    \label{fig:deg-bound}
\end{figure}}

Now suppose $q^{\star}\geq k+1$. Then, by Statements (i) and (ii) in Claim~\ref{clm:corr}, we get that $(Q_k,\cdots, Q_1)$ is a partial Grundy coloring of the graph induced on $\bigcup_{j\in [q^{\star}-1]} Q_j$. Thus, if the algorithm executes Step~\ref{stepyes}, then it is correct because of~\Cref{prop:extension}.

Now suppose $q^{\star}\leq k$. Then the algorithm executes Step~\ref{stepno2} and outputs the sets $A_1\cdots, A_{q^{\star}-1}$ and $S_1,\cdots,S_{q^{\star}-1}$. 
Statement (iii) in Claim~\ref{clm:corr} implies that each $A_j$ is a biclique in $G$, where $j\in [q^{\star}-1]$. Also, note that $S_j$ is a biclique in $G$ as it is induced on the set $N[x_j]$, for each $j\in [q^{\star}-1]$. Let $F$ be the union of the edges in the
bicliques $A_1,\cdots,A_{q^{\star}-1}$ and $S_1,\cdots,S_{q^{\star}-1}$.
Next, we prove that for any $v \in L $, $d_{G-F}(v)\leq k^2$. 
Let $X=\{x_1,\ldots,x_{q^{\star}-1}\}$.
It is easy to see that $|Q_j|\leq k$ and $L \cap Q_j=\{x_j\} \subseteq X$, for all $j\in [q^{\star}-1]$ (See Steps~\ref{step10}-\ref{step12}). Hence, $|\bigcup_{j\in [q^{\star}]} Q_j| \leq k^2$. Let $v$ be an arbitrary vertex in $L$. Note that if $v\in X$, the degree of $v$ in $G-F$ is zero (by the definition of $S_j$). Next, suppose that $v\in L\setminus X$.
Since $L \subseteq \bigcup_{j\in [q^{\star}-1]} (Q_j\cup B_j)$ (this is the condition for {\bf while} loop to exit) and $L \cap Q_{j} \subseteq X$ for all $j\in [q^{\star}-1]$, $v$ belongs to $B_i$ for some $i\in [q^{\star}-1]$. By statement (iii) in Claim~\ref{clm:corr}, $v$ is adjacent to all the vertices in $N(x_i) \setminus (\bigcup_{j\in [q^{\star}-1]} Q_j)$ and $d_G(v)\leq d_G(x_i)$. 
Recall that the biclique $A_i$ is the graph with bipartition $B_i$ and $N(x_i) \setminus (\bigcup_{j\in [q^{\star}-1]} Q_j)$. Moreover $v\in B_i$ and $d_G(v)\leq d_G(x_i)=|N_G(x_i)|$. 
This implies that 
the number of neighbors of $v$ that does not belong to 
$A_i$ is at most $|\bigcup_{j\in [q^{\star}-1]} Q_j|$, which is upper bounded by 
$k^2$. 
Therefore, the degree of $v$ in $G-F$ is at most $k^2$. See \Cref{fig:deg-bound} for an illustration. This concludes the proof.
\end{proof}

Now we are ready to give the proof of Lemma~\ref{lem:twoside} by applying Algorithm~\ref{alg:bipartitestruc} twice.


\longversion{
\begin{proof}[Proof of Lemma~\ref{lem:twoside}]
The proof follows by applying Algorithm~~\ref{alg:bipartitestruc} twice, one with $L=A$ and the other with $L=B$.
First, we apply Algorithm~~\ref{alg:bipartitestruc} where $L=A$ and $R=B$. If the algorithm declares that $G$ has a partial Grundy coloring with at least $k$ colors, then we are done. Otherwise, it outputs the bicliques
$A'_1\cdots,A'_{\ell_1}$ in $G$, where ${\ell}_1\leq 2k-2$ such that for any $v \in A $, the degree of $v$ in $G-F_1$ is at most $k^2$, where $F_1$ is the union of the edges in the above bicliques.

Next, again we apply Algorithm~\ref{alg:bipartitestruc} where $L=B$ and $R=A$. If the algorithm declares that $G$ has a partial Grundy coloring with at least $k$ colors, then we are done. Otherwise, it outputs another set of bicliques
$A''_1\cdots,A''_{\ell_2}$ in $G$, where ${\ell}_2\leq 2k-2$ such that for any $v \in B $, the degree of $v$ in $G-F_2$ is at most $k^2$, where $F_2$ is the union of the edges in the above bicliques.

Thus, we conclude that there is a partial Grundy $k$-coloring if any of the applications of Algorithm~\ref{alg:bipartitestruc} outputs a partial Grundy $k$-witness. Otherwise, we get at most $4k-4$ bicliques $A'_1,\cdots,A'_{\ell_1},A''_{1},\cdots,A''_{\ell_2}$ with the required property.
\end{proof}
}



\subsection{FPT Algorithm for Partial Grundy Coloring}\label{sec:main-pgc}

In this section, we design an FPT algorithm ${\cal A}$ for \pgc\ when the input has some structures. Then, we explain how to get an FPT algorithm for \pgc\ on general graphs using ${\cal A}$ and \Cref{thm:deg_reduction}. Moreover, our algorithm ${\cal A}$ will provide a faster algorithm for \pgc\ on $d$-degenerate graphs, improving the result in \cite{aboulker2022grundy}. Toward the first task, we define the following problem.  

\smallskip

\defproblem{\pgcstruct (\pgcstructshort)}{
Positive integers $k,d,\ell\in {\mathbb N}$, a  graph $G$, and $\ell$ bicliques $A_1\cdots,A_{\ell}$ in $G$ such that $G-F$ is $d$-degenerate, where $F$ is the union of the edges in the above bicliques.
}
{
Decide if there is a partial Grundy coloring for $G$ using at least $k$ colors. 
}
\smallskip

First, we design a randomized polynomial time algorithm ${\cal A}_1$ for \pgcstructshort with a success probability at least $(k(d+1))^{-2k^2-k}\cdot 2^{-\ell k}$. We increase the probability of success to a constant by running ${\cal A}_1$ multiple times. Finally, we explain the derandomization of our algorithm.  Then we prove \Cref{thm:pgc} using this algorithm and our structural result (\Cref{thm:deg_reduction}). 
To design the algorithm ${\cal A}_1$, we use the following result of Lokshtanov et al.~\cite{lokshtanov2020covering}.

\begin{proposition}[Lemma 1.1.~\cite{lokshtanov2020covering}] \label{lem:indcover}
There is a linear-time randomized algorithm that, given a $d$-degenerate graph $H$ and an integer $k$, outputs an independent set $Y$ such that for any independent set $X$ in $H$ with $|X|\leq k$, the probability that $X\subseteq Y$ is at least $\left({k(d+1) \choose k} \cdot k(d+1)\right)^{-1}$.
\end{proposition}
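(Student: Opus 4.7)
The plan is to use $d$-degeneracy to obtain a bounded \emph{back-neighborhood} for each vertex, and then to run a one-shot random labeling that simultaneously produces an independent set and covers any prescribed small independent set with the required probability. Computing the degeneracy ordering takes linear time via a bucket queue, and the rest of the algorithm is a single sweep, so the whole procedure runs in linear time.

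Concretely, I would first compute a degeneracy ordering $v_1, v_2, \ldots, v_n$ of $V(H)$ so that for each $i$, the set $N^-(v_i) := N(v_i) \cap \{v_1, \ldots, v_{i-1}\}$ has size at most $d$. Then, independently and uniformly at random for each $v \in V(H)$, draw a label $a(v) \in [k(d+1)]$, and output
\[
Y := \{\, v \in V(H) : a(v) \leq k \text{ and } a(u) > k \text{ for every } u \in N^-(v)\,\}.
\]
That $Y$ is independent is immediate: for any edge $\{u,v\}$ with $u$ earlier in the ordering (so $u \in N^-(v)$), membership of $v$ in $Y$ forces $a(u) > k$, which excludes $u$ from $Y$.

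For the probability bound, fix any independent set $X$ with $|X|\leq k$ and let $B := \bigl(\bigcup_{x \in X} N^-(x)\bigr) \setminus X$. Since $X$ is independent, no back-neighbor of a vertex in $X$ can lie in $X$, so $X \cap B = \emptyset$; and $|B| \leq |X|\cdot d \leq kd$ by the degeneracy bound. The event $X \subseteq Y$ is implied by the two events (i) $a(x) \leq k$ for every $x \in X$, and (ii) $a(u) > k$ for every $u \in B$. By independence of the labels and disjointness of $X$ and $B$,
\[
\Pr[X \subseteq Y] \ \geq\ \left(\frac{1}{d+1}\right)^{|X|} \left(\frac{d}{d+1}\right)^{|B|} \ \geq\ \frac{1}{(d+1)^k}\left(\frac{d}{d+1}\right)^{kd}.
\]
A short arithmetic comparison using $(1 - 1/(d+1))^{d} \geq 1/e$ and the standard estimate $\binom{k(d+1)}{k} \leq (e(d+1))^{k}$ then shows that this product is at least $\bigl(\binom{k(d+1)}{k}\,k(d+1)\bigr)^{-1}$, yielding the stated bound.

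The conceptual heart of the argument is the interplay between the degeneracy bound on $|B|$ and the disjointness of $X$ and $B$, which lets the ``on $X$'' and ``off $B$'' events be multiplied. The step I expect to require the most care is verifying that $\tfrac{1}{(d+1)^k}(d/(d+1))^{kd}$ cleanly dominates $\bigl(\binom{k(d+1)}{k}\,k(d+1)\bigr)^{-1}$ for every $k, d \geq 1$; the estimates above handle the asymptotic regime, but small cases such as $k=d=1$ should be checked directly. A minor secondary point is ensuring linear runtime in the model where random labels can be sampled in $O(1)$: computing $Y$ simply inspects each edge once in the degeneracy ordering, so the total cost is $O(n+m)$.
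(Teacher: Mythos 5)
The paper states this Proposition as a citation of Lemma~1.1 of Lokshtanov et al.\ and gives no proof of its own, so there is nothing internal to compare against; your attempt must stand on its own.

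Your construction is a clean and valid sampling scheme: the set $Y$ you output is indeed independent (the back-neighbor $u\in N^-(v)$ of any $v\in Y$ has $a(u)>k$ and so $u\notin Y$), the runtime is linear, and the computation
$\Pr[X\subseteq Y]\ \geq\ (d+1)^{-|X|}\bigl(d/(d+1)\bigr)^{|B|}\ \geq\ (d+1)^{-k}\bigl(d/(d+1)\bigr)^{kd}$
is correct (using $|X|\leq k$, $|B|\leq kd$, and disjointness of $X$ and $B$). The gap is in the very step you yourself flagged: the claim that $(1-1/(d+1))^{d}\geq 1/e$ together with $\binom{k(d+1)}{k}\leq (e(d+1))^{k}$ yields $(d+1)^{-k}(d/(d+1))^{kd}\geq \bigl(\binom{k(d+1)}{k}k(d+1)\bigr)^{-1}$. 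The first estimate gives you $\text{LHS}\geq (e(d+1))^{-k}$; the second, being an \emph{upper} bound on the binomial coefficient, only gives you $\text{RHS}\geq (e(d+1))^{-k}/(k(d+1))$. Two lower bounds on the two sides establish nothing; to conclude $\text{LHS}\geq\text{RHS}$ you would need a \emph{lower} bound on $\binom{k(d+1)}{k}$ (so as to upper-bound the RHS), and the standard lower bound $\binom{k(d+1)}{k}\geq (d+1)^k$ reduces the task to $k(d+1)\geq e^k$, which is false already for $k=3$, $d=1$. So the route you sketch collapses.

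The underlying inequality you actually need,
\[
\binom{k(d+1)}{k}\,k(d+1)\ \geq\ \frac{(d+1)^{k(d+1)}}{d^{kd}},
\]
is true, but it is tight: setting $N=k(d+1)$, $K=k$, it says precisely that the modal probability $\Pr[\mathrm{Bin}(N,K/N)=K]$ is at least $1/N$, and there is \emph{equality} at $k=d=1$ (both sides equal $1/4$). The standard entropy bound $\binom{N}{K}\geq \frac{1}{N+1}\cdot\frac{N^N}{K^K(N-K)^{N-K}}$ gives only $1/(N+1)$, one notch short. Stirling with explicit error terms does settle it for $N\geq 3$, and the case $N=2$ is a direct check, but this is genuinely more than a ``short arithmetic comparison'' and cannot be obtained from the upper bound $\binom{k(d+1)}{k}\leq (e(d+1))^{k}$. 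You should either replace this step with a careful proof of the sharpened binomial lower bound, or redesign the sampling (e.g.\ via a random total order/labeling as in the cited source) so that the target probability falls out exactly rather than through a tight auxiliary inequality.
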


\noindent

The algorithm ${\cal A}_1$ has the following steps. 
\begin{enumerate}
\item Color all vertices in $V(G)$ uniformly and independently at random with colors from the set $[k]$. Let the obtained coloring be $\phi : V(G)\rightarrow [k]$, and $Z_i=\phi^{-1}(i)$, for each $i\in [k]$.



\item For each $i\in [\ell]$, let $A_i=(L_i\uplus R_i, E_i)$. For each $j\in [k]$ and $i\in [\ell]$, uniformly at randomly assign $P_{j,i}:=L_i$ or $P_i:=R_i$. 
That is, with probability $\frac{1}{2}$, $P_{j,i}:=L_i$ and with probability $\frac{1}{2}$, $P_{j,i}:=R_i$. Let $D_j=\bigcup_{i\in [\ell]} P_{j,i}$.

\item Now for each $j\in [k]$, we apply the algorithm in ~\Cref{lem:indcover} for $(G[Z_j-D_j], k)$ to obtain an independent set $Y_{j}$. 
\item If $(Y_1,\ldots,Y_k)$ is a $k$-partial Grundy witness of $G$, then 
we output {\bf Yes}, otherwise we output {\bf No}.  

\end{enumerate}

Since the algorithm in ~\Cref{lem:indcover} runs in linear time, 
the algorithm ${\cal A}_1$ can be implemented to run in linear time because $Z_1,\ldots,Z_k$ is a partition of $V(G)$. Clearly, if the algorithm ${\cal A}_1$ outputs {\bf Yes}, then $G$ has a $k$-Partial Grundy witness and hence hence $G$ has a partial Grundy coloring using at least $k$ colors. Next we prove that if $G$ has a $k$-Partial Grundy witness the algorithm ${\cal A}_1$ outputs {\bf Yes} with sufficiently high probability. 

\begin{lemma}
\label{lem:A1proof}
Suppose $G$ has a $k$-Partial Grundy witness. Then the algorithm ${\cal A}_1$ outputs {\bf Yes} with probability at least $(k(d+1))^{-2k^2-k}\cdot 2^{-\ell k}$. 
\end{lemma}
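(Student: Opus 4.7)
The plan is to show that if $G$ has a $k$-partial Grundy witness, then with the claimed probability there is a small witness $(X_1,\ldots,X_k)$ that is ``covered'' by the sets $(Y_1,\ldots,Y_k)$ produced by the algorithm; then by \Cref{obs:k-witness-supset}, $(Y_1,\ldots,Y_k)$ is itself a $k$-partial Grundy witness and the algorithm outputs \textbf{Yes}. The analysis decomposes naturally into three events, one per randomized step of ${\cal A}_1$, and we lower-bound the probability of each.

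First I would invoke \Cref{obs:k-witness} to fix a small $k$-partial Grundy witness $(X_1,\ldots,X_k)$ with $|X_j|\le k-j+1$, so $\sum_{j\in[k]}|X_j|\le k(k+1)/2$. Let $\mathcal{E}_1$ be the event that $X_j\subseteq Z_j$ for all $j\in[k]$. Since the vertices are colored independently and uniformly, $\Pr[\mathcal{E}_1]\ge k^{-k(k+1)/2}$. Next, let $\mathcal{E}_2$ be the event that $X_j\cap D_j=\emptyset$ for every $j\in[k]$. For each biclique $A_i=L_i\uplus R_i$ and each color class $j$, independence of $X_j$ forces $X_j\cap A_i$ to lie entirely inside $L_i$ or entirely inside $R_i$ (or be empty); thus the random choice $P_{j,i}\in\{L_i,R_i\}$ avoids $X_j$ with probability at least $1/2$. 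Over all $k\ell$ independent coin tosses this yields $\Pr[\mathcal{E}_2\mid\mathcal{E}_1]\ge 2^{-\ell k}$.

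Next I would argue the crucial structural observation for step 3: conditioned on $\mathcal{E}_1\cap\mathcal{E}_2$, the graph $G[Z_j-D_j]$ is $d$-degenerate. Indeed, any edge of a biclique $A_i$ has one endpoint in $L_i$ and one in $R_i$; since $P_{j,i}$ equals one of these sides and $P_{j,i}\subseteq D_j$, no biclique edge survives in $G[Z_j-D_j]$. Hence $G[Z_j-D_j]$ is a subgraph of $G-F$, which is $d$-degenerate by hypothesis, so subgraph-closedness of degeneracy gives the bound. Now $X_j\subseteq Z_j-D_j$ is an independent set of size at most $k$ in a $d$-degenerate graph, and \Cref{lem:indcover} guarantees $X_j\subseteq Y_j$ with probability at least $\bigl(\binom{k(d+1)}{k}\cdot k(d+1)\bigr)^{-1}\ge (k(d+1))^{-(k+1)}$. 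Running independently over the $k$ color classes gives $\Pr[\mathcal{E}_3\mid\mathcal{E}_1\cap\mathcal{E}_2]\ge (k(d+1))^{-k(k+1)}$.

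Finally, multiplying the three conditional probabilities and simplifying:
\[
\Pr[\mathcal{E}_1\cap\mathcal{E}_2\cap\mathcal{E}_3]\ \ge\ k^{-k(k+1)/2}\cdot 2^{-\ell k}\cdot (k(d+1))^{-k(k+1)}\ \ge\ (k(d+1))^{-3k(k+1)/2}\cdot 2^{-\ell k},
\]
using $k\le k(d+1)$. Since $3k(k+1)/2\le 2k^2+k$ for all $k\ge 1$ (the difference equals $k(k-1)/2\ge 0$), this is at least $(k(d+1))^{-2k^2-k}\cdot 2^{-\ell k}$, as required. I do not anticipate any real obstacle; the only place where care is needed is verifying that step 3 actually operates on a $d$-degenerate graph, which is precisely why step 2 randomly removes an entire side of every biclique rather than, say, isolated vertices. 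Once that structural claim is pinned down, the probability bound is a straightforward product of independent events.
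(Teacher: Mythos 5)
Your proof is correct and follows essentially the same approach as the paper: decompose the success probability into (a) the color-coding event placing each $X_j$ inside $Z_j$, (b) the biclique-side choices missing each $X_j$, and (c) the independence-covering event $X_j\subseteq Y_j$, then multiply. You use the slightly sharper $\sum_j |X_j|\le k(k+1)/2$ where the paper coarsens to $k^2$, and you make explicit the (implicitly used) fact that $G[Z_j-D_j]$ is a subgraph of $G-F$ and hence $d$-degenerate; both differences are cosmetic and lead to the same claimed bound.
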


\begin{proof}

There is a partial $k$-Grundy witness $(X_1,\ldots,X_k)$ of $G$ such that $ |X_i| \leq k$, for all $i\in [k]$. This follows from the assumption that $G$ has a $k$-partial Grundy witness and \Cref{obs:k-witness}. 
We say that the $k$-partial Grundy witness $(X_1,X_2,\ldots,X_k)$ is {\em valid} with respect to $\phi$, if $X_i\subseteq Z_i=\phi^{-1}(i)$ for all $i\in [k]$. Since $\sum_{i\in [k]}|X_i|\leq k^2$, the probability that the $k$-partial Grundy witness $(X_1,X_2,\ldots,X_k)$ is valid with respect to $\phi$ is at least ${k^{-k^2}}$;
and we name it as event ${\mathcal E}_c$.  

Now consider Step~2. For each $j\in [k]$ and $i\in [\ell]$, $X_j$ intersects with at most one of $L_i$ or $R_i$, because $A_i$ is a complete bipartite graph. Let ${\mathcal E}_j$ be the event that $X_j$ does not intersects with $D_j$. Thus, $\Pr[{\mathcal E}_j]\geq 2^{\ell}$. Moreover, if the events ${\mathcal E}_c,{\mathcal E}_1,\ldots,{\mathcal E}_k$ happen, then $X_j\subseteq Z_j\setminus D_j$ for all $j\in [k]$. 
The probability that all the events ${\mathcal E}_c,{\mathcal E}_1,\ldots,{\mathcal E}_k$ happen is at least $k^{-k^2}\cdot 2^{-\ell k}$. 

Fix an index $j\in [k]$ and recall that $|X_j| \leq k$. Thus, from~\Cref{lem:indcover}, given $X_j \subseteq Z_j\setminus D_j$, the probability that $X_j \subseteq Y_{j}$ is at least $\left({k(d+1) \choose k} \cdot k(d+1)\right)^{-1}$. Thus the probability that 
$X_j\subseteq Y_j$ for all $j\in [k]$ is at least 

$$\left({k(d+1) \choose k} \cdot k(d+1)\right)^{-k}\cdot k^{-k^2}\cdot 2^{-\ell k}\geq (k(d+1))^{-2k^2-k}2^{-\ell k}.$$
This implies that with probability at least $(k(d+1))^{-2k^2+k}\cdot 2^{-\ell k}$, the algorithm outputs ${\bf Yes}$ in Step~4. This completes the correctness proof. 
%
\end{proof}

Now to increase the probability of success to $2/3$ we run ${\cal A}_1$ $3\cdot (k(d+1))^{2k^2+k}2^{\ell k}$ times and outputs {\bf Yes} if at least one of them outputs {\bf Yes}, and outputs {\bf No}, otherwise. 
That is, we get the following theorem. 

\begin{theorem}
\label{thm:strucalinput}
There is a randomized algorithm for \pgcstructshort running in time $\OO( (k(d+1))^{2k^2+k}\cdot 2^{\ell k} \cdot (m+n))$. In particular, if $(G,k)$ is a 
no-instance then with probability $1$ the algorithm outputs {\bf No}; and  if $(G,k)$ is a yes-instance then with probability $2/3$ the algorithm outputs {\bf Yes}. 
\end{theorem}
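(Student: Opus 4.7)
The plan is to prove \Cref{thm:strucalinput} by standard probability amplification applied to the randomized algorithm ${\cal A}_1$. The heart of the argument has already been done in \Cref{lem:A1proof}, and the soundness direction (no-instances) is essentially free because ${\cal A}_1$ has one-sided error: when it outputs {\bf Yes}, it has explicitly verified in Step~4 that $(Y_1,\ldots,Y_k)$ is a $k$-partial Grundy witness of $G$, so by \Cref{obs:k-witness-supset} (really, by definition) $G$ admits a partial Grundy coloring using at least $k$ colors. Hence if $(G,k)$ is a no-instance, ${\cal A}_1$ outputs {\bf No} with probability $1$.

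For yes-instances, I would run ${\cal A}_1$ independently $N := \lceil 3 \cdot (k(d+1))^{2k^2+k} \cdot 2^{\ell k}\rceil$ times and output {\bf Yes} iff at least one of the $N$ executions returns {\bf Yes}. Let $p := (k(d+1))^{-2k^2-k}\cdot 2^{-\ell k}$, so by \Cref{lem:A1proof} each run succeeds independently with probability at least $p$, and thus the probability that \emph{all} $N$ runs output {\bf No} is at most
\[
(1-p)^N \;\leq\; e^{-pN} \;\leq\; e^{-3} \;<\; \tfrac{1}{3}.
\]
Therefore the boosted algorithm outputs {\bf Yes} with probability at least $2/3$ on yes-instances, while retaining the $1$-sided error on no-instances.

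The running-time bookkeeping is routine: ${\cal A}_1$ is implementable in time $\OO(m+n)$ (the color-coding in Step~1 and the biclique-side guesses in Step~2 take $\OO(n+\ell)$, the $k$ independence-covering invocations in Step~3 together run in linear time by \Cref{lem:indcover} because $Z_1,\ldots,Z_k$ partition $V(G)$, and the witness check in Step~4 amounts to scanning the edges incident to the chosen dominator in each $Y_j$). Multiplying by $N$ yields the claimed $\OO((k(d+1))^{2k^2+k} \cdot 2^{\ell k} \cdot (m+n))$ bound.

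I do not foresee any technical obstacle beyond correctly stating the amplification; the only mild subtlety is observing that ${\cal A}_1$ is truly one-sided (so that soundness is $1$, not merely ``high probability''), which follows immediately from the explicit verification in Step~4. If one wanted to then derandomize this in a later step, the natural route is to replace the independent random colorings in Step~1 and the uniform $L_i/R_i$ choices in Step~2 by a deterministic $(n,k^2,k)$-universal set and a deterministic $(\ell, k, 2)$-splitter-style family, but \Cref{thm:strucalinput} itself is purely the randomized boosted statement.
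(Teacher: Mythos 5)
Your proposal is correct and matches the paper's own proof: both amplify $\mathcal{A}_1$ by running it $\Theta\big((k(d+1))^{2k^2+k} \cdot 2^{\ell k}\big)$ times and outputting \textbf{Yes} iff any run succeeds, with the same one-sided error observation and the same linear-time-per-run bookkeeping. The only difference is cosmetic: you spell out the $(1-p)^N \leq e^{-3} < 1/3$ calculation, which the paper leaves implicit.
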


Theorems \ref{thm:deg_reduction} and \ref{thm:strucalinput} implies the following theorem. 

\begin{theorem}
\label{thm:genrandomzied}
There is a randomized algorithm for \pgc\ running in time $2^{\OO(k^4)} n^{\OO(1)}$. with the following specification. In particular, if $(G,k)$ is a 
no-instance then with probability $1$ the algorithm outputs {\bf No}; and  if $(G,k)$ is a yes-instance then with probability $2/3$ the algorithm outputs {\bf Yes}. 
\end{theorem}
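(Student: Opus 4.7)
The plan is to use the structural result (\Cref{thm:deg_reduction}) as a preprocessing step to reduce an arbitrary input of \pgc\ to an instance of \pgcstruct, and then invoke the randomized algorithm from \Cref{thm:strucalinput}.

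First, on input $(G,k)$, I would run the polynomial-time algorithm of \Cref{thm:deg_reduction}. If it concludes that $(G,k)$ is a yes-instance, we output {\bf Yes}. Otherwise, we obtain a family of at most $\ell \leq 2k^3$ induced bicliques $A_1,\ldots,A_\ell$ of $G$ whose edge-union $F$ is such that every vertex has degree at most $k^3$ in $G-F$. In particular, $G-F$ is $d$-degenerate with $d \leq k^3$. Thus $(k,d,\ell,G,A_1,\ldots,A_\ell)$ is a valid instance of \pgcstructshort, and the original instance $(G,k)$ is equivalent to it in the sense that $G$ has a partial Grundy coloring using at least $k$ colors if and only if this \pgcstructshort instance is a yes-instance (since passing through the structural algorithm only creates yes-certificates, not destroys them). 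I would then run the randomized algorithm from \Cref{thm:strucalinput} on this instance and return its output.

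For the running time, plugging $d \leq k^3$ and $\ell \leq 2k^3$ into the bound from \Cref{thm:strucalinput} gives
\[
\OO\bigl((k(k^3+1))^{2k^2+k} \cdot 2^{2k^3 \cdot k} \cdot (m+n)\bigr) \;=\; 2^{\OO(k^2 \log k)} \cdot 2^{\OO(k^4)} \cdot n^{\OO(1)} \;=\; 2^{\OO(k^4)} \cdot n^{\OO(1)},
\]
as required.

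For correctness, if $(G,k)$ is a no-instance, then the deterministic algorithm of \Cref{thm:deg_reduction} cannot wrongly declare yes (its yes-conclusion is always correct by part (i)), and the algorithm of \Cref{thm:strucalinput} outputs {\bf No} with probability one on no-instances; hence we always output {\bf No}. If $(G,k)$ is a yes-instance, either the structural algorithm already declares yes, or we pass a yes-instance to the algorithm of \Cref{thm:strucalinput}, which then outputs {\bf Yes} with probability at least $2/3$. I do not anticipate any real obstacle here: the work is entirely in the two building blocks already established, and the only thing to verify is that the parameters $d,\ell$ produced by \Cref{thm:deg_reduction} are small enough that the exponential factor from \Cref{thm:strucalinput} collapses into $2^{\OO(k^4)}$, which is exactly what the polynomial-in-$k$ bounds $d\leq k^3$ and $\ell \leq 2k^3$ deliver.
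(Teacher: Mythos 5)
Your proposal is correct and matches the paper's own proof sketch exactly: run the structural algorithm of \Cref{thm:deg_reduction}, answer {\bf Yes} if it succeeds, otherwise feed the resulting biclique decomposition (with $d \le k^3$, $\ell \le 2k^3$) into \Cref{thm:strucalinput}. Your explicit substitution of the parameter bounds into the running time and the one-sided-error correctness argument are both accurate, if slightly more detailed than what the paper records.
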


\begin{proof}[Proof Sketch]
First we run the algorithm mentioned in \Cref{thm:deg_reduction}. If it concludes that $G$ has a partial Grundy coloring with at least $k$ colors, then we output {\bf Yes}. Otherwise, we get  at most $2k^3$ induced bicliques $A_1\cdots,A_{\ell}$
in $G$ such that the following holds. For any $v \in V(G)$, the degree of $v$ in $G-F$ is at most $k^3$, where $F$ is the union of the edges in the above bicliques. That is the degeneracy of $G-F$ is at most $k^3$. Then, we apply \Cref{thm:strucalinput}, and ouput accordingly. 
\end{proof}

\Cref{thm:strucalinput} on $d$-degenerate graphs (where $\ell=0$) implies the following theorem. 

\begin{theorem}
\label{thm:degenerateinput}
There is randomized algorithm for \pgc\ on $d$-degenerate graphs running in time $\OO( (k(d+1))^{2k^2+k}\cdot (m+n))$.  In particular, if $(G,k)$ is a 
no-instance then with probability $1$ the algorithm outputs {\bf No}; and  if $(G,k)$ is a yes-instance then with probability $2/3$ the algorithm outputs {\bf Yes}. 
\end{theorem}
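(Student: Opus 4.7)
The plan is to observe that the theorem follows as a direct corollary of \Cref{thm:strucalinput} by viewing a $d$-degenerate graph as a trivially structured instance of \pgcstructshort. Specifically, given an input $(G,k)$ for \pgc\ where $G$ is $d$-degenerate, we simply construct the \pgcstructshort\ instance $(k, d, \ell=0, G, \emptyset)$. Since the empty collection of bicliques induces the edge set $F = \emptyset$, the graph $G - F = G$ is $d$-degenerate, so the preconditions of \pgcstructshort\ are met.

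Next, I would invoke \Cref{thm:strucalinput} on this instance. Substituting $\ell = 0$ into the running time $\OO((k(d+1))^{2k^2+k} \cdot 2^{\ell k}\cdot(m+n))$ gives $\OO((k(d+1))^{2k^2+k} \cdot (m+n))$, matching the bound claimed in the statement. The one-sided error guarantees (no false positives, and outputting \textbf{Yes} with probability at least $2/3$ on yes-instances) are inherited verbatim from \Cref{thm:strucalinput}.

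For completeness I would note why the inner algorithm ${\cal A}_1$ behaves correctly in this special case: with $\ell=0$, Step~2 is vacuous (no $P_{j,i}$ to choose, and $D_j = \emptyset$ for every $j \in [k]$), so Step~3 invokes the independence covering algorithm of \Cref{lem:indcover} on $G[Z_j]$ for each $j$. Each induced subgraph $G[Z_j]$ is $d$-degenerate as a subgraph of $G$, so \Cref{lem:indcover} applies with the claimed covering probability. Tracing through the proof of \Cref{lem:A1proof} with $\ell=0$ yields exactly the success bound $(k(d+1))^{-2k^2-k}$ per execution of ${\cal A}_1$, and amplification by $3(k(d+1))^{2k^2+k}$ independent repetitions brings the success probability on yes-instances to at least $2/3$ without any additional conceptual work. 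There is no real obstacle here beyond verifying that the $\ell=0$ specialization of Theorem~\ref{thm:strucalinput} is sound, which is routine.
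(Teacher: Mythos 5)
Your proposal is correct and is exactly the paper's own argument: the paper derives \Cref{thm:degenerateinput} as an immediate corollary of \Cref{thm:strucalinput} by specializing to $\ell=0$, which is precisely what you do. The extra checks you include (that Step~2 of ${\cal A}_1$ becomes vacuous, that $G[Z_j]$ inherits $d$-degeneracy) are sound but not strictly needed once \Cref{thm:strucalinput} is taken as given.
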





\subsubsection{Derandomization}

Now, we explain how to derandomize \Cref{thm:strucalinput}. Recall the algorithm ${\cal A}_1$ designed earlier in this section. 
The Steps~1-3 are random steps. To derandomize Step~1 we use the concept of ``universal sets''. The Step~2 can be derandomized easily, because the number of potential sets for the choice of $D_j$ is $2^{\ell}$. To derandomize Step~3, we use the notion of independence covering families.

\smallskip
\noindent
\textbf{Universal Sets.} Panolan \cite{fahad2015dynamic} defined an $(n,p,q)$-universal set that is a generalization of an $(n,k)$-universal set given by Naor et al. \cite{naor1995splitters}. 
For $n,p,q \in \mathbb{N}$, an {\em $(n,p, q)$-universal set} is a set of vectors $V \subseteq [q]^n$ such that for any index set $S\subseteq [n]$ of size at most $p$, the set of $p$-dimensional vectors $V|_S = \{\wvec{v}|_S \mid \wvec{v}\in V\}$, contains all the $q^p$ vectors of dimension $p$ with entries from $[q]$.  

\begin{proposition}[\cite{fahad2015dynamic}]\label{prop:universal-set}
Given positive integers $n,p,q$, there is an algorithm
that constructs an $(n, p, q)$-universal set of cardinality $q^p \cdot p^{\OO(\log p)} \cdot\log^2 n$ in time bounded by $q^p \cdot p^{\OO(\log p)}\cdot n\log^2 n$.
\end{proposition}

The next observation immediately follows from the above result.  

\begin{observation}\label{obs:derandom}
Consider numbers $n,p,q\in \mathbb{N}$, where $p,q \leq n$. We can construct a family $\C{Q}$ of functions from $[n]$ to $[q]$ with at most $q^p \cdot p^{\OO(\log p)}\cdot\log^2 n$ functions in time bounded by $q^p \cdot p^{\OO(\log p)}\cdot n\log^2 n$, such that the following holds: for any $X \subseteq [n]$ of size at most $p$ and a function $\chi: X \rightarrow [q]$, there exists a function $f \in \C{Q}$ such that $f|_{X} = \chi$.  
\end{observation} 

\smallskip
\noindent
\textbf{Independence Covering Family.}
To derandomize the algorithm in
\Cref{lem:indcover}, we need the following definition and a known result.

\begin{definition}[$k$-Independence Covering Family~\cite{lokshtanov2020covering}]{\rm
For a graph $G$ and $k\in {\mathbb N}$, a family of independent sets of $G$ is a {\em $k$-{\icf}} for $(G,k)$, denoted by $ \ifam{G}{k}$, if for any independent set $X$ in $G$ of size at most $k$, there exists $Y \in \ifam{G}{k}$ such that $X \subseteq Y$.
}\end{definition}

\begin{proposition}[Lemma 3.2~\cite{lokshtanov2020covering}]
\label{lemma:discl}
There is an algorithm that given a $d$-degenerate graph $G$ on $n$ vertices and $m$ edges, and $k\in {\mathbb N}$, in time bounded by $\OO(\binom{k(d+1)}{k} \cdot 2^{o(k(d+1))} \cdot (n+m)\log n)$, outputs a $k$-{\icf} for $(G,k)$ of size at most $\binom{k(d+1)}{k} \cdot 2^{o(k(d+1))} \cdot \log n$.
\end{proposition}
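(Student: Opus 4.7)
The plan is to derandomize the algorithm of Proposition~\ref{lem:indcover} using the universal sets from Proposition~\ref{prop:universal-set} together with a degeneracy ordering of $G$. First I would fix an ordering $v_1, v_2, \ldots, v_n$ of $V(G)$ so that each $v_i$ has at most $d$ neighbors among $\{v_{i+1}, \ldots, v_n\}$. For any independent set $X \subseteq V(G)$ with $|X|\leq k$, define its \emph{forward closure} $N^+[X]$ as the union of $X$ and the later neighbors (in the degeneracy order) of every vertex in $X$. By degeneracy, $|N^+[X]| \leq k(d+1)$.

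The randomized algorithm of Proposition~\ref{lem:indcover} can be recast as follows: draw a uniformly random coloring $\psi: V(G) \to [k(d+1)]$, designate a ``target'' subset $T\subseteq [k(d+1)]$ of $k$ colors, and then process the vertices in reverse degeneracy order, inserting $v_i$ into $Y$ exactly when $\psi(v_i)\in T$ and $\psi(v_i)\neq \psi(v_j)$ for every later neighbor $v_j$ of $v_i$. A direct case analysis shows that if $\psi$ is injective on $N^+[X]$ and maps the vertices of $X$ into $T$, then the output independent set $Y$ contains $X$. Since $|N^+[X]|\leq k(d+1)$, such a ``good'' $(\psi, T)$ pair exists with probability at least $\binom{k(d+1)}{k}^{-1}$ among the $(k(d+1))^{k(d+1)}$ labelings, once we also guess the $k$ positions of $X$ inside $T$.

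To derandomize, I would invoke Observation~\ref{obs:derandom} with $p = k(d+1)$ and $q = k(d+1)$, obtaining a family $\C{Q}$ of functions from $V(G)$ to $[k(d+1)]$ of size $(k(d+1))^{k(d+1)} \cdot (k(d+1))^{\OO(\log k(d+1))} \cdot \log^2 n$ such that every labeling of any $k(d+1)$ vertices is realized by some $\psi \in \C{Q}$. For each $\psi \in \C{Q}$ and each $T \in \binom{[k(d+1)]}{k}$, run the deterministic greedy construction above to produce one independent set, and collect all of them into $\ifam{G}{k}$. By the analysis of the previous paragraph, every independent set of size at most $k$ in $G$ is covered. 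The total running time and family size are bounded by the product of $|\C{Q}|$ and the number of target sets, multiplied by the linear-time construction per independent set.

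The main obstacle is to shave the universal-set factor $(k(d+1))^{k(d+1)}$ down to the claimed $\binom{k(d+1)}{k} \cdot 2^{o(k(d+1))}$ bound: the naive combination is too large by a factor of roughly $k^{k(d+1)}$. To resolve this, I would replace the direct appeal to Proposition~\ref{prop:universal-set} by the more refined $(n, k(d+1), k)$-splitters of Naor--Schulman--Srinivasan, which directly enumerate partitions of $V(G)$ into $k(d+1)$ blocks together with a choice of $k$ ``target'' blocks of total size $\binom{k(d+1)}{k}\cdot 2^{o(k(d+1))}\cdot \log n$, matching the stated bound. On each splitter element, the greedy construction uses only the block identities (not the full labeling), so the slimmer object still guarantees that for every independent set $X$ of size $\leq k$, some splitter element isolates $X$ into the $k$ target blocks, ensuring $X\subseteq Y$ in the corresponding output.
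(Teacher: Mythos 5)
The paper does not prove this statement itself; it is imported directly as Lemma~3.2 of~\cite{lokshtanov2020covering}, so there is no in-paper argument to check your proposal against. Judged on its own terms, your reconstruction has two genuine defects.

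First, your restatement of the randomized procedure is wrong in a way that matters. Under the rule ``insert $v_i$ into $Y$ iff $\psi(v_i)\in T$ and $\psi(v_i)\neq\psi(v_j)$ for every later neighbor $v_j$,'' the output $Y$ need not be independent: take $G=K_2$ with $u$ earlier than $v$, set $\psi(u)=1$, $\psi(v)=2$, $T=\{1,2\}$, and both endpoints pass the test. The correct rule demands that no later neighbor of $v_i$ receives a color in $T$ (equivalently, no later neighbor lies in $Z:=\psi^{-1}(T)$); this does yield an independent set, and then $X\subseteq Y$ whenever $\psi$ sends $X$ into $T$ and $N^+[X]\setminus X$ outside $T$. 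Injectivity of $\psi$ on $N^+[X]$ is not the sufficient condition one wants --- it is exponentially more demanding than the event actually needed, which is why a derandomization built around it cannot match the bound of Proposition~\ref{lem:indcover}.

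Second, the patch you propose at the end does not reach the stated size either. A perfect hash family on $k(d+1)$ elements --- which is what NSS splitters with the relevant parameters give you --- has size $e^{k(d+1)}\,(k(d+1))^{\OO(\log k(d+1))}\log n$; combined with enumerating the $\binom{k(d+1)}{k}$ target sets, that still leaves a $2^{\Theta(k(d+1))}$ overhead, not the required $2^{o(k(d+1))}$. With the construction corrected as above, the object actually needed is a family of subsets $Z\subseteq V(G)$ such that for every disjoint pair $(A,B)$ with $|A|\leq k$ and $|B|\leq kd$, some $Z$ contains $A$ and avoids $B$; for each such $Z$ one outputs the independent set $\{v\in Z : N^+(v)\cap Z=\emptyset\}$. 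Such families are the $(n,k(d+1),k)$-lopsided universal (separating) families of Fomin, Lokshtanov, Panolan, and Saurabh, of size $\binom{k(d+1)}{k}\cdot 2^{o(k(d+1))}\cdot\log n$, and that is the tool used in~\cite{lokshtanov2020covering} to prove Lemma~3.2. Plain splitters are an ingredient inside that construction, but on their own they do not achieve the subexponential overhead.
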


Now, we are ready to derandomize the algorithm ${\cal A}_1$. To do that, we would derandomize the random steps of the algorithm: Steps 1-3.

Without loss of generality, we can assume that $V(G)=[n]$. For the first step, we can invoke~\Cref{obs:derandom} with $p= k^2$ and $q =k$, and compute a family of at most $k^{k^2} \cdot k^{\OO(\log k)}\cdot\log^2 n$ functions $\C{Q}$ from $[n]$ to $[k]$ in time bounded by $k^{k^2} \cdot k^{\OO(\log k)}\cdot n \log^2 n$ with the following guarantee. For any fixed small $k$-partial Grundy witness $(X_1,X_2, \cdots, X_k)$ in $G$ (if it exists), there is a function $\chi \in \C{Q}$ such that for each $i \in [k]$ and $v \in X_i$, we have $\chi(v) =i$. That is, $(X_1,X_2, \cdots, X_k)$ is valid with respect to $\chi$. 

To derandomize Step~2, let ${\cal D}^{\star}$ be the family of independent sets in $G[A_1\cup \ldots \cup A_{\ell}]$ such that for each $D\in {\cal D}^{\star}$, and $i\in [\ell]$ either $L_i \subseteq D$ or $R_i\subseteq D$, but not both. The number of sets in ${\cal D}^{\star}$ is $2^{\ell}$. Notice that each $D_j$ constructed in Step~2 belongs to ${\cal D}^{\star}$


Now for each $\phi \in \C{Q}$ and 
$D_1,\ldots,D_k\in {\cal D}^{\star}$, we do the following instead of Steps 3 and 4. For each $j\in [k]$, let $Z_j=\phi^{-1}(j)$. Construct a $k$-{\icf}
$\mathcal{F}(G[Z_j\setminus D_j],k)$ that covers all independent sets in $Z_i\setminus D_j$ of size at most $k$, for each $j\in [k]$. Clearly, if $X_j \subseteq Z_j\setminus D_j$, then there is an independent set in $\mathcal{F}(G[Z_j\setminus D_j],k)$ that contains $X_j$, for $j \in [k]$. Thus, if there is a tuple in $\mathcal{F}(G[Z_1\setminus D_1],k) \times \mathcal{F}(G[Z_2\setminus D_2],k) \times \ldots \times \mathcal{F}(G[Z_k\setminus D_k],k)$ which is a $k$-partial Grundy witness, then 
$G$ has a partial Grundy coloring with at least $k$ colors (See Observation~\ref{obs:k-witness-supset}). 
If for each $\phi \in \C{Q}$, and 
$D_1,\ldots,D_k\in {\cal D}^{\star}$
we fail to find a $k$-partial Grundy witness, then we output {\sf No}.

%
%

The correctness of the above procedure follows from~\Cref{obs:derandom},~\Cref{lemma:discl}, and the proof of~\Cref{lem:A1proof}. Let $T_1=k^{k^2}k^{\OO(\log k)}\cdot \log^2 n$ and $T_2=\binom{k(d+1)}{k} \cdot 2^{o(k(d+1))} \cdot \log n$.
The time required to compute the family $\C{Q}$ is bounded by $\C{O}(T_1 \cdot n)$, and the number of functions in $\C{Q}$ is at most $T_1$. 
For a fixed coloring $\phi$, and $D_1,\ldots,D_k\in {\cal D}^{\star}$, the time taken to construct a $k$-\icf~ $\mathcal{F}(G[Z_j\setminus D_j],k)$ is $\OO(T_2 \cdot (n+m))$, for each $j\in [k]$. 
For a fixed coloring $\phi$, and $D_1,\ldots,D_k\in {\cal D}^{\star}$, the running time to check 
if there is a tuple in $\mathcal{F}(G[Z_1\setminus D_1],k) \times \mathcal{F}(G[Z_2\setminus D_2],k) \times \ldots \times \mathcal{F}(G[Z_k\setminus D_k],k)$ which is a $k$-partial Grundy witness, is 
$\left( \Pi_{i=1}^{k} |\mathcal{F}(G[Z_i\setminus D_i],k)|\right)\cdot n^{\OO(1)}=T_2^k\cdot n^{\OO(1)}=2^{\OO(k^2d)} n^{\OO(1)}$, because $(\log n)^k$ is upper bounded by $\OO(2^{k\log k} n)$.\footnote{If $n \leq k^k$,  then $\log n \leq k\log k$ and $(\log n)^k\in \OO(2^{k\log k} n)$.  Otherwise,  $k \in \OO(\log n/\log \log n)$, and $(\log n)^k \in n^{\OO(1)}$ (see, for instance, Ex. 3.18 and its hint from~\cite{cygan2015parameterized}).} Therefore, the total running time is at most $T_1 \cdot 2^ {\ell k}\cdot 2^{\OO(k^2d)} n^{\OO(1)}$, which is upper bounded by  
$ \cdot 2^{\ell k+k^2\log k} \cdot 2^{\OO(k^2d)}\cdot n^{\OO(1)}$. 
Thus, we get the following Theorem.

\begin{theorem}
\label{thm:strucalinputdet}
There is an algorithm  for \pgcstructshort running in time $2^{\OO(k^2d)}\cdot 2^{\ell k+k^2\log k} \cdot n^{\OO(1)}$. 
\end{theorem}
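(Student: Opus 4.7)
The plan is to derandomize each of the three random choices in algorithm ${\cal A}_1$ independently, and then bound the running time of the resulting deterministic enumeration. Step~1 (the random coloring $\phi$) will be replaced by enumerating over a family $\C{Q}$ of colorings obtained from an $(n, k^2, k)$-universal set via Observation~\ref{obs:derandom}, of size $T_1 = k^{k^2} \cdot k^{\OO(\log k)} \cdot \log^2 n$. By Observation~\ref{obs:k-witness}, any partial Grundy witness can be shrunk to one of total size at most $k^2$, so the universal-set guarantee ensures that some $\phi \in \C{Q}$ assigns the shrunken witness $(X_1, \ldots, X_k)$ into the correct color classes $Z_j := \phi^{-1}(j)$.

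Step~2 (the random selection of the $D_j$'s) will be replaced by explicit enumeration. For each biclique $A_i = (L_i \uplus R_i, E_i)$ an independent set meets at most one side, so I will define $\cal D^\star$ to be the $2^\ell$ sets obtained by choosing exactly one side from every biclique, and iterate over all $2^{\ell k}$ tuples $(D_1, \ldots, D_k) \in (\cal D^\star)^k$. Some tuple satisfies $X_j \cap D_j = \emptyset$ for every $j$, giving $X_j \subseteq Z_j \setminus D_j$. Step~3 (the randomized independence-covering step) will be replaced by the deterministic construction of Proposition~\ref{lemma:discl} applied to $G[Z_j \setminus D_j]$: because $D_j$ deletes one side of every biclique, no edge of $F$ survives in $G[Z_j \setminus D_j]$, so this graph is $d$-degenerate and admits a $k$-independence covering family $\mathcal{F}_j$ of size $T_2 = \binom{k(d+1)}{k} \cdot 2^{o(k(d+1))} \cdot \log n$ computable in time $\OO(T_2 (n+m) \log n)$. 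Observation~\ref{obs:k-witness-supset} then guarantees that some $(Y_1, \ldots, Y_k) \in \mathcal{F}_1 \times \cdots \times \mathcal{F}_k$ with $X_j \subseteq Y_j$ is itself a $k$-partial Grundy witness.

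The final algorithm enumerates, for each $\phi \in \C{Q}$ and each tuple $(D_1, \ldots, D_k) \in (\cal D^\star)^k$, every element of $\mathcal{F}_1 \times \cdots \times \mathcal{F}_k$, and tests in polynomial time whether each such tuple is a $k$-partial Grundy witness of $G$; it outputs {\sf Yes} as soon as one is found, and {\sf No} otherwise. Correctness follows by stitching the three guarantees together in parallel with the proof of Lemma~\ref{lem:A1proof}. The main obstacle is purely the running-time bookkeeping: a naive bound $T_1 \cdot 2^{\ell k} \cdot T_2^k \cdot n^{\OO(1)}$ contains a $(\log n)^k$ factor that must be absorbed via the case distinction from the paper's footnote: if $n \le k^k$ then $(\log n)^k \in \OO(2^{k \log k} n)$, while otherwise $k = \OO(\log n / \log \log n)$ forces $(\log n)^k = n^{\OO(1)}$. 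Combining this with $T_1 = 2^{\OO(k^2 \log k)} \cdot n^{\OO(1)}$ and $T_2^k = 2^{\OO(k^2 d)} \cdot (\log n)^k$ yields the promised bound $2^{\OO(k^2 d)} \cdot 2^{\ell k + k^2 \log k} \cdot n^{\OO(1)}$.
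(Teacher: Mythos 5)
Your proposal follows the paper's derandomization of $\mathcal{A}_1$ step for step: universal sets for the coloring $\phi$, explicit enumeration of the $2^\ell$ one-side-per-biclique sets for the $D_j$'s, Proposition~\ref{lemma:discl} for the independence-covering families, and the same $(\log n)^k \le \OO(2^{k\log k} n)$ trick from the footnote to absorb the logarithmic blowup. The proof is correct and essentially identical to the paper's.
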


\Cref{thm:pgc} is a direct corollary of Theorems~\ref{thm:strucalinputdet}  and 
\ref{thm:deg_reduction}. For $d$-degenerate graphs, we get the following deterministic algorithm as a corollary of \Cref{thm:strucalinputdet}, where $\ell=0$. 

\begin{theorem}
\label{thm:degeneracydet}
There is an algorithm for \pgc\ that runs in time $2^{\OO(k^2(d+\log k))} n^{\OO(1)}$, where $d$ is the degeneracy of the input graph.
\end{theorem}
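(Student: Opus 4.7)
The plan is to observe that Theorem~\ref{thm:degeneracydet} is obtained by directly invoking Theorem~\ref{thm:strucalinputdet} on a suitably constructed \pgcstructshort instance in which no bicliques are needed. Given a $d$-degenerate input graph $G$ together with an integer $k$, I would simply form the \pgcstructshort instance $(k,d,\ell=0, G, \emptyset)$. Since $G$ is already $d$-degenerate, $G - F = G$ trivially satisfies the structural requirement (the condition ``$G-F$ is $d$-degenerate, where $F$ is the union of the edges in the bicliques'' with $\ell=0$ gives $F = \emptyset$), and so $(G,k)$ is a valid instance of \pgcstructshort with the same parameter $k$ and degeneracy bound $d$.

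Then I would apply Theorem~\ref{thm:strucalinputdet} to this instance. The guarantee there is a deterministic algorithm with running time $2^{\OO(k^2 d)} \cdot 2^{\ell k + k^2 \log k} \cdot n^{\OO(1)}$. Plugging $\ell = 0$ makes the middle factor $2^{k^2 \log k}$, and combining exponents gives
\[
2^{\OO(k^2 d)} \cdot 2^{k^2 \log k} \cdot n^{\OO(1)} \;=\; 2^{\OO(k^2 (d + \log k))} \cdot n^{\OO(1)},
\]
which is exactly the bound claimed in Theorem~\ref{thm:degeneracydet}. Correctness is immediate, since the algorithm of Theorem~\ref{thm:strucalinputdet} correctly decides whether $G$ admits a partial Grundy coloring with at least $k$ colors, and this is precisely the question asked by \pgc.

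There is essentially no obstacle here: the structural reduction of Theorem~\ref{thm:deg_reduction} is not needed in this setting because the degeneracy bound is part of the input, so we already have the hypothesis required by \pgcstructshort. The only thing worth a sentence is to remark that Theorem~\ref{thm:strucalinputdet} is stated with an explicit dependence on both $d$ and $\ell$, and taking $\ell = 0$ isolates the ``pure degeneracy'' cost $2^{\OO(k^2 d)}$ from the derandomization cost $2^{\OO(k^2 \log k)}$ coming from the universal-set step. Together these yield the single combined exponent $\OO(k^2(d+\log k))$ in the final running time.
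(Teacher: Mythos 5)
Your proposal is exactly the paper's own derivation: the paper obtains Theorem~\ref{thm:degeneracydet} as a corollary of Theorem~\ref{thm:strucalinputdet} by setting $\ell = 0$, and the running-time calculation you give matches the stated bound. Nothing further is needed.
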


\if 0

\todo[inline]{From the previous submission}
In this section, we design an FPT algorithm for \pgc\ on general graphs. For this full version of the article, we directly present the derandomized algorithm. We use Theorem~\ref{thm:deg_reduction} along with~\Cref{obs:derandom} and the $k$-{\icf}. We follow the steps used in the proof of Theorem~\ref{thm:degeneracydet} and before applying~\Cref{lemma:discl}, brute force on the bicliques generated by Theorem~\ref{thm:deg_reduction} to reduce the degeneracy of the graph. Formally, the steps of the algorithm are the following: Let $(G,k)$ be the input instance.
\begin{enumerate}
\item Invoke~\Cref{obs:derandom} with $p=k^2$, $q =k$ and $n=|V(G)|$, and let $\C{Q}$ be the set of functions from $[n]$ to $[q]$ obtained from it.\footnote{Recall, without loss of generality we may assume that $V(G) =[n]$.}
%

\item Run the algorithm in Theorem~\ref{thm:deg_reduction} on $(G,k)$. If it outputs a partial Grundy coloring of $G$ with at least $k$ colors, then output this coloring and stop. Else, we get induced bicliques $A_1,\ldots,A_{\ell}$ of $G$ such that $\ell\leq 2k^3$, and for each $v\in V(G)$, $d_{G-F}(v)\leq k^3$, where $F=\bigcup_{i\in [\ell]} E(A_i)$.

\item
For each $i\in [\ell]$, let $A_i=(L_i\uplus R_i, E_i)$. For each $b\in \{0,1\}^{\ell}$, $D_b=\bigcup_{i\in [\ell]} P_i$, where $P_i=L_i$ if $b[i]=0$ and $P_i=R_i$ if $b[i]=1$. Let ${\cal D}=\{D_b~\colon~b\in \{0,1\}^{\ell}\}$.

\item For each function $\phi \in {\cal Q}$, we do the following. Let $Z_i=\phi^{-1}(i)$, for each $i\in [k]$. Now for each $i\in [k]$ and $b\in \{0,1\}^{\ell}$, construct a $k$-{\icf}
$\mathcal{F}(G[Z_i-D_b],k)$, using~\Cref{lemma:discl}. For each $i\in [k]$, let ${\cal F}_{\phi,i}=\bigcup_{b\in \{0,1\}^{\ell}} \mathcal{F}(G[Z_i-D_b],k)$.
If there is a tuple in $\mathcal{F}_{\phi,1} \times \mathcal{F}_{\phi,2} \times \ldots \times \mathcal{F}_{\phi,k} $
which is a $k$-partial Grundy witness of $G$, then we can output a partial Grundy coloring of $G$ using~\Cref{prop:extension} and stop.

\item Otherwise, output {\sf No}.
\end{enumerate}

\subparagraph{Correctness Proof.} If $(G,k)$ is a no-instance, the algorithm outputs {\sf No} in Step~5. Consider the case when $(G,k)$ is a yes-instance. Then there is a partial $k$-Grundy witness $(X_1,\ldots,X_k)$ of $G$ such that $ |X_i| \leq k$ for all $i\in [k]$. From~\Cref{obs:derandom}, there exist $\phi \in \C{Q}$ such that $X_i\subseteq \phi^{-1}(i)$ for all $i\in [k]$. Now, if the algorithm outputs a partial Grundy coloring of $G$ using at least $k$ colors in Step~2, then we are done. Otherwise, we got $\ell$ induced bicliques $A_1,\ldots,A_{\ell}$ of $G$  such that $\ell\leq 2k^3$, and for every $v\in V(G)$, $d_{G-F}(v)\leq k^3$, where $F=\bigcup_{i\in [\ell]} E(A_i)$.

For each $i\in [k]$, let $Z_i=\phi^{-1}(i)$. Fix an index $i\in [k]$. For each $j\in [\ell]$, $X_i$ can not intersect both $L_j$ and $R_j$, as $X_i$ is an independent set. Now we construct a vector $b_i \in \{0,1\}^{\ell}$ as follows. For each $j\in [\ell]$, if $X_i\cap L_j =\emptyset$, then set $b_i[j]=0$ and if $X_i\cap L_j \neq \emptyset$, then set $b_i[j]=1$. Note that for every $i\in [k]$, $X_i$ is a subset of $Z_i\setminus D_{b_i}$ (see the definition of $D_{b}$ in Step~3). Thus, for each $i\in [k]$, there is an independent set in $\mathcal{F}(G[Z_i-D_{b_i}],k)$ that contains all the vertices in $X_i$. It implies that for any $i\in [k]$, there is an independent set in $\mathcal{F}_{\phi,i}$ that contains all the vertices in $X_i$. For every $i\in [k]$, let $Y_i$ be an independent set in $\mathcal{F}_{\phi,i}$ such that $X_i \subseteq Y_i$. Notice that $(Y_1,Y_2,\ldots, Y_k)\in \mathcal{F}_{\phi,1} \times \mathcal{F}_{\phi,2} \times \ldots \times \mathcal{F}_{\phi,k}$. Since $Z_1,\ldots,Z_k$ are pairwise disjoint, $Y_1,Y_2,\ldots, Y_k$ are pairwise disjoint independent sets in $G$. We have already proved that $X_i\subseteq Y_i$ for all $i\in [k]$. Also, by our assumption $(X_1,\ldots,X_k)$ is a $k$-partial Grundy witness of $G$. Then, by Observation~\ref{obs:k-witness-supset}, $(Y_1,\ldots,Y_k)$ is a $k$-partial Grundy witness of $G$. Thus, at Step~4 our algorithm will output a partial Grundy coloring of $G$ using at least $k$ colors. This completes the correctness proof. \shortversion{For runtime analysis, please see the full version.}

\longversion{
\subparagraph{Runtime Analysis.}
By~\Cref{obs:derandom}, the time taken in Step~1 is $k^{k^2}k^{\OO(\log k)}\cdot n \log^2 n$, and $\vert {\cal Q} \vert \leq k^{k^2}k^{\OO(\log k)}\cdot \log^2 n$. By Theorem~\ref{thm:deg_reduction}, the time taken in Step~2 is $n^{\OO(1)}$.
Notice that the number $\ell$ of bicliques constructed in Step~2 is at most $2k^3$. Also, we have that for every $v\in V(G)$, $d_{G-F}(v)\leq k^3$, where $F=\bigcup_{i\in [\ell]} E(A_i)$.
It implies that the degeneracy of $G-F$ is at most $k^3$. Since $\ell \leq 2k^3$, the runtime to construct ${\cal D}$ in Step~3 is at most $\OO(2^{2k^3} n)$ and $\vert {\cal D}\vert \leq 2^{2k^3}$. For each $\phi\in {\cal Q}$, we construct families ${\cal F}_{\phi,1}, {\cal F}_{\phi,2}, \ldots, {\cal F}_{\phi,k}$.
Here, notice that for any $i\in [k]$ and $b\in \{0,1\}^{\ell}$, $G[Z_i-D_{b}]$ is a subgraph of $G-F$. This implies that the degeneracy of $G[Z_i-D_{b_i}]$ is at most $k^3$. This implies that, by Lemma~\ref{lem:indcover}, for any $i\in [k]$ and $\phi \in {\cal Q}$, $\vert {\cal F}_{\phi,i} \vert \leq 2^{\ell} \binom{k(k^3+1)}{k} \cdot 2^{o(k(k^3+1))} \cdot \log n \leq 2^{\OO(k^4)} \log n$.

Also, the run time to compute ${\cal F}_{\phi,i}$ is at most $2^{\OO(k^4)} n \log n$. Since, $\vert {\cal Q} \vert \leq k^{k^2}k^{\OO(\log k)}\cdot \log^2 n$, the run time to compute ${\cal F}_{\phi,i}$ for all choices of $\phi$ and $i$ is upper bounded by $2^{\OO(k^4)} n \log^3 n$. For any $\phi\in {\cal Q}$, the cardinality of $\mathcal{F}_{\phi,1} \times \mathcal{F}_{\phi,2} \times \ldots \times \mathcal{F}_{\phi,k} $ is at most $2^{\OO(k^5)} (\log n)^k$ which is upper bounded by $2^{\OO(k^5)}n$. Thus, the time taken by Step~4 is upper bounded by $2^{\OO(k^5)}n^2$. By summing up, the total running time of our algorithm is at most $2^{\OO(k^5)}n^{\OO(1)}$. Thus, we proved Theorem~\ref{thm:pgc}.
}



\fi


\section{FPT Algorithm for Grundy Coloring on $K_{ij}$-free Graphs}
This section aims to prove Theorem~\ref{thm:KijGrundy}. Consider fixed $i,j \in \mathbb{N}\setminus \{0\}$, where $i \geq j$. Recall that a graph is $K_{i,j}$-free if it does not contain a subgraph isomorphic to $K_{i,j}$ as a subgraph. Let $(G,k)$ be an instance of \KijGrundy, where $G$ is a $K_{i,j}$-free graph. We begin by intuitively explaining the flow of our algorithm. 


Consider a Grundy coloring $c: V(G)\rightarrow [k']$ of $G$, where $k' \geq k$, and for each $z \in [k]$, $c^{-1}(z) \neq \emptyset$. Furthermore, for $z \in [k']$, let $C_z = c^{-1}(z)$. 
Let us focus on the first $k$ color classes, and for $z\in [k]$, arbitrarily fix a vertex $v_z \in C_z$. (Note that $v_z$ has a neighbor in $C_{z'}$, for each $z'\in [z-1]$.) We next intuitively describe construction, for each $z\in [k]$, a set $W_z$ initialized to $\{v_z\}$ as follows. Basically, for each $v_z$, add an arbitrarily chosen neighbor of it in color class $C_{z'}$, for every $z'<z$. We do the above process exhaustively; whenever we add a vertex to a set $W_{z}$, we add an arbitrarily chosen neighbor of it from each color class $C_{z'}$ to $W_{z'}$, where $z'<z$. Then, let $W = \cup_{z \in [k]} W_z$; we will call such a set $W$ a {\em $k$-Grundy set} for $G$ and we will show that such a set of size at most $2^{k-1}$ exists (for yes instances). For $z \in [k]$, let $W_{\leq z} = \cup_{z'\in [z]} W_{z'}$ and $W_{>z} = \cup_{z'\in [k]\setminus [z]} W_{z'}$. Note that $c|_W$ is a $k$-Grundy coloring of $G[W]$. Also, we will show that $G$ has a Grundy coloring using at least $k$ colors if and only if some subgraph of $G$ that has a Grundy coloring using {\em exactly} $k$ colors. We remark that the above result and the existence of $W$ are the same as the results of Gy{\'{a}}rf{\'{a}}s et al.~\cite{DBLP:journals/siamdm/GyarfasKL99} and Zaker~\cite{zaker2006results}, although, for the sake of convenience, we state it here slightly differently. If we can identify all the vertices in $W$ (or some other $k$-Grundy set), then we will be done. The first step of our algorithm will be to use the technique of color coding to randomly color the vertices of $G$ using $k$ colors so that, for each $z\in [k]$, $v\in W_z$ is colored $z$; such a coloring will be a {\em nice coloring} and it will be denoted by $\chi$.

The next step of our algorithm is in-spirit, inspired by the design of FPT algorithms based on computations of ``representative sets''~\cite{DBLP:journals/tcs/Marx09,DBLP:journals/jacm/FominLPS16}. To this end, we will interpret $W$ in a ``tree-like'' fashion. With this interpretation, in a bottom-up fashion, for each $z\in [k]$ and $v \in X_z$, we will compute a family $\C{F}'_{z,v}$, so that, if $v \in W_{z}$, then there will be  $W' \in \C{F}'_{z,v}$ so that $W' \cup W_{> z}$ is also a $k$-Grundy set for $G$. We will now formalize the above steps.
\vspace{-3mm}
\subparagraph*{Grundy Tree \& Grundy Witness.}

We begin by reinterpreting/defining a Grundy witness.


\begin{definition}\label{def:grundy-tree-new}{\rm
For each $k \in \mathbb{N}\setminus \{0\}$, we (recursively) define a pair $(T_k, \ell_k: V(T_k) \rightarrow [k])$, called a {\em $k$-Grundy tree}, where $T_k$ is a tree and $\ell_k$ is a {\em labelling} of $V(T_k)$, as follows\longversion{ (see Fig.~\ref{fig:GrundySet-tree} (ii))}:
\begin{enumerate}
\item $T_1 = (\{r_1\}, \emptyset)$ is a tree with exactly one vertex $r_1$ (which is also its root), and $\ell_1(r_1) = 1$.

\item Consider any $k \geq 2$, we (recursively) obtain $T_{k}$ as follows. For each $z \in [k-1]$, let $(T_z,\ell_z)$ be the $z$-Grundy tree with root $r_z$. We assume that for distinct $z,z' \in [k-1]$, $T_z$ and $T_{z'}$ have no vertex in common, which we can ensure by renaming the vertices.\footnote{For the sake of notational simplicity we will not explicitly write the renaming of vertices used to ensure pairwise vertex disjointness of the trees. This convention will be followed throughout this section.} We set $V(T_k) = \big(\cup_{z \in [k-1]}V(T_{z})\big) \cup \{r_k\}$ and $A(T_k) = \big(\cup_{z \in [k-1]}A(T_{z})\big) \cup \{(r_k, r_{z}) \mid z \in [k-1]\}$. We set $\ell_k(r_k) = k$, and for each $z \in [k-1]$ and $t \in V(T_z)$, we set $\ell_k(t) = \ell_z(t)$.
\end{enumerate}
For $v \in V(T_k)$, $\ell_k(v)$ is the {\em label} of $v$ in $(T_k, \ell_k)$, and the elements in $[k]$ are {\em labels} of $T_k$.
}\end{definition}

\shortversion{\begin{observation}[$\spadesuit$]}
\longversion{\begin{observation}}
\label{obs:grundy-pair-size}
For $k \in \mathbb{N}\setminus \{0\}$, for the $k$-Grundy tree $(T_k,\ell_k)$, $|V(T_k)| = 2^{k-1}$.
\end{observation}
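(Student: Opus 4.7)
The plan is to prove the statement by induction on $k$, directly unfolding the recursive construction in Definition~\ref{def:grundy-tree-new}. The recursion gives a clean recurrence for $|V(T_k)|$ in terms of the sizes of the smaller Grundy trees, and the geometric series $\sum_{z=0}^{k-2} 2^z = 2^{k-1} - 1$ will immediately close the argument.

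For the base case $k=1$, the definition tells us that $T_1 = (\{r_1\}, \emptyset)$, so $|V(T_1)| = 1 = 2^{0} = 2^{1-1}$, as required. For the inductive step, I would fix $k \geq 2$ and assume that $|V(T_z)| = 2^{z-1}$ for every $z \in [k-1]$. The recursive construction sets $V(T_k) = \big(\bigcup_{z \in [k-1]} V(T_z)\big) \cup \{r_k\}$, where the trees $T_1, \ldots, T_{k-1}$ are pairwise vertex-disjoint (by the renaming convention stated in the definition) and $r_k$ is a fresh vertex. Therefore
\[
|V(T_k)| \;=\; 1 + \sum_{z=1}^{k-1} |V(T_z)| \;=\; 1 + \sum_{z=1}^{k-1} 2^{z-1} \;=\; 1 + (2^{k-1} - 1) \;=\; 2^{k-1}.
\]

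There is no real obstacle here; the only thing to be careful about is invoking the pairwise vertex-disjointness of the $T_z$'s so that the union $\bigcup_{z \in [k-1]} V(T_z)$ has size exactly $\sum_{z=1}^{k-1} |V(T_z)|$ rather than something smaller. This is precisely what the renaming convention in Definition~\ref{def:grundy-tree-new} guarantees. Thus the induction goes through cleanly and the observation follows.
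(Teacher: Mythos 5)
Your proof is correct and follows the same route as the paper: base case $|V(T_1)|=1$, the recurrence $|V(T_k)| = 1 + \sum_{z\in[k-1]} |V(T_z)|$ coming from the recursive construction, and the geometric series to close the induction. Your added remark about needing pairwise vertex-disjointness of the subtrees to justify the recurrence is a sensible (if minor) clarification of a step the paper leaves implicit.
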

\longversion{
\begin{proof}
Note that $|V(T_1)| = 1 = 2^{0}$, and for each integer $k \geq 2$, $|V(T_k)| = 1 + \sum_{z\in [k-1]} |V(T_{z})|$. Thus we can obtain that $|V(T_k)| = 1+ (2^0 + 2^1 + \cdots + 2^{k-2}) = 1+ (2^{k-1} -1)= 2^{k-1}$.
\end{proof}
}

\shortversion{\begin{observation}[$\spadesuit$]}
\longversion{\begin{observation}}
\label{obs:Grundy-pair-label-count}
Consider $k \in \mathbb{N} \setminus \{0\}$ and the $k$-Grundy tree $(T_k,\ell_k)$. We have $|\ell^{-1}(k)| = 1$ and for each $z \in [k-1]$, $|\ell^{-1}_k(z)| = 2^{k-z-1}$.
\end{observation}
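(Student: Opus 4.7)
The plan is a straightforward induction on $k$. The base case $k=1$ is immediate: $T_1$ consists of a single vertex $r_1$ with $\ell_1(r_1)=1$, so $|\ell_1^{-1}(1)|=1$, and the second part of the claim is vacuous since $[k-1]=\emptyset$.

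For the inductive step, assume the statement holds for every $k' < k$. From Definition~\ref{def:grundy-tree-new} we have the disjoint decomposition
\[
V(T_k) \;=\; \{r_k\} \;\cup\; \bigcup_{z'=1}^{k-1} V(T_{z'}),
\]
with $\ell_k(r_k)=k$ and $\ell_k|_{V(T_{z'})} = \ell_{z'}$. First I would observe that for $z' < k$, no vertex of $T_{z'}$ carries the label $k$ (labels of $T_{z'}$ lie in $[z']$ by an easy sub-induction, or directly from the construction), so $|\ell_k^{-1}(k)| = 1$, which settles the first part.

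For the second part, fix any $z \in [k-1]$. Then
\[
|\ell_k^{-1}(z)| \;=\; \sum_{z'=z}^{k-1} |\ell_{z'}^{-1}(z)|,
\]
since $T_{z'}$ contributes nothing to the label $z$ when $z' < z$. Applying the inductive hypothesis to each $T_{z'}$: the term $z'=z$ contributes $|\ell_z^{-1}(z)|=1$, and each term with $z < z' \leq k-1$ contributes $|\ell_{z'}^{-1}(z)| = 2^{z'-z-1}$. Substituting and reindexing the geometric sum,
\[
|\ell_k^{-1}(z)| \;=\; 1 + \sum_{z'=z+1}^{k-1} 2^{z'-z-1} \;=\; 1 + \sum_{j=0}^{k-z-2} 2^{j} \;=\; 1 + (2^{k-z-1}-1) \;=\; 2^{k-z-1},
\]
as required (with the sum understood as empty when $z=k-1$, giving $2^0=1$).

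There is no real obstacle here: the recursive definition mirrors the claimed counts exactly, and the only quantitative content is the identity $\sum_{j=0}^{m-1} 2^j = 2^m-1$. The one small bookkeeping point worth flagging is to make sure the sum over $z'$ starts at $z'=z$ (to pick up the root of $T_z$, which accounts for the ``$+1$'') rather than at $z'=z+1$, since otherwise the count would be off by one.
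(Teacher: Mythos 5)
Your proof is correct, and it takes a genuinely different route from the paper's. You do an induction on the tree index $k$, directly exploiting the recursive decomposition $V(T_k) = \{r_k\} \cup \bigcup_{z'=1}^{k-1} V(T_{z'})$ from Definition~\ref{def:grundy-tree-new}: the count of label-$z$ vertices in $T_k$ is assembled from the counts in the subtrees $T_z, T_{z+1},\dots,T_{k-1}$, each handled by the inductive hypothesis. The paper instead fixes $k$ and inducts \emph{downward on the label} $z$ within the single tree $T_k$: it observes that every vertex $t$ with $\ell_k(t) > z$ has exactly one child labelled $z$ and that these children exhaust $\ell_k^{-1}(z)$, so $|\ell_k^{-1}(z)| = \sum_{z''>z}|\ell_k^{-1}(z'')|$, after which the same geometric sum $\sum_{j=0}^{m-1}2^j = 2^m-1$ closes the argument. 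Your version is slightly more mechanical and stays closer to the letter of the recursive definition, while the paper's version isolates the cleaner structural fact (a bijection between label-$z$ vertices and vertices of higher label) that is implicitly reused elsewhere in the section; both are short and both are fine. Your bookkeeping remark about starting the sum at $z'=z$ to pick up the root of $T_z$ is exactly the right point to be careful about.
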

\longversion{
\begin{proof}
The claim trivially follows for $z=k$ and $z=k-1$. Suppose for some $z' \in \{2,3,,\cdots, k-1\}$, the claim is true for all $z''\in \{z',z'+1, \cdots, k\}$, and we now prove the statement for $z=z'- 1$. Note that each $t \in V(T_k)$, where $\ell_k(t) > z$, $t$ has exactly one neighbor with label $z$. Moreover, the above describes all the vertices with the label $z$. Thus, we can obtain that $|\ell_k^{-1}(z)| \leq 1+\sum_{z'' \in \{z',z'+1, \cdots, k-1\}}2^{k-z''-1} = 1+ 2^0+2^1+ \cdots +2^{k-z'-1} = 2^{k-z'} = 2^{k-z-1}$.
\end{proof}
}

Next, we define the notion of {\em $k$-Grundy witness} in a graph $G$.

\begin{definition}\label{def:grundy-witness-uncol}{\rm
Consider $k \in \mathbb{N}\setminus \{0\}$ and a graph $G$. A {\em $k$-Grundy witness} for $G$ is a function $\omega: V(T_k) \rightarrow V(G)$, where $(T_k,\ell_k)$ is the $k$-Grundy tree, such that: 1) for each $z \in [k]$, $\{\omega(t) \mid t \in V(T_k) \mbox{ and } \ell_k(t) = z\}$ is an independent set in $G$, 2) for each $t,t' \in V(T_k)$, if $\ell_k(t) \neq \ell_k(t')$ then $\omega(t) \neq \omega(t')$, and 3) for each $(t,t') \in A(T_k)$, we have $\{\omega(t), \omega(t')\} \in E(G)$.
%
%
%
}\end{definition}

\longversion{
\begin{figure}[t]
\centering
\includegraphics[scale=0.8]{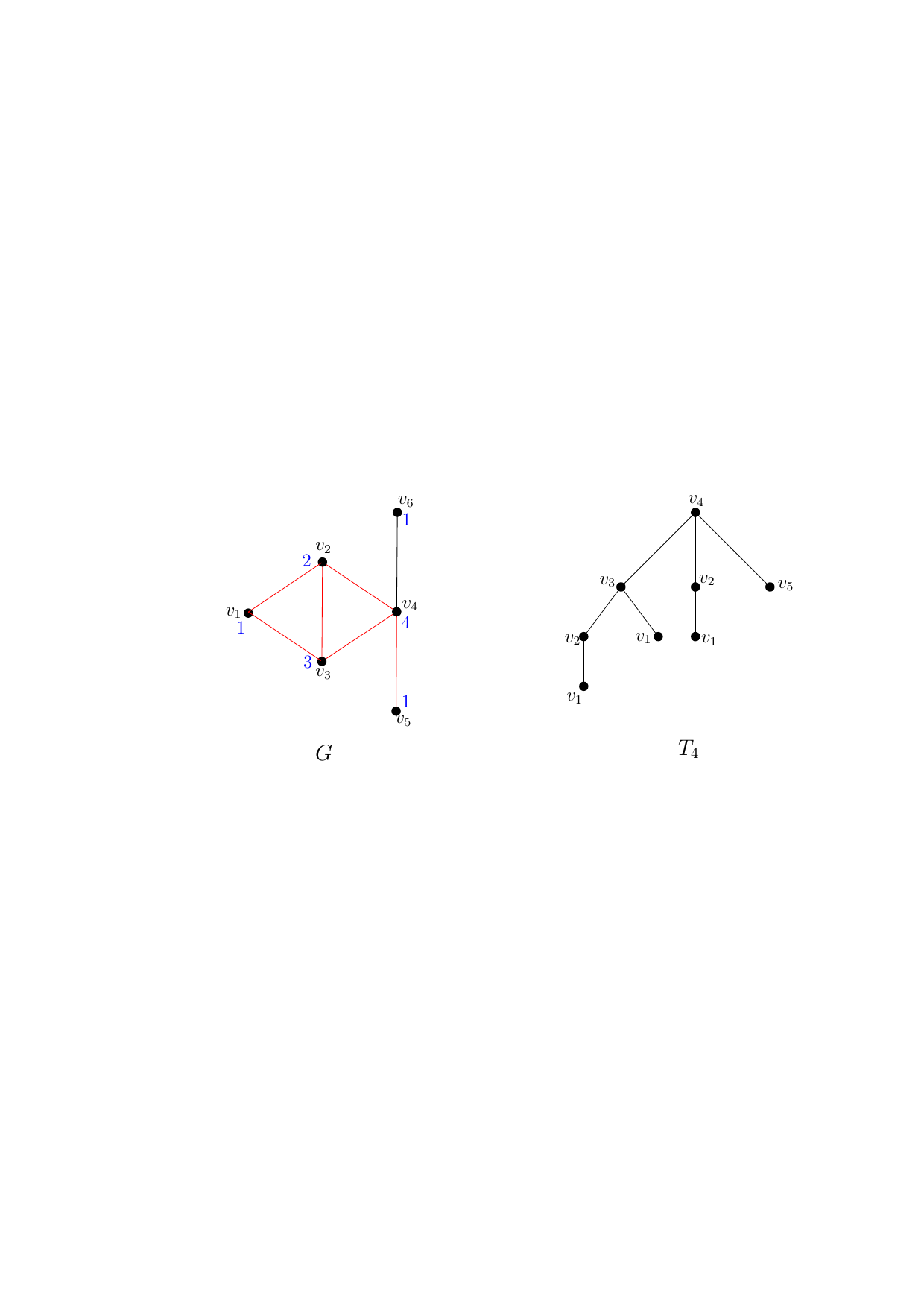}
\caption{An illustration of a graph $G$ that admits a $4$-Grundy coloring (on the left) and a $4$-Grundy-witness $\omega$ (on the right).}
\label{fig:Grundy-set}
\end{figure}
}

Recall that for $k \in \mathbb{N}\setminus \{0\}$, for the $k$-Grundy tree $(T_k, \ell_k)$, $T_k$ is the tree obtained by adding a root vertex $r_k$ attached to the roots of (pairwise vertex disjoint) trees $T_{k-1}, T_{k-2}, \cdots, T_1$, where for each $z \in [k-1]$, $(T_z,\ell_z)$ is the $z$-Grundy tree. We have the following observation.

\shortversion{\begin{observation}[$\spadesuit$]}
\longversion{\begin{observation}}
\label{obs:grundy-witness-closure}
Consider $k \in \mathbb{N}\setminus \{0\}$, a graph $G$ and a $k$-Grundy witness $\omega: V(T_k) \rightarrow V(G)$ for $G$. For each $z \in [k]$, $\omega|_{V(T_{z})}$ is a $z$-Grundy witness for $G$.
\end{observation}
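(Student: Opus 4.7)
The plan is a direct structural verification that uses no induction and no properties of the host graph $G$: the conclusion follows purely from the recursive construction of $(T_k,\ell_k)$ together with the hereditary nature of the three conditions in Definition~\ref{def:grundy-witness-uncol}. First I would record three structural facts that are immediate from Definition~\ref{def:grundy-tree-new}, applied to the specific copy of $T_z$ used inside $T_k$ during the recursion (the footnoted renaming convention fixes what this copy is): \textbf{(a)} $V(T_z)\subseteq V(T_k)$; \textbf{(b)} $A(T_z)\subseteq A(T_k)$, which can be read off directly from the formula $A(T_k)=\big(\cup_{z'\in[k-1]}A(T_{z'})\big)\cup\{(r_k,r_{z'})\mid z'\in[k-1]\}$; and \textbf{(c)} the labels agree on the overlap, i.e., $\ell_k(t)=\ell_z(t)$ for every $t\in V(T_z)$. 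The labels appearing on $V(T_z)$ are exactly $[z]\subseteq[k]$.

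Armed with (a)--(c) I would verify the three conditions of Definition~\ref{def:grundy-witness-uncol} for $\omega|_{V(T_z)}$ one at a time. For the independence condition, fix any label $z'\in[z]$; by (c) the set $\{\omega(t)\mid t\in V(T_z),\ \ell_z(t)=z'\}$ equals $\{\omega(t)\mid t\in V(T_z),\ \ell_k(t)=z'\}$, which is a subset of $\{\omega(t)\mid t\in V(T_k),\ \ell_k(t)=z'\}$, and the latter is independent in $G$ because $\omega$ is a $k$-Grundy witness; a subset of an independent set is independent. For the injectivity-across-labels condition, pick $t,t'\in V(T_z)$ with $\ell_z(t)\neq\ell_z(t')$; by (c) also $\ell_k(t)\neq\ell_k(t')$, so $\omega(t)\neq\omega(t')$ by the corresponding property of $\omega$ on $T_k$. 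For the edge condition, pick any arc $(t,t')\in A(T_z)$; by (b) we have $(t,t')\in A(T_k)$, and the corresponding edge condition for $\omega$ yields $\{\omega(t),\omega(t')\}\in E(G)$.

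The only conceptual subtlety, and in my view the main thing to be careful about, is the bookkeeping around the renaming convention in the footnote of Definition~\ref{def:grundy-tree-new}: the ``same'' tree $T_z$ is reused (with renamed vertices) inside $T_k, T_{k+1}, \ldots$, so one must pin down that the $T_z$ referred to in the statement of the observation is the specific vertex-disjoint copy embedded inside $T_k$ during the recursive construction. Once this is fixed, the proof reduces to the three bullet verifications above, and no real obstacle remains; in particular, $K_{i,j}$-freeness plays no role here.
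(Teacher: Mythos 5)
Your proof is correct, and it takes a genuinely different route from the paper's. The paper proves the observation by a downward induction on $z$: starting from $z=k$ (trivial), it assumes the claim for $z+1$ and deduces it for $z$ by noting that $T_z$ sits inside $T_{z+1}$ and restricting further. Your argument dispenses with the induction entirely and instead verifies the three conditions of Definition~\ref{def:grundy-witness-uncol} directly, using only the structural facts (a) $V(T_z)\subseteq V(T_k)$, (b) $A(T_z)\subseteq A(T_k)$, and (c) $\ell_k|_{V(T_z)}=\ell_z$, all of which fall out of the recursive construction in Definition~\ref{def:grundy-tree-new}. Since each of the three defining conditions for a Grundy witness is manifestly hereditary under passing to a labelled subtree, the direct approach is both shorter and arguably clearer: it makes it explicit that the claim has nothing to do with the graph $G$ or any inductive structure and is purely a consequence of ``subtree with matching labels.'' Your explicit flagging of the renaming subtlety (which copy of $T_z$ inside $T_k$ is meant) is appropriate; the paper's inductive chain actually tracks a nested copy of $T_z$ that differs from the direct child of $r_k$, but since all copies are isomorphic with identical labellings, both readings support the same proof, and you are right that once the copy is pinned down there is no remaining obstacle.
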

\longversion{
\begin{proof}
Let $\omega: V(T_k) \rightarrow V(G)$ be a $k$-Grundy witness for $G$. The claim trivially follows for $z =k$. Suppose that for some $z' \in \{2,3,\cdots k\}$, the claim is true for all $z\geq z'$, where $z\leq k$. Next, we prove the statement for $z=z'-1$. Note that $\omega'=\omega|_{V(T_{z+1})}$ is a $z$-Grundy witness for $G$. Also, $\omega'|_{V(T_{z})} = \omega|_{V(T_{z})}$.

By Definition \ref{def:grundy-witness-uncol}, for each $\what{z} \in [z+1]$, $\{\omega'(t) \mid t \in V(T_{z+1}) \mbox{ and } \ell_{z+1}(t) = \what{z}\}$ is an independent set in $G$. Thus, for each $\what{z} \in[z]$, $\{\omega(t) \mid t \in V(T_z) \mbox{ and } \ell_z(t) = \what{z}\}$ is an independent set in $G$ (see Definition \ref{def:grundy-tree-new}). Since $T_z$ is a subtree of the tree $T_k$, we can obtain that item $2$, and $3$ of Definition \ref{def:grundy-witness-uncol} holds. Hence, $\omega|_{V(T_{z})}$ is a $z$-Grundy witness for $G$.
\end{proof}
}

The next observation is a partial Grundy counterpart of~\Cref{prop:extension}.


\shortversion{\begin{observation}[$\spadesuit$]}
\longversion{\begin{observation}}
\label{obs:induced-grundy-col}
Consider a graph $G$, any induced subgraph $\what{G}$ of it, and an integer $k \in \mathbb{N}$. If $\what{G}$ has a Grundy coloring that uses exactly $k$ colors, then $G$ has a Grundy coloring that uses at least $k$ colors.
\end{observation}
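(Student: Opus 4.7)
The plan is to exploit the ordering-based definition of Grundy coloring: a coloring $c: V(G)\rightarrow [k']$ is a Grundy coloring if and only if there is an ordering of the vertices under which the first-fit greedy algorithm produces $c$. Equivalently, for every vertex $v$ with $c(v)=i$ and every $j<i$, the vertex $v$ has a neighbor colored $j$. I will construct a specific ordering of $V(G)$ that starts with a ``witnessing'' order for $\what{c}$ and then appends the remaining vertices, and argue that first-fit on this ordering extends $\what{c}$ to a Grundy coloring of $G$ with at least $k$ colors.

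Let $\what{c}$ be a Grundy coloring of $\what{G}$ using exactly $k$ colors, and fix an ordering $\sigma = (u_1, u_2, \ldots, u_m)$ of $V(\what{G})$ under which the first-fit rule produces $\what{c}$. Let $(u_{m+1}, \ldots, u_n)$ be an arbitrary ordering of $V(G)\setminus V(\what{G})$, and let $\sigma^{\star} = (u_1, \ldots, u_m, u_{m+1}, \ldots, u_n)$. Run first-fit on $G$ following $\sigma^{\star}$ and let $c^{\star}$ be the resulting coloring; by construction $c^{\star}$ is a Grundy coloring of $G$, so it suffices to show that $c^{\star}$ uses at least $k$ colors, which we do by proving $c^{\star}(u_i) = \what{c}(u_i)$ for all $i \in [m]$.

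The above equality will follow by induction on $i$. When first-fit processes $u_i$ along $\sigma^{\star}$, the already-colored vertices are $u_1, \ldots, u_{i-1}$, all of which lie in $V(\what{G})$. Here the key property is that $\what{G}$ is an \emph{induced} subgraph of $G$, so the neighbors of $u_i$ inside $\{u_1, \ldots, u_{i-1}\}$ are identical in $G$ and in $\what{G}$. By the inductive hypothesis these neighbors carry the same colors under $c^{\star}$ as under $\what{c}$, and hence the smallest color available to $u_i$ is the same in both cases, giving $c^{\star}(u_i) = \what{c}(u_i)$. Since $\what{c}$ uses $k$ colors on $V(\what{G})$, so does $c^{\star}$ on $V(\what{G}) \subseteq V(G)$, yielding the claim.

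I expect no real obstacle: the only subtlety is the need for $\what{G}$ to be \emph{induced}, which ensures that no edge of $G$ between two vertices of $V(\what{G})$ is missing and could otherwise cause first-fit on $G$ to deviate from the execution witnessing $\what{c}$. This is exactly parallel to~\Cref{prop:extension} for partial Grundy coloring, but now the greedy rule is ``first available'' instead of ``any available'', so the correspondence with $\what{c}$ is forced rather than merely available.
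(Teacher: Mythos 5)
Your proof is correct and follows essentially the same approach as the paper's: both extend $\what{c}$ to $G$ by processing the vertices of $V(G)\setminus V(\what{G})$ after those of $\what{G}$ and applying first-fit, relying on the fact that $\what{G}$ being induced guarantees the colors on $V(\what{G})$ are unchanged. The paper phrases this as building intermediate colorings $c_0,\dots,c_{n'}$ one vertex at a time rather than re-running first-fit on a full ordering, but the construction and the reason it works are the same.
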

\longversion{
\begin{proof}
Let $\what{c}: V(\what{G}) \rightarrow [k]$ be a Grundy coloring of $\what{G}$ with exactly $k$ colors. We construct a Grundy coloring $c: V(G) \rightarrow \mathbb{N}$ of $G$ using at least $k$ colors as follows. For each $v \in V(\what{G})$, set $c(v) := \what{c}(v)$. Let $v_1,v_2, \cdots, v_{n'}$ be an arbitrarily fixed order of vertices in $V(G) \setminus V(\what{G})$. Let $G_0 =\what{G}$, and for each $p \in [n']$, $G_p = G[V(\what{G} \cup \{v_1,v_2, \cdots, v_p\})]$. We will iteratively create a Grundy coloring $c_p$ of $G_p$ using at least $k$ colors (in increasing values of $p$) as follows. Note that $c_0=\what{c}$ is already a Grundy coloring of $G_0$ that uses at least $k$ colors. Consider $p \in [n'] \setminus \{0\}$, and assume that we have already computed a Grundy coloring $c_{p-1}: V(G_{p-1}) \rightarrow \mathbb{N}$ of $G_{p-1}$ that uses $k' \geq k$ colors. For each $z \in [k']$, let $V_z = c_{p-1}^{-1}(z)$. For each $v \in V(G_{p-1})$, we set $c_p(v) := c_{p-1}(v)$. If the vertex $v_p$ has a neighbor in each of sets $V_1,V_2, \cdots, V_{k'}$, i.e., if for each $z \in [k']$, $N_G(v_p) \cap V_z \neq \emptyset$, then set $c_p(v_p) := k'+1$. Otherwise, let $z^* \in [k']$ be the smallest number such that $N_G(v_p) \cap V_{z^*} = \emptyset$, and set $c_p(v_p) := z^*$. Notice that by construction, $c_p$ is a Grundy coloring of $G_p$ using at least $k$ colors. From the above discussions, we can obtain that $c_{n'}$ is a Grundy coloring of $G=G_{n'}$ using at least $k$ colors. 
\end{proof}
}

In the following two lemmas, we show that the existence of a $k$-Grundy witness for a graph is equivalent to the graph admitting a Grundy coloring with at least $k$ colors.

\shortversion{\begin{lemma}[$\spadesuit$]}
\longversion{\begin{lemma}}
\label{lem:grundy-eq-witness-one}
For any $k \in \mathbb{N} \setminus \{0\}$ and a graph $G$, if $G$ has a $k$-Grundy witness, then $G$ has a Grundy coloring with at least $k$ colors.
\end{lemma}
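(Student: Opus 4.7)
The plan is to convert the $k$-Grundy witness $\omega: V(T_k) \to V(G)$ into a Grundy coloring of the induced subgraph $G[\omega(V(T_k))]$ using exactly $k$ colors, and then invoke \Cref{obs:induced-grundy-col} to lift this to a Grundy coloring of $G$ using at least $k$ colors.

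Concretely, I will define $c: \omega(V(T_k)) \to [k]$ by setting $c(\omega(t)) = \ell_k(t)$ for every $t \in V(T_k)$. The first step is to check that $c$ is well-defined: if $\omega(t)=\omega(t')$ then we must have $\ell_k(t)=\ell_k(t')$, which is exactly the contrapositive of condition 2 in \Cref{def:grundy-witness-uncol}. Next, $c$ is a proper coloring of $G[\omega(V(T_k))]$ by condition~1 of \Cref{def:grundy-witness-uncol}, since for every label $z \in [k]$ the set $\{\omega(t)\mid \ell_k(t)=z\}$, which is exactly $c^{-1}(z)$, is independent in $G$. Finally, each color class is non-empty because $T_k$ contains a vertex of every label in $[k]$ (by \Cref{obs:Grundy-pair-label-count}).

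The core step will be verifying the Grundy property, namely that for every $v \in \omega(V(T_k))$ with $c(v)=z$ and every $z' \in [z-1]$, $v$ has a neighbor coloured $z'$. For this I will exploit the recursive structure of $T_k$: by \Cref{def:grundy-tree-new}, every vertex $t \in V(T_k)$ with $\ell_k(t)=z$ is the root of a subtree isomorphic to $T_z$, hence in $T_k$ it has children $r_{z-1}, r_{z-2}, \ldots, r_1$ which are themselves the roots of the attached $T_{z-1}, T_{z-2}, \ldots, T_1$, respectively. Thus for every $z' \in [z-1]$ there is an arc $(t,t') \in A(T_k)$ with $\ell_k(t')=z'$, and by condition 3 of \Cref{def:grundy-witness-uncol} we get $\{\omega(t),\omega(t')\} \in E(G)$ with $c(\omega(t'))=z'$. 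This shows $c$ is a Grundy coloring of $G[\omega(V(T_k))]$ using exactly $k$ colors.

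With $c$ in hand, \Cref{obs:induced-grundy-col} applied to the induced subgraph $\what{G}=G[\omega(V(T_k))]$ immediately produces a Grundy coloring of $G$ using at least $k$ colors, completing the proof. I do not anticipate a real obstacle here: the only subtlety is recognising that condition 2 is precisely what makes $c$ well-defined as a function on $\omega(V(T_k))$ (different preimages of the same vertex must share a label), and that the recursive definition of $T_k$ forces every vertex of label $z$ to witness all smaller labels among its children.
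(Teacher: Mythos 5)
Your proof is correct, and it takes a mildly different route from the one in the paper. The paper proves the statement by induction on $k$: it first establishes Observation~\ref{obs:grundy-witness-closure} (that restricting a $k$-Grundy witness to $V(T_z)$ yields a $z$-Grundy witness), then uses the inductive hypothesis to show that every non-root vertex of $\what{V}_\omega$ sees all smaller colors, and handles the root $\omega(r_k)$ by a separate explicit check via condition~3. You instead isolate the single structural fact that drives everything --- every $t \in V(T_k)$ with $\ell_k(t) = z$ is the root of a subtree isomorphic to $T_z$, hence has exactly one child of each label $z' \in [z-1]$ --- and then verify the Grundy property for \emph{every} vertex uniformly and directly from conditions~1--3 of Definition~\ref{def:grundy-witness-uncol}, with no appeal to Observation~\ref{obs:grundy-witness-closure} or an outer induction. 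Your structural fact is not stated as a lemma in the paper, but it follows from Definition~\ref{def:grundy-tree-new} by a one-line induction (the root of $T_k$ has label $k$, and any other vertex lives in one of the attached copies $T_{z'}$ with its rooted subtree preserved). The rest of your argument (well-definedness of $c$ from condition~2, propriety from condition~1, non-emptiness of the color classes from Observation~\ref{obs:Grundy-pair-label-count}, and lifting to $G$ via Observation~\ref{obs:induced-grundy-col}) matches the paper. Net effect: your proof is a bit more direct and avoids one auxiliary observation, at the cost of stating an unproved (but easy) structural property of $T_k$.
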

\longversion{
\begin{proof}
Consider a graph $G$ and any $k \in \mathbb{N} \setminus \{0\}$. For a $k$-Grundy witness $\omega: V(T_k) \rightarrow V(G)$ of $G$, let $\what{V}_{\omega} = \{\omega(t) \mid t \in V(T_k)\}$, and for each $z \in [k]$, let $\what{V}_{\omega,z} = \{\omega(t) \mid t \in V(T_k) \mbox{ and } \ell_k(t) = z\}$. Note that from item 2 of Definition~\ref{def:grundy-witness-uncol}, $\what{V}_{\omega,1}, \what{V}_{\omega,2}, \cdots, \what{V}_{\omega,k}$ is a partition of $\what{V}_\omega$, where none of the parts are empty. Let $c_\omega: \what{V}_\omega \rightarrow [k]$ be the function such that for each $z \in [k]$ and $v \in \what{V}_{\omega,z}$, we have $c_\omega(v) = q$.

For each $k \in \mathbb{N} \setminus \{0\}$ and a $k$-Grundy witness $\omega: V(T_k) \rightarrow V(G)$ of $G$, we will prove by induction (on $k$) that $c_\omega$ is Grundy coloring of $G[\what{V}_\omega]$ using $k$ colors. The above statement, together with Observation~\ref{obs:induced-grundy-col}, will give us the desired result.

The base case is $k=1$, where $T_1$ has exactly one vertex, $r_1$. For any $1$-Grundy witness, $\omega$ of $G$, note that $c_\omega(\omega(r_1)) = 1$ is a Grundy coloring for $G[\{\omega(r_1)\}]$ using $1$ color. Now for the induction hypothesis suppose that for some $\what{k} \in \mathbb{N} \setminus \{0,1\}$, for each $0< k < \what{k}$, the statement is true. Now we will prove the statement for $k = \what{k}$, and to this end, we consider a $k$-Grundy witness $\omega: V(T_k) \rightarrow [k]$, where $r_k$ is the root of $T_k$. Recall that $T_k$ is the tree obtained by adding a root vertex $r_k$ attached to the roots of (pairwise vertex disjoint) trees $T_{k-1}, T_{k-2}, \cdots, T_1$, where for each $z\in [k-1]$, $(T_z,\ell_z)$ is the $z$-Grundy tree, and $T_z$ is rooted at $r_z$. Let $V' = \what{V}_\omega \setminus \what{V}_{\omega,k}$, and consider a vertex $v \in \what{V}_{\omega,z^*}$, where $z^* \in [k-1]$. We will argue that for each $z'\in [z^*-1]$, $N_G(v)\cap \what{V}_{\omega, z'} \neq \emptyset$.
Note that there must exists $z \in [k-1]$ and $t \in V(T_z)$ such that $\omega(t) = v$ and $\ell_z(v) = z^*$, and we arbitrarily choose one such $z$ and $t$. Let $V_z = \{\omega(t) \mid t \in V(T_z)\}$. From Observation~\ref{obs:grundy-witness-closure}, $\omega_z = \omega|_{V(T_z)}$ is a $z$-Grundy witness for $G$. Thus, from our induction hypothesis, $c_{\omega_z} = c_\omega|_{V_z}$ is a Grundy coloring for $G[V_z]$. From the above we can conclude that for each $q'\in [z^*-1]$, $N_G(v) \cap \what{V}_{\omega,z'} \neq \emptyset$. Now consider the vertex $\omega(r_k) = v^*_k$ and any $z \in [k-1]$. Note that $\ell_{k}(r_z) = z$ and from item 3 of Definition~\ref{def:grundy-witness-uncol}, we can obtain that $\{v^*_k, \omega(r_z)\} \in E(G)$. From the above discussions, we can obtain that $c_\omega$ is Grundy coloring of $G[\what{V}_\omega]$ using at least $z$ colors. This concludes the proof.
%
%
%
%
%
%
\end{proof}
}

\shortversion{\begin{lemma}[$\spadesuit$]}
\longversion{\begin{lemma}}
\label{lem:grundy-eq-witness-two}
For any $k \in \mathbb{N} \setminus \{0\}$ and a graph $G$, if $G$ has a Grundy coloring with at least $k$ colors, then $G$ has a $k$-Grundy witness.
\end{lemma}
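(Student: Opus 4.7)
The plan is to prove, by induction on $k$, a strengthened statement from which the lemma will follow immediately: for any graph $G$ with a Grundy coloring $c$ and any vertex $v \in V(G)$ with $c(v) = k$, there exists a $k$-Grundy witness $\omega: V(T_k) \rightarrow V(G)$ with $\omega(r_k) = v$ that additionally satisfies the color--label identity $c(\omega(t)) = \ell_k(t)$ for every $t \in V(T_k)$. The original lemma then drops out at once: if $c$ is a Grundy coloring of $G$ using $k' \geq k$ colors, then every color class $1, 2, \ldots, k'$ is nonempty (a vertex of color $k'$ forces, by the Grundy property, a neighbor of every smaller color, and this cascades downward), so some vertex has color exactly $k$, and we apply the strengthened statement to that vertex.

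The base case $k = 1$ is immediate since $T_1$ consists of a single vertex $r_1$ with $\ell_1(r_1) = 1$, and setting $\omega(r_1) := v$ works. For the inductive step, fix $v$ with $c(v) = k$ and set $\omega(r_k) := v$. Because $c$ is Grundy and $c(v) = k$, the vertex $v$ has a neighbor $v_z$ with $c(v_z) = z$ for every $z \in [k-1]$; fix one such $v_z$ per $z$. Applying the inductive hypothesis to each triple $(G, c, v_z)$ yields a $z$-Grundy witness $\omega_z: V(T_z) \rightarrow V(G)$ with $\omega_z(r_z) = v_z$ and $c(\omega_z(t)) = \ell_z(t)$ for every $t \in V(T_z)$. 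Since Definition~\ref{def:grundy-tree-new} makes the subtrees $T_1, T_2, \ldots, T_{k-1}$ pairwise vertex-disjoint, we may combine the $\omega_z$'s with the assignment $\omega(r_k) := v$ into a single well-defined function $\omega$ on $V(T_k)$.

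The three conditions of Definition~\ref{def:grundy-witness-uncol} for $\omega$ are then routine consequences of the color--label identity that the inductive hypothesis carries along. Vertices of $T_k$ sharing a label $z$ all map into the single color class $c^{-1}(z)$, which is an independent set because $c$ is a proper coloring, yielding condition~1. Distinct labels map to distinct colors and hence to distinct images, yielding condition~2. Every arc of $T_k$ is either an arc of some $T_z$, where adjacency is inherited from $\omega_z$, or of the form $(r_k, r_z)$, where adjacency holds by construction since $v_z \in N_G(v)$; this gives condition~3.

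The only spot that really requires thought is choosing the right induction hypothesis. A naive induction stating merely the existence of a $k$-Grundy witness will not compose, because after gluing the smaller witnesses together one cannot control images across different subtrees, and independence/injectivity across subtrees becomes awkward. Strengthening with the color--label identity $c(\omega(t)) = \ell_k(t)$ solves both gluing problems at once: it lets us pin the root image when invoking the inductive hypothesis on each $v_z$, and it reduces conditions~1 and~2 to properness of $c$.
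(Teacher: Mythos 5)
Your proof is correct and follows essentially the same approach as the paper: both constructions fix the root image at a vertex of color $k$ and then propagate down the $k$-Grundy tree using the Grundy property to find neighbors of the required colors, implicitly or explicitly maintaining the invariant $c(\omega(t)) = \ell_k(t)$. The paper organizes this as an iterative label-by-label pass with pre/post-conditions while you organize it as strong induction on $k$; your version makes the color--label invariant (which the paper uses only implicitly) the explicit induction hypothesis, which is a cleaner way to see why conditions 1 and 2 of the witness definition hold.
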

\longversion{
\begin{proof}
Consider a Grundy coloring $c: V(G) \rightarrow [k']$ of $G$ with $k' \geq k$ colors, and for each $q \in [k']$, let $C_q = c^{-1}(q)$. We construct a Grundy witness $\omega: V(T_k) \rightarrow V(G)$ by processing labels of $T_k$ starting at $k$ and iteratively proceeding to smaller labels as follows while maintaining the below invariants.

\noindent{\em Pre-condition:} When we begin processing a label $q \in [k-1]$, for each $t \in V(T_k)$ with $\ell_k(t) \geq q$, we have fixed the vertex $\omega(t)$.

\noindent{\em Post-condition:} After processing label $q \in [k]$, we have fixed, for each $t \in V(T_k)$ with $\ell_k(t) \geq q$, and $t' \in N_{T_k}[t]$, the vertex $\omega(t')$; and these are the only vertices in $T_k$ for which the vertex in $G$ assigned by $\omega$ is determined.

Note that the pre-condition is vacuously satisfied for $q=k$. Recall that $T_k$ is the tree obtained by adding a root vertex $r_k$ attached to the roots $r_{k-1},r_{k-2}, \cdots, r_1$ of (pairwise vertex disjoint) trees $T_{k-1}, T_{k-2}, \cdots, T_1$, respectively, where for each $q\in [k-1]$, $(T_q,\ell_q)$ is a $q$-Grundy tree. Pick any vertex $v_k \in C_k$, and set $\omega(r_k) := v_k$ and for each $q \in [k-1]$, set $\omega(r_q) := w^k_q$, where $w^k_q$ is an arbitrarily chosen neighbor of $v_k$ from $C_q$ (which exists as $c$ is a Grundy coloring). After the above step, the post-condition is satisfied for $q=k$.


Now we (iteratively, in decreasing order) consider $q \in [k-1]\setminus \{1\}$. From the pre-condition for $q$, we have fixed, for each $t \in V(T_k)$ with $\ell_k(t) \geq q$, the vertex $\omega(t)$. Consider $t \in V(T_k)$ with $\ell_k(t) = q$ and let $v_t = \omega(t)$. Let $\what{T}_q$ be the subtree of $T_k$ rooted at $t$, and let $\what{\ell}_q = \ell_k|_{V(\what{T}_q)}$. Notice that $(\what{T}_q, \what{\ell}_q)$ is a $q$-Grundy tree, where $\what{T}_q$ is the tree obtained from by adding edge between $t$ and the roots $\what{r}_{q-1},\what{r}_{q-2}, \cdots, \what{r}_1$ of $\what{T}_{q-1},\what{T}_{q-2}, \cdots, \what{T}_1$, respectively, where $(\what{T}_{q'}, \ell_k|_{V(\what{T}_{q'})})$ is a $q'$-Grundy tree, for each $q' \in [q-1]$. For each $q' \in [q]$, let $\what{w}^q_{q'}$ be an arbitrarily chosen vertex from $N_G(v) \cap C_{q'}$, and we set $\omega(\what{r}_{q'}) = \what{w}^q_{q'}$. Notice that after the above step, the post-condition is satisfied for $q$, and the pre-condition is satisfied for $q-1$.

After we are done processing each $q \in [k] \setminus \{1\}$, the post-condition for $q=2$ (and the pre-condition of $q=1$ implies that for each $t \in V(T_k)$, we have determined the vertex $\omega(t)$). Moreover, the construction of $\omega$ implies that all the three conditions in Definition~\ref{def:grundy-witness-uncol} are satisfied. This concludes the proof.
\end{proof}
}

We next summarize the result we obtain from the above two lemmas.

\begin{corollary}\label{cor:grundy-witness-eq-coloring}
Consider any $k \in \mathbb{N} \setminus \{0\}$ and a graph $G$. The graph $G$ has a $k$-Grundy witness if and only if $G$ has a Grundy coloring with at least $k$ colors.
\end{corollary}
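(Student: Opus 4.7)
The plan is to derive the corollary immediately from the two lemmas proved just above, which together establish both implications of the claimed equivalence.

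For the forward direction, I would invoke Lemma~\ref{lem:grundy-eq-witness-one}: if $G$ admits a $k$-Grundy witness $\omega: V(T_k) \to V(G)$, then the lemma guarantees that the induced labeling $c_\omega$ on $\what{V}_\omega$ is a proper Grundy coloring of $G[\what{V}_\omega]$ using exactly $k$ colors, and then Observation~\ref{obs:induced-grundy-col} promotes this into a Grundy coloring of $G$ itself that uses at least $k$ colors.

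For the backward direction, I would invoke Lemma~\ref{lem:grundy-eq-witness-two}: given a Grundy coloring $c: V(G) \to [k']$ with $k' \geq k$, the lemma explicitly constructs a $k$-Grundy witness $\omega: V(T_k) \to V(G)$ by processing labels top-down, anchoring $r_k$ at an arbitrary vertex of color class $C_k$, and propagating downward by using the definitional property of Grundy coloring that every vertex of color $q$ has a neighbor in each color class $C_{q'}$ for $q' < q$.

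Since both directions are already established, the corollary follows by combining the two lemmas. There is no main obstacle here: the substantive work, including the inductive argument on the depth of the Grundy tree and the recursive assignment satisfying the independence and labeled-homomorphism conditions of Definition~\ref{def:grundy-witness-uncol}, has been carried out inside the lemmas. The corollary merely packages the equivalence in a single statement that will be convenient to cite when reducing \Grundycol\ to the search for a $k$-Grundy witness in the algorithmic sections that follow.
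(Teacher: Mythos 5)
Your proposal is correct and matches the paper's intent exactly: Corollary~\ref{cor:grundy-witness-eq-coloring} is stated as a direct packaging of Lemma~\ref{lem:grundy-eq-witness-one} (witness implies coloring) and Lemma~\ref{lem:grundy-eq-witness-two} (coloring implies witness), with no further argument needed. The only minor remark is that the internal machinery you mention (constructing $c_\omega$ and invoking Observation~\ref{obs:induced-grundy-col}) lives inside the proof of Lemma~\ref{lem:grundy-eq-witness-one} and need not be re-exposed when citing the lemma's conclusion.
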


\noindent{\bf Color Coding of $G$.} We will next use Observation~\ref{obs:induced-grundy-col} and the above corollary to simplify our job in the following sense. Let $\omega: V(T_k) \rightarrow V(G)$ be a (fixed) $k$-Grundy witness of $G$ (if it exists), where $(T_k,k)$ is a $k$-Grundy tree. Let $\what{V}_{\omega} = \{\omega(t) \mid t \in V(T_k)\}$, and for each $q \in [k]$, let $\what{V}_{\omega,q} = \{\omega(t) \mid t \in V(T_k) \mbox{ and } \ell_k(t) = q\}$. Roughly speaking, our new objective will be to find the vertices in $\what{V}_\omega$ and say that $G[\what{V}_\omega]$ admits a Grundy coloring with at least $k$ colors, and then from Observation~\ref{obs:induced-grundy-col}, conclude that $G$ admits a Grundy coloring with at least $k$ colors. We will use the technique of color coding introduced by Alon et al.~\cite{DBLP:journals/jacm/AlonYZ95}, to color the vertices in $\what{V}_{\omega}$ ``nicely'' as follows. Color each vertex in $G$ uniformly at random using a color from $[k]$, and let $\chi: V(G)\rightarrow [k]$ be this coloring. A {\em nice} coloring is the one where, for each $q \in [k]$, the coloring assigns the color $q$ to all the vertices in $\what{V}_{\omega,q}$. 



We will work with the assumption that $\chi$ is a nice coloring of $G$, and for each $q \in [k]$, let $X_q = \chi^{-1}(q)$. Note that we can eliminate the randomness using~\Cref{obs:derandom}. Our objective will be to look for a $k$-Grundy witness $\what{\omega} : V(T_k) \rightarrow V(G)$, where $(T_k,k)$ is a $k$-Grundy tree, such that for each $q \in [k]$ and $t \in V(T_k)$ with $\ell_k(t) = q$, we have $\what\omega(t) \in X_q$. To this end, we will store a ``Grundy representative family'' for each vertex in a bottom-up fashion, starting from $q=1$. The definition of such a representative is inspired by the $q$-representative families~\cite{DBLP:journals/tcs/Marx09,DBLP:journals/jacm/FominLPS16}, although here we need a ``vectorial'' form of representation. To this end, we introduce the following notations and definitions.

\subparagraph{Grundy Representative Sets.} Recall we have the coloring $\chi$ of $G$ with color classes $X_z = \chi^{-1}(z)$, for $z \in [k]$. A vertex subset $A \subseteq V(G)$ is {\em$\chi$-independent} if for each $z \in [k]$, $A \cap X_z$ is an independent set in $G$. For $p \in \mathbb{N}$, a family of vertex subsets $\C{F}$ is a {\em $p$-family} if each set in $\C{F}$ has size at most $p$ and each $A \in \C{F}$ if $\chi$-independent. We will only be working with vectors whose all entries are from $\mathbb{N}$ without explicitly stating it. For a vector $\wvec{q} = (q_1,q_2,\cdots, q_k)$, $\sm{\wvec{q}}$ denotes the number $\sum_{z\in [q]}q_z$. For a vector $\wvec{q} = (q_1,q_2,\cdots, q_k)$ and $B \subseteq V(G)$, we say that the {\em size} of $B$ is $\wvec{q}$, written as $|B| = \wvec{q}$, if for each $z \in [k]$, $|B \cap X_z| = q_z$. For vertex subsets $A$ and $B$, $A$ {\em fits} $B$ if $A \cup B$ is $\chi$-independent. For two vectors $\wvec{q_1} =(q^1_1,q^1_2,\cdots, q^1_k)$ and $\wvec{q_2} =(q^2_1,q^2_2,\cdots, q^2_k)$, and $\diamond \in \{\leq, \geq, >, <,=\}$, we write $\wvec{q_1} \diamond \wvec{q_2}$ if for each $z\in [k]$, we have $q^1_z \diamond q^2_z$. We next define the notion of {$\wvec{q}$-Grundy representation}.


\begin{definition}\label{def:vec-rep}{\rm
Consider $p \in \mathbb{N}$, a vector $\wvec{q} = (q_1,q_2,\cdots, q_k)$, and a $p$-family $\C{F}$ of vertex subsets of $G$. For a sub-family $\C{F}' \subseteq \C{F}$, we say that $\C{F}'$ {\em $\wvec{q}$-Grundy represents} $\C{F}$, written as $\C{F}' \subseteq^{\wvec{q}}_{grep} \C{F}$, if the following hold. For any set $B$ of size $\wvec{q}$, if there is $A \in \C{F}$ that fits $B$, then there is $A' \in \C{F}'$ that fits $B$. In the above, $\C{F}'$ is a {\em $\wvec{q}$-Grundy representative} for $\C{F}$.
}\end{definition}

Next, we obtain some properties regarding $\wvec{q}$-Grundy representatives.

\shortversion{\begin{observation}[$\spadesuit$]}
\longversion{\begin{observation}}
\label{obs:transitive-vec-rep}
Consider $p \in \mathbb{N}$, a vector $\wvec{q} = (q_1,q_2,\cdots, q_k)$, and any two $p$-families $\C{F}_1$ and $\C{F}_2$. If $\C{F}'_1 \subseteq^{\wvec{q}}_{grep} \C{F}_1$ and $\C{F}'_2 \subseteq^{\wvec{q}}_{grep} \C{F}_2$, then $\C{F}'_1 \cup \C{F}'_2 \subseteq^{\wvec{q}}_{grep} \C{F}_1 \cup \C{F}_2$.
\end{observation}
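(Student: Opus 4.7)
The statement is a routine closure-under-union property of the Grundy representation relation, and the plan is to verify it directly from Definition~\ref{def:vec-rep}. First I would check that $\C{F}'_1 \cup \C{F}'_2$ is a $p$-family: every set in it belongs to either $\C{F}'_1 \subseteq \C{F}_1$ or $\C{F}'_2 \subseteq \C{F}_2$, both of which are $p$-families, so every member has size at most $p$ and is $\chi$-independent.

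Next I would verify the representation condition. Fix an arbitrary vertex subset $B$ with $|B| = \wvec{q}$, and assume that there exists $A \in \C{F}_1 \cup \C{F}_2$ that fits $B$. By definition of union, $A$ lies in $\C{F}_1$ or in $\C{F}_2$ (possibly both). If $A \in \C{F}_1$, then since $\C{F}'_1 \subseteq^{\wvec{q}}_{grep} \C{F}_1$, there exists $A' \in \C{F}'_1$ that fits $B$; if $A \in \C{F}_2$, then since $\C{F}'_2 \subseteq^{\wvec{q}}_{grep} \C{F}_2$, there exists $A' \in \C{F}'_2$ that fits $B$. In either case $A' \in \C{F}'_1 \cup \C{F}'_2$, establishing the required property.

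I do not anticipate any real obstacle here; the only thing to be careful about is to apply the representation hypothesis separately to $\C{F}_1$ and $\C{F}_2$ depending on where the witnessing set $A$ lives, rather than trying to prove a single statement about $\C{F}_1 \cup \C{F}_2$. The observation will be used later to combine partial representative families computed for different ``branches'' of the Grundy tree into a single representative family, so stating it in this additive form is exactly what the subsequent dynamic programming needs.
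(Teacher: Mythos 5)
Your proof is correct and follows essentially the same case-split argument as the paper: fix a set $B$ of size $\wvec{q}$, locate the witnessing $A$ in $\C{F}_1$ or $\C{F}_2$, and apply the corresponding representation hypothesis. The extra preliminary check that $\C{F}'_1 \cup \C{F}'_2$ is a $p$-family is a harmless addition the paper leaves implicit.
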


\longversion{
\begin{proof}
Consider a set $B$ of size $\wvec{q}$ such that there is some set $A\in \C{F}_1\cup \C{F}_2$ that fits $B$. If $A\in \C{F}_1$, then there exists some $A'\in \C{F}'_1$ that fits $B$ as $\C{F}'_1 \subseteq^{\wvec{q}}_{grep} \C{F}_1$. If $A\in \C{F}_2$, then there exists some $A'\in \C{F}'_2$ that fits $B$ as $\C{F}'_2 \subseteq^{\wvec{q}}_{grep} \C{F}_2$. Thus, there exists $A'\in \C{F}'_1\cup\C{F}'_2$ that fits $B$. Hence, $\C{F}'_1 \cup \C{F}'_2 \subseteq^{\wvec{q}}_{grep} \C{F}_1 \cup \C{F}_2$.
\end{proof}
}

Consider $p \in \mathbb{N}$ and $v \in V(G)$. For a family $\C{F}$ over $V(G)$, $\C{F} + v$ denotes the family $\{A\cup \{v\} \mid A \in \C{F} \mbox{ and $A\cup \{v\}$ is $\chi$-independent}\}$. Similarly, $\C{F} - v$ denotes the family $\{A\setminus \{v\} \mid A \in \C{F}\}$. A $p$-family $\C{F}$ is a {\em $(p,v)$-family} if for each $A\in \C{F}$, we have $v \in A$.

\shortversion{\begin{observation}[$\spadesuit$]}
\longversion{\begin{observation}}
\label{obs:transitive-vec-rep-one-vertex}
Consider $p \in \mathbb{N}$, a vector $\wvec{q} = (q_1,q_2,\cdots, q_k)$, a vertex $v \in V(G)$ and a $(p,v)$-family $\C{F}$. Let $\wvec{h}$ be the vector obtained from $\wvec{q}$ by increasing its $\chi(v)$th coordinate by $1$. If $\C{F}' \qrep{\wvec{h}} \C{F}-v$ and $\C{F}'' \qrep{\wvec{q}} \C{F}-v$, then $(\C{F}' + v) \cup (\C{F}'' + v) \qrep{\wvec{q}} \C{F}$.
\end{observation}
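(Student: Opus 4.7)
The plan is to show directly from the definition of $\wvec{q}$-Grundy representation that $(\C{F}'+v)\cup(\C{F}''+v)$ covers every $\chi$-independent $(A,B)$ pair that $\C{F}$ covers, where $|B|=\wvec{q}$ and $A\in\C{F}$ fits $B$. Since every $A\in\C{F}$ contains $v$ (as $\C{F}$ is a $(p,v)$-family), I can write $A=A_0\cup\{v\}$ with $A_0\in\C{F}-v$, and the key observation is that whether $v$ is already in $B$ or not determines which of $\C{F}',\C{F}''$ is the right tool.

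First I would fix an arbitrary set $B$ of size $\wvec{q}$ together with an $A\in\C{F}$ that fits $B$, i.e.\ such that $A\cup B$ is $\chi$-independent. Then I would split on whether $v\in B$. If $v\in B$, then $A_0\cup B$ is $\chi$-independent, so $A_0\in\C{F}-v$ fits $B$; since $|B|=\wvec{q}$ and $\C{F}''\qrep{\wvec{q}}\C{F}-v$, there exists $A''\in\C{F}''$ fitting $B$, and because $v\in B$ already lies in the $\chi$-independent set $A''\cup B$, the augmentation $A''\cup\{v\}$ is $\chi$-independent (hence belongs to $\C{F}''+v$) and still fits $B$. If $v\notin B$, I would instead augment $B$ to $B'=B\cup\{v\}$, noting that $|B'|=\wvec{h}$, that $A_0\cup B'=A\cup B$ is $\chi$-independent, and therefore $A_0$ fits $B'$. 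Applying $\C{F}'\qrep{\wvec{h}}\C{F}-v$ yields $A'\in\C{F}'$ fitting $B'$; then $A'\cup\{v\}\in\C{F}'+v$ and $(A'\cup\{v\})\cup B=A'\cup B'$ is $\chi$-independent, so $A'\cup\{v\}$ fits $B$.

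I expect no real obstacle: the proof is essentially bookkeeping. The one subtle point — and the reason the statement needs \emph{two} representatives, one for $\wvec{q}$ and one for $\wvec{h}$ — is the case split above. When $v\in B$, the ``external'' set already accounts for $v$ in its $\chi(v)$-coordinate, so the residual demand on $\C{F}-v$ is exactly $\wvec{q}$; when $v\notin B$, we must conceptually move $v$ from $A$ into $B$, which bumps the $\chi(v)$-coordinate by one and produces the demand $\wvec{h}$. I would therefore highlight this dichotomy as the conceptual heart of the argument, and keep the verification of $\chi$-independence (which follows in each case from monotonicity of $\chi$-independence under taking subsets of $A\cup B$ or $A'\cup B'$) as a quick final check. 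I would also remark in passing that $(\C{F}'+v)\cup(\C{F}''+v)$ is itself a $p$-family, since sets in $\C{F}-v$ have size at most $p-1$ and the operation $+v$ only keeps $\chi$-independent augmentations.
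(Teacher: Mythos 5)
Your proof is correct and follows essentially the same route as the paper's: fix a set $B$ of size $\wvec{q}$ that some $A\in\C{F}$ fits, split on whether $v\in B$, and invoke $\C{F}''$ (demand $\wvec{q}$) in the first case and $\C{F}'$ (demand $\wvec{h}$, after adding $v$ to $B$) in the second. Your extra sentence confirming that the returned set is genuinely a member of $\C{F}'+v$ or $\C{F}''+v$ (i.e.\ that the $+v$ operation does not discard it) is a welcome explicit check that the paper leaves implicit.
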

\longversion{
\begin{proof}
Consider a set $B$ of size $\wvec{q}$ for which some $A\in \C{F}$ fits $B$. Since $\C{F}$ is a $(p,v)$-family, $v\in A$ and $A\setminus \{v\}\in \C{F}-v$. If $v \notin B$, then $B\cup \{v\}$ is of size $\wvec{h}$, in which case there must exist $A'\in \C{F}'$ that fits $B\cup \{v\}$ as $\C{F}'\qrep{\wvec{h}} \C{F}-v$. Otherwise, $v \in B$, and thus $B\cup \{v\} = B$. For the above case, there is $A'\in \C{F}''$ that fits $B$ as $\C{F}'' \qrep{\wvec{q}} \C{F}-v$. In any of the above two cases, $A'\cup \{v\} \cup B$ is $\chi$-independent, and $A' \cup \{v\} \in (\C{F}' + v) \cup (\C{F}'' + v)$. This concludes the proof.
\end{proof}
}



For a $p_1$-family $\C{F}_1$ and a $p_2$-family $\C{F}_2$, we define a $(p_1+p_2)$-family, $\C{F}_1 \star \C{F}_2 = \{A_1 \cup A_2 \mid A_1 \in \C{F}_1, A_2 \in \C{F}_2, \mbox{ and $A_1 \cup A_2$ is $\chi$-independent}\}$. The following lemma will be helpful in obtaining a $\wvec{q}$-representative for $\C{F}_1 \star \C{F}_2$.



\shortversion{\begin{lemma}[$\spadesuit$]}
\longversion{\begin{lemma}}
\label{lem:rep-compose}
Consider a $p_1$-family $\C{F}_1$, a $p_2$-family $\C{F}_2$, and a vector $\wvec{q} =(q_1,q_2,\cdots, q_k)$, where $\sm{\wvec{q}} + p_1 +p_2 \leq 2^{k-1}$. Let $\C{F}'_1 \subseteq \C{F}_1$ be a $p_1$-family such that for every vector $\wvec{q_1} \geq \wvec{q}$ with $\sm{\wvec{q_1}} \leq \sm{\wvec{q}} +p_2$, $\C{F}'_1 \qrep{\wvec{q_1}} \C{F}_1$. Similarly, consider a $p_2$-family $\C{F}'_2 \subseteq \C{F}_2$ such that for every vector $\wvec{q_2} \geq \wvec{q}$ with $\sm{\wvec{q_2}} \leq \sm{\wvec{q}} +p_1$, $\C{F}'_2 \qrep{\wvec{q_2}} \C{F}_2$. Then, $\C{F}'_1 \star \C{F}'_2 \qrep{\wvec{q}} \C{F}_1 \star \C{F}_2$.
%
\end{lemma}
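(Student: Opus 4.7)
The plan is a standard two-sided swap argument, adapted to the vectorial $\chi$-independent setting. Consider any set $B$ of size $\wvec{q}$ and any $A \in \C{F}_1 \star \C{F}_2$ that fits $B$. By definition, $A = A_1 \cup A_2$ for some $A_1 \in \C{F}_1$, $A_2 \in \C{F}_2$, with $A_1 \cup A_2$ being $\chi$-independent and $A_1 \cup A_2 \cup B$ being $\chi$-independent. The goal is to produce $A'_1 \in \C{F}'_1$ and $A'_2 \in \C{F}'_2$ whose union lies in $\C{F}'_1 \star \C{F}'_2$ and fits $B$.

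First, swap $A_1$ out. Since $A_1 \cup A_2 \cup B$ is $\chi$-independent, in particular $A_2 \cup B$ is $\chi$-independent, so $A_1$ fits $A_2 \cup B$. Let $\wvec{h}$ denote the size vector of $A_2 \cup B$. Because $B \subseteq A_2 \cup B$, we have $\wvec{h} \geq \wvec{q}$; and because $|A_2 \cup B| \leq |A_2| + |B| \leq p_2 + \sm{\wvec{q}}$, we have $\sm{\wvec{h}} \leq \sm{\wvec{q}} + p_2$. The hypothesis on $\C{F}'_1$ gives $\C{F}'_1 \qrep{\wvec{h}} \C{F}_1$, so there exists $A'_1 \in \C{F}'_1$ that fits $A_2 \cup B$; equivalently, $A'_1 \cup A_2 \cup B$ is $\chi$-independent.

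Now swap $A_2$ out. From the previous step, $A_2$ fits $A'_1 \cup B$. Let $\wvec{h'}$ be the size vector of $A'_1 \cup B$. Again $\wvec{h'} \geq \wvec{q}$ and $\sm{\wvec{h'}} \leq \sm{\wvec{q}} + p_1$, so the hypothesis on $\C{F}'_2$ yields $\C{F}'_2 \qrep{\wvec{h'}} \C{F}_2$, hence there exists $A'_2 \in \C{F}'_2$ that fits $A'_1 \cup B$. Thus $A'_1 \cup A'_2 \cup B$ is $\chi$-independent, which implies both that $A'_1 \cup A'_2$ is $\chi$-independent (so $A'_1 \cup A'_2 \in \C{F}'_1 \star \C{F}'_2$) and that it fits $B$. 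Since $B$ was arbitrary, this establishes $\C{F}'_1 \star \C{F}'_2 \qrep{\wvec{q}} \C{F}_1 \star \C{F}_2$.

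There is no real obstacle here: the bound $\sm{\wvec{q}} + p_1 + p_2 \leq 2^{k-1}$ is not used in the argument itself and merely records the regime in which the lemma will be applied (reflecting the size of a $k$-Grundy tree). The only subtlety is to be careful that the hypotheses on $\C{F}'_1$ and $\C{F}'_2$ are quantified over all size vectors in the appropriate range, so that the specific vectors $\wvec{h}$ and $\wvec{h'}$ produced in the two swaps are covered; both checks reduce to bounding the total cardinality of $A_2 \cup B$ (resp.\ $A'_1 \cup B$) by $\sm{\wvec{q}} + p_2$ (resp.\ $\sm{\wvec{q}} + p_1$), which is immediate.
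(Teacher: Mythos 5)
Your proof is correct and follows essentially the same two-sided swap argument as the paper: replace $A_1$ by $A'_1$ using the representative property against $A_2\cup B$, then replace $A_2$ by $A'_2$ using the representative property against $A'_1\cup B$, with the same size-vector bounds justifying that the hypotheses apply. The only cosmetic difference is notational — the paper writes the size vector of $B\cup A_2$ as $\wvec{q}+\wvec{\delta_1}$ with $\wvec{\delta_1}$ explicitly counting the new elements from $A_2$, whereas you simply name it $\wvec{h}$ — but the substance is identical.
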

\longversion{
\begin{proof}
Consider any $B \subseteq V(G)$ of size $\wvec{q}$ for which there is $A \in \C{F}_1 \star \C{F}_2$, such that $A$ fits $B$. As $A \in \C{F}_1 \star \C{F}_2$, there must exist sets $A_1\in \C{F}_1$, $A_2 \in \C{F}_2$, such that $A_1\cup A_2 = A$.


Let $\wvec{\delta_1}= (\delta^1_z = |(A_2 \cap X_z) \setminus B|)_{z \in [k]}$. Note that $|B\cup A_2| = \wvec{q} +\wvec{\delta_1}$, $A_1$ fits $B\cup A_2$ and $\sm{\wvec{q}} +\sm{\wvec{\delta_1}}\leq\sm{\wvec{q}} + p_2$. 
By the premise of the lemma, there exist $A'_1 \in \C{F}'_1$ such that $A'_1$ fits $B \cup A_2$, as $\C{F}'_1 \qrep{\wvec{q}+\wvec{\delta_1}} \C{F}_1$. The above implies that $A_2$ fits $B \cup A'_1$, where $A'_1\in \C{F}'_1$. Let $\wvec{\delta_2} = (\delta^2_z = |(A'_1 \cap X_z) \setminus B|)_{z \in [k]}$, and note that $|B\cup A'_1| = \wvec{q} +\wvec{\delta_2}$, $A_2$ fits $B\cup A'_1$ and $\sm{\wvec{q}} +\sm{\wvec{\delta_2}}\leq\sm{\wvec{q}} + p_1$. Again, as $\C{F}'_2 \qrep{\wvec{q}+\wvec{\delta_2}} \C{F}_2$, there exists $A'_2 \in \C{F}'_2$ such that $A'_2$ fits $B\cup A'_1$. The above discussions imply that, $A'_1 \in \C{F}'_1$, $A'_2 \in \C{F}'_2$, and thus $A'_1\cup A'_2 \in \C{F}'_1\star \C{F}'_2$, where $A'_1\cup A'_2$ fits $B$. This concludes the proof.
\end{proof}
}

Recall that $G$ is a $K_{i,j}$-free graph, where $i \geq j$. Consider any computable function $f(k)$. Let $\eta_{f(k)} := i \cdot f(k) \cdot k$; where we skip the subscript $f(k)$ when the context is clear. Also, for $p\in \mathbb{N}$, let $\alpha_p := \alphaval$; again we skip the subscript $p$, when the context is clear. 
We next state the main lemma, which lies at the crux of our algorithm.

\shortversion{\begin{lemma}[$\spadesuit$]}
\longversion{\begin{lemma}}
\label{lem:grundy-rep}
Consider any computable function $f: \mathbb{N}\rightarrow \mathbb{N}\setminus \{0\}$. There is an algorithm that takes as input $k \in \mathbb{N}\setminus \{0\}$, $p \in \mathbb{N}$, a vector $\wvec{q} = (q_1,q_2,\cdots, q_k)$, and a $p$-family $\C{F}$ of vertex subsets of a $K_{i,j}$-free graph $G$ on $n$ vertices with a coloring $\chi: V(G) \rightarrow [k]$, where $p + \sm{\wvec{q}} \leq f(k)$. In time bounded by $\C{O}(\alpha^{2p + \sm{\wvec{q}}} \cdot p \cdot |\C{F}|)$ we can find $\C{F}' \subseteq \C{F}$ with at most $\alpha^{2p + \sm{\wvec{q}}}$ sets such that $\C{F}' \subseteq^{\wvec{q}}_{\sf grep} \C{F}$.
\end{lemma}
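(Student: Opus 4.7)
The plan is to prove \Cref{lem:grundy-rep} by strong induction on $2p + \sm{\wvec{q}}$. The base case is $p = 0$: every set in $\C{F}$ is empty (and trivially $\chi$-independent), so $|\C{F}| \leq 1$ and returning $\C{F}' := \C{F}$ satisfies both the representation requirement and the size bound trivially.

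For the inductive step with $p \geq 1$, the strategy is to identify a small set $S \subseteq V(G)$ of ``pivot'' vertices such that every set of $\C{F}$ that is essential for the representation either contains a pivot or can be safely replaced by a single leftover witness. I will proceed greedily: while some vertex $v$ lies in at least a threshold $\tau := (p\eta)^{i}$ sets of the current residual $\C{F}$, add $v$ to $S$ and split off $\C{F}_v := \{A \in \C{F} : v \in A\}$ for later processing. Either this greedy process exhausts $\C{F}$ with $|S| \leq (p\eta)^{i+1}$, or it stalls on a residual family where every vertex lies in fewer than $\tau$ sets; the crucial claim in the stalled regime is that keeping a single witness set $A^* \in \C{F}$ suffices to represent the remainder. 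For each pivot $v \in S$, the family $\C{F}_v - v$ is a $(p-1)$-family, so by the inductive hypothesis (with measure decreased by $2$ and $1$ respectively), I obtain $\C{F}'_{v,1} \qrep{\wvec{q}} \C{F}_v - v$ and $\C{F}'_{v,2} \qrep{\wvec{h}} \C{F}_v - v$, where $\wvec{h}$ is $\wvec{q}$ with its $\chi(v)$th coordinate incremented. By \Cref{obs:transitive-vec-rep-one-vertex}, $(\C{F}'_{v,1} + v) \cup (\C{F}'_{v,2} + v)$ is a $\wvec{q}$-Grundy representative of $\C{F}_v$; aggregating over all $v \in S$ via \Cref{obs:transitive-vec-rep} and including $A^*$ when needed yields $\C{F}'$ of size at most $|S| \cdot 2\alpha^{2(p-1) + \sm{\wvec{q}} + 1} + 1 \leq \alpha^{2p + \sm{\wvec{q}}}$, where the arithmetic relies on the choice $\alpha = 3k(p\eta)^{i+1}$.

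The main obstacle will be the structural claim that a single witness $A^*$ suffices in the stalled regime, which is where $K_{i,j}$-freeness becomes indispensable. I plan to argue by contradiction: if $A^*$ alone does not represent the residual family, then for many other sets $A_\ell$ in the residual family there must be a certificate $B_\ell$ of size $\wvec{q}$ that $A_\ell$ fits but $A^*$ does not, meaning some vertex of $A^*$ is adjacent within a common color class to some vertex of $B_\ell$. Using the low vertex frequency $\tau$ of the stalled regime together with the size bound $p$ per set, a greedy extraction yields a pairwise disjoint sub-family $A_{\ell_1}, \ldots, A_{\ell_m}$ together with an associated collection of ``blocker'' vertices drawn from the $B_{\ell_s}$. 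Applying a K\H{o}v\'ari--S\'os--Tur\'an style double count to the bipartite adjacency between these pairwise disjoint $A_{\ell_s}$ and their blockers, and exploiting the polynomial slack built into $\eta = i \cdot f(k) \cdot k$ and $\alpha = 3k(p\eta)^{i+1}$, I will extract $i$ vertices on one side commonly adjacent to $j$ vertices on the other, forming a forbidden $K_{i,j}$ subgraph in $G$---the desired contradiction. The claimed running time then follows by recomputing vertex frequencies in $O(p \cdot |\C{F}|)$ at each recursive call and charging against the recursion tree, whose total size is bounded by the final $|\C{F}'|$ bound $\alpha^{2p + \sm{\wvec{q}}}$.
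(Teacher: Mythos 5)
Your plan diverges from the paper's approach and has two concrete gaps that I do not see how to repair.

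\emph{The stalled-regime claim is false.} You claim that once no vertex of the residual family lies in $\tau = (p\eta)^{i}$ sets, a single witness $A^{*}$ alone $\wvec{q}$-Grundy-represents the residual. This fails even for $p = 1$: take $G$ a path $v_{1}\cdots v_{m}$ entirely inside $X_{1}$ (certainly $K_{i,j}$-free for $i,j\geq 2$), $\wvec{q}$ the indicator vector of coordinate $1$, and $\C{F}=\{\{v_{1}\},\dots,\{v_{m}\}\}$. Every vertex lies in one set, so the greedy immediately stalls with $S=\emptyset$. But for any chosen $A^{*}=\{v_{r}\}$, the set $B=\{v_{r+1}\}$ (or $\{v_{r-1}\}$) is fit by $\{v_{r+1}\}\in\C{F}$ yet not by $A^{*}$, so $\{A^{*}\}$ is not a $\wvec{q}$-Grundy representative. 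The paper's proof avoids this by never reducing to a single set: in its ``large disjoint subfamily'' case it retains all $\eta = i\cdot f(k)\cdot k$ pairwise disjoint sets $\C{D}^{*}$, and the pigeonhole in its Lemma~\ref{clm:Dstr-almost-qresp} needs all $\eta$ of them — if a $B$ is blocked by \emph{every} set in $\C{D}^{*}$, then $\eta$ distinct blockers $v_{j}\in A_{j}$ are each adjacent (within a shared colour class) to some vertex of the small set $B$, so by counting some $u\in B$ is ``heavy''. With a single $A^{*}$ there is no pigeonhole to run, and your step ``then for many other sets $A_{\ell}$ there must be a certificate $B_{\ell}$'' does not follow from the existence of one bad $B$.

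\emph{The bound on $|S|$ is unjustified.} In your exhausted case you assert $|S|\leq(p\eta)^{i+1}$. Each pivot removes at least $\tau$ sets and those removals are disjoint over iterations, so all you can conclude is $|S|\leq |\C{F}|/\tau$, which depends on $n$, not on $k,p$. The recursion tree then branches an unbounded number of times and the final size bound $\alpha^{2p+\sm{\wvec{q}}}$ collapses. The paper sidesteps this by branching on a \emph{maximal pairwise disjoint} subfamily $\C{D}$: when $|\C{D}|\leq\eta-1$, maximality forces every set of $\C{F}$ to meet $U_{\C{D}}=\cup_{D\in\C{D}}D$, giving a hitting set of size at most $p(\eta-1)$, a quantity depending only on $k,p,i$, and the branching is over $U_{\C{D}}$. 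Your frequency-threshold greedy does not yield any comparable hitting-set structure. The K\H{o}v\'ari--S\'os--Tur\'an-style count you gesture at is indeed the right tool — the paper uses exactly such an argument in Lemma~\ref{clm:size-bound-S} — but it is used there to bound the number of \emph{heavy vertices} relative to the union $U$ of the $\eta$ disjoint sets, not to validate a single-representative claim. You should restructure around a maximal disjoint subfamily and keep either its union (small case) or $\eta$ of its members plus the heavy vertices (large case), branching only on those.
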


\longversion{We prove the above lemma in~\Cref{sec:Grundy-rep}.} In the remainder of this section, we prove Theorem~\ref{thm:KijGrundy}, assuming the correctness of~\Cref{lem:grundy-rep}.

\subparagraph{\bf Some Useful Notations.} For $z\in \mathbb{N} \setminus \{0\}$ and $z' \in [z]$, let $\gamma_{z,z'}$ be the number of vertices with label $z'$ in the $z$-Grundy tree $(T_z,\ell_z)$, i.e., $\gamma_{z,z'} = |\ell_{z}^{-1}(z')|$ (see~\Cref{obs:Grundy-pair-label-count}).

Let $\wvec{q}^* = \wvec{\gamma_k} := (\gamma_{k,1}, \gamma_{k,2}, \cdots, \gamma_{k,k})$. We will define a vector $\wvec{q_z}^* = (q^*_{z,1}, q^*_{z,2}, \cdots, q^*_{z,k})$, for every $z \in [k]$. Intuitively speaking, the $z'$th entry of $\wvec{q_z}^*$ will denote the number of vertices with label $z'$ appearing in $T_k$ after removing exactly one subtree rooted at a vertex with label $z$. Formally, for each $z' \in \{z+1,z+2, \cdots, k\}$, we have $q^*_{z,z'} = \gamma_{k,z'}$, and for each $z' \in [z]$, $q^*_{z,z'} = \gamma_{k,z'} - \gamma_{z,z'}$. For $z \in [k]$, we let $\wvec{0_z}$ be the vector of dimension $k$ where the $z$th entry is $1$, and all the other entries are $0$.

For a tree $\what{T}$ rooted at $r$ and $t\in V(\what{T})$, we let $\what{T}^t$ be the subtree of $\what{T}$ rooted at $t$, i.e., $V(\what{T}^t) = \{t'\in V(\what{T}) \mid t' = t, \mbox{ or $t'$ is a descendant of $t$ in } \what{T}\}$ and $\what{T}^t = \what{T}[V(\what{T}^t)]$.

For a set $W \subseteq V(G)$, we say that $W$ is a {\em $k$-Grundy set} if there is a $k$-Grundy witness $\omega: V(T_k) \rightarrow W$ for $G$. Moreover, $W$ is {\em minimal} if no proper subset $W' \subset W$ is a $k$-Grundy set for $G$. For a $k$-Grundy set $W$ and a $k$-Grundy witness $\omega: V(T_k) \rightarrow W$ for $G$, for $t \in V(T_k)$, we let $W_{{\sf sub},t} = \{\omega(t') \mid t' \in V(T^t_k)\}$ and $W_{{\sf exc},t}=\{\omega(t') \mid t' \in V(T_k) \setminus V(T^t_k)\}$.

Recall that we have a graph $G$ and a coloring $\chi: V(G) \rightarrow [k]$, where for $z\in [k]$, we have $X_z = \chi^{-1}(z)$. For $z \in [k]$ and $v \in V(G)$, we define $\C{F}_{z,v} := \{W \subseteq \cup_{z' \in [z]} X_{z'} \mid v \in W, W \mbox{ is $\chi$-independent and } z \leq |W| \leq 2^{z-1}\}$.



%


\subparagraph*{Description of the Algorithm.} The objective of our algorithm will be to compute, for each $z \in [k]$ and $v \in X_z$, a family $\C{F}'_{z,v} \subseteq \C{F}_{z,v}$; starting from $z =1$ (and then iteratively, for other values of $z$ in increasing order), satisfying the following constraints:
\begin{description}
\item [Size Constraint.] $|\C{F}'_{z,v}| \leq \alpha^{2^{k}+1}$.
\item [Correctness Constraint.] For any $z\in [k]$ and $v \in X_z$, the following holds:
\begin{enumerate}
\item Each $A \in \C{F}'_{z,v}$ is a $z$-Grundy set in $G$.
\item Consider any minimal $k$-Grundy set $W$, such that $v \in W$ (if it exists). Furthermore, let $\omega: V({T_k}) \rightarrow W$ be a $k$-Grundy witness for $G$. For any $t\in V(T_k)$ with $\omega(t) =v$, where $\wvec{q_t} = |W_{{\sf exc},t}|$, there is $W' \in \C{F}'_{z,v} \subseteq \C{F}_{z,v}$ such that $W_{{\sf exc},t} \cup W'$ is a $k$-Grundy set in $G$, i.e., $\C{F}'_{z,v} \qrep{\wvec{q_t}} \C{F}_{z,v}$.
\end{enumerate}
\end{description}

\noindent\textbf{Base Case:} We are in our base case when $z=1$; note that $[2^{0}] = \{1\}$. For each $v \in X_1$, set $\C{F}'_{1,v} := \C{F}_{1,v} =\{\{v\}\}$. Note that $\C{F}'_{1,v}$ satisfies both the size and the correctness constraints.\\

\noindent\textbf{Recursive Formula.} Consider $z \in [k]\setminus \{1\}$ and $v \in X_z$. 
We suppose that for each $z' \in [z-1]$ and $v' \in X_{z'}$, we have computed $\C{F}'_{z',v'}$ that satisfies both the size and the correctness constraints.


For each $z' \in [z-1]$, we create a family $\C{F}_{z,v,z'}$, initialized to $\emptyset$ as follows. For each $u \in X_{z'} \cap N_G(v)$ and $W \in \C{F}'_{z',u}$, if $W \cup \{v\}$ is $\chi$-independent and $|W \cup \{v\}| \leq 2^{z-1}$, then add $W \cup \{v\}$ to $\C{F}_{z,v,z'}$. Note that $|\C{F}_{z,v,z'}| \leq n \cdot \alpha^{2^{k}+1}_{p}$, where $p = 2^{z'-1}$. Using Lemma~\ref{lem:grundy-rep}, for each vector $\wvec{q} \leq \wvec{q_z}^*$, we compute $\C{F}'_{z,v,z', \wvec{q}} \qrep{\wvec{q}} \C{F}_{z,v,z'}$, where $|\C{F}'_{z,v,z',\wvec{q}}| \leq \alpha^{2^{k}}$, and set $\C{F}'_{z,v,z'} = \cup_{\wvec{q} \leq \wvec{q_z}} \C{F}'_{z,v,z', \wvec{q}}$. Note that $\C{F}'_{z,v,z'} \leq 2^{(k-1)k} \cdot \alpha^{2^{k}} \leq \alpha^{2^{k}+ 1}$ and we can compute it in time bounded by $\C{O}(\alpha^{2^{k}+ 1} \cdot 2^{k-1} \cdot |\C{F}_{z,v,z'}|)$.

Next we will iteratively ``combine and reduce'' the families $\C{F}'_{z,v,z'}$, for $z' \in [z]$, to obtain a family $\what{\C{F}}_{z,v} \subseteq \C{F}_{z,v}$ as follows. We set $\what{\C{F}}_{z,v,1} := \C{F}'_{z,v,1}$. Iteratively, (in increasing order), for each $z' \in [z-1] \setminus \{1\}$, we do the following:
\begin{enumerate}
\item Set $\wtilde{\C{F}}_{z,v,z'} := \what{\C{F}}_{z,v,z'-1} \star \C{F}'_{z,v,z'}$.

\item Compute $\what{\C{F}}_{z,v,z',\wvec{q}} \qrep{\wvec{q}}\wtilde{\C{F}}_{z,v,z'}$, for each $ \wvec{q} \leq \wvec{q_{z'}} = \wvec{\gamma_k} -\big( \sum_{\what{z} \in [z']} (\wvec{\gamma_k} - \wvec{q_{\what{z}}}^*) \big)-\wvec{0_z}$, and set $\what{\C{F}}_{z,v,z'} = \cup_{\wvec{q} \leq \wvec{q_{z'}}} \what{\C{F}}_{z,v,z',\wvec{q}}$. Note that $|\what{\C{F}}_{z,v,z'}| \leq \alpha^{2^{k}+1}$ and it can be computed in time bounded by $\C{O}(\alpha^{2^{k}+ 1} \cdot 2^{k-1} \cdot |\wtilde{\C{F}}_{z,v,z'}|)$.
\end{enumerate}

We add each $A \in \what{\C{F}}_{z,v,z-1}$ to $\C{F}'_{z,v}$, which is a $z$-Grundy set in $G$ (note that since the size of each set is bounded by $2^{k-1}$, we can easily do it in the allowed amount of time). In the following lemma, we show that $\C{F}'_{z,v}$ satisfies the correctness constraints.


\shortversion{\begin{lemma}[$\spadesuit$]}
\longversion{\begin{lemma}}
\label{lem:main-corect-rep}
$\C{F}'_{z,v}$ satisfies the correctness constraint.
\end{lemma}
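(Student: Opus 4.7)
The plan is to proceed by induction on $z$. The base case $z=1$ is immediate, since $\C{F}'_{1,v}=\{\{v\}\}=\C{F}_{1,v}$ and $\{v\}$ is trivially a $1$-Grundy set. For the inductive step I will assume the correctness constraint holds for every $z'<z$ and every $v'\in X_{z'}$, and verify both parts of the constraint for $v\in X_z$.

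For part~1 I would unfold the construction: every $A\in\C{F}'_{z,v}=\what{\C{F}}_{z,v,z-1}$ can be written as $\{v\}\cup\bigcup_{z'\in[z-1]}W_{z'}$ with $W_{z'}\in\C{F}'_{z',u_{z'}}$ for some $u_{z'}\in X_{z'}\cap N_G(v)$, the union being $\chi$-independent (enforced by the $\star$ operation). By induction, each $W_{z'}$ is a $z'$-Grundy set admitting a $\chi$-compatible witness rooted at $u_{z'}$; gluing these witnesses under a new root $r_z\mapsto v$, mirroring the recursive definition of $T_z$, would give a $\chi$-compatible $z$-Grundy witness for $A$. Label-class independence follows from $\chi$-independence of $A$, injectivity across labels from $\chi$-compatibility, and the new edges $(r_z,r_{z'})$ map to $\{v,u_{z'}\}\in E(G)$.

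For part~2, I would fix a minimal $k$-Grundy set $W$ containing $v$, a witness $\omega\colon V(T_k)\to W$, and $t\in V(T_k)$ with $\omega(t)=v$; niceness of $\chi$ forces $\ell_k(t)=z$. For every $z'\in[z-1]$, letting $t_{z'}$ be the child of $t$ with $\ell_k(t_{z'})=z'$ and $u_{z'}=\omega(t_{z'})\in X_{z'}\cap N_G(v)$, the set $W_{{\sf sub},t_{z'}}$ lies in $\C{F}_{z',u_{z'}}$ and fits $W_{{\sf exc},t_{z'}}$. A direct count of labels of $T_k$ lying outside a depth-$z'$ subtree shows $|W_{{\sf exc},t_{z'}}|\le\wvec{q_{z'}}^*$, so the inductive hypothesis delivers $W^{\mathrm{rep}}_{z'}\in\C{F}'_{z',u_{z'}}$ that fits $W_{{\sf exc},t_{z'}}$; lifting gives $W^{\mathrm{rep}}_{z'}\cup\{v\}\in\C{F}_{z,v,z'}$.

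To conclude, I would chain \Cref{lem:rep-compose} through the algorithm's iterated $\star$/reduce loop. By \Cref{lem:grundy-rep} each $\C{F}'_{z,v,z',\wvec{q}}\qrep{\wvec{q}}\C{F}_{z,v,z'}$ for every $\wvec{q}\le\wvec{q_z}^*$, so $\C{F}'_{z,v,z'}$ represents $\C{F}_{z,v,z'}$ simultaneously over the full range of vectors needed at the next composition step; the budgets $\wvec{q_{z'}}=\wvec{\gamma_k}-\sum_{\what z\in[z']}(\wvec{\gamma_k}-\wvec{q_{\what z}}^*)-\wvec{0_z}$ are precisely those for which \Cref{lem:rep-compose} composes $\what{\C{F}}_{z,v,z'-1}$ with $\C{F}'_{z,v,z'}$. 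Chaining through $z'=1,\ldots,z-1$ would yield $\C{F}'_{z,v}\qrep{\wvec{q_t}}\C{F}_{z,v}$; part~1 then certifies that the witnessing $W'$ is a $\chi$-compatible $z$-Grundy set, and pasting its witness into $\omega|_{V(T_k)\setminus V(T_k^t)}$ produces a $k$-Grundy witness for $W_{{\sf exc},t}\cup W'$. The main obstacle will be the bookkeeping that verifies the budget vectors $\wvec{q_{z'}}$ exactly match the premises of \Cref{lem:rep-compose} at every iteration, tracking per label class how many label-$z''$ vertices of $T_k$ still lie outside the partial subtree already assembled into $\what{\C{F}}_{z,v,z'-1}$.
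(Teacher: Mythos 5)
Your plan is essentially the same as the paper's: induction on $z$, unfold the $\star$/reduce construction to certify each $A \in \C{F}'_{z,v}$ is a (Grundy) set rooted at $v$, then for a fixed minimal $k$-Grundy set $W$ with witness $\omega$ and node $t$ with $\omega(t)=v$, decompose along the children $t_1,\ldots,t_{z-1}$ of $t$, and use the inductive representation guarantees plus the iterated compose-and-reduce step. The one place you deviate in organization is that you frame the composition explicitly through \Cref{lem:rep-compose}; the paper never invokes that lemma inside this proof. Instead it carries out the replacement \emph{sequentially}: it builds auxiliary witnesses $\omega_1,\ldots,\omega_{z-1}$, where $\omega_{z'}$ agrees with $\omega$ outside $T_k^{t_1},\ldots,T_k^{t_{z'}}$, and at step $z'$ it finds a replacement $\wtilde{A}_{z'}\in\C{F}'_{z',v_{z'}}$ that fits $B'_{z'}$ — crucially defined w.r.t.\ $\omega_{z'-1}$, i.e.\ after the previous replacements have been folded in — then uses the $\wvec{q}\le\wvec{q_{z'}}$ representation of $\what{\C{F}}_{z,v,z'}$ over $\wtilde{\C{F}}_{z,v,z'}$ to land inside the stored family. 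Alongside this the paper maintains an explicit structural invariant (that $A'_{z'}$ contains, for every $z''\le z'$, a minimal $z''$-Grundy set whose unique $X_{z''}$ vertex is a neighbor of $v$), which is how the final paste is certified to be a $k$-Grundy witness; your appeal to part~1 plays the same role.

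One place you should be careful: you write that the inductive hypothesis ``delivers $W^{\mathrm{rep}}_{z'}\in\C{F}'_{z',u_{z'}}$ that fits $W_{{\sf exc},t_{z'}}$'' as if these can be picked independently for each $z'$ up front. Picked that way, $\bigcup_{z'}W^{\mathrm{rep}}_{z'}\cup\{v\}$ need not be $\chi$-independent — that is exactly why the paper's choices are made one at a time against a moving target $B'_{z'}$ that reflects the earlier substitutions, and why the algorithm stores representatives over a full range of budget vectors. Your final paragraph (``chain \Cref{lem:rep-compose}'') is the right cure, but it does the whole job: the per-$z'$ representatives must be extracted \emph{during} the chaining, not before it. With that understanding, your route and the paper's are equivalent, and your closing remark that the obstacle is matching the budget vectors $\wvec{q_{z'}}$ to the premises of \Cref{lem:rep-compose} at each stage is exactly the bookkeeping the paper performs with its $B_{z'}/B'_{z'}$ and $\wvec{q_{z'}},\wvec{q_{z'}}^*$ tracking.
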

\longversion{
\begin{proof}
Consider any $v \in X_z$ and a minimal $k$-Grundy set $W$, such that $v \in W$ (if it exists) and let $\omega: V({T_k}) \rightarrow W$ be a $k$-Grundy witness for $G$. Next consider any $t\in V(T_k)$ with $\omega(t) =v$. We will argue that, there is $W' \in \C{F}'_{z,v}$ such that $W_{{\sf exc},t} \cup W'$ is a $k$-Grundy set in $G$. For each $z' \in [z-1]$, let $t_{z'}$ be the child of $t$ in $T_k$ with $\ell_k(t_{z'}) = z'$ and $v_{z'} = \omega(t_{z'})$. Note that for each $z' \in [z-1]$, $v_{z'} \in X_{z'}$. For each $z' \in [z-1]$, let $\what{A}_{z'} = \{\omega(t') \mid t' \in V(T^{t_{z'}}_k)\}$, and note that $|\what{A}_{z'}| \leq 2^{z'-1}$. Now we iteratively take the union of the above sets as follows. For each $z' \in [z-1]$, let $A_{z'} =\cup_{\what{z} \in [z']} \what{A}_{\what{z}}$. Now for each $z' \in [z-1]$, we construct a subset, $B_{z'}$ of $W$ that contains $\omega(t')$, for each $t' \in V(T_k)$ that does not belong to the subtrees rooted at any of the vertices $t_1,t_2, \cdots, t_{z'}$. Formally, for $z' \in [z-1]$, let $B_{z'} = \{\omega(t') \mid t' \in V(T_k) \setminus \big( \bigcup_{z'' \in [z']} V(T^{t_{z''}}_k) \big)\}$. Furthermore, let $\wvec{s_{z'}}$ be the size of $B_{z'}$. Notice that for each $z' \in [z-1]$, all of the following holds:
\begin{enumerate}
\item $\wvec{s_{z'}}\leq \wvec{q_{z'}}$,
\item $|A_{z'}| \leq \sum_{\what{z} \in [z']} 2^{\what{z}-1}$,
\item $|\what{A}_{z'}| \leq 2^{z'-1}$ and $\what{A}_{z'} \in \C{F}_{z',v_{z'}}$,
\item $A_{z'} \cup B_{z'} = W$, and thus, $A_{z'}$ fits $B_{z'}$.
\end{enumerate}

We will now iteratively define sets $A'_1,A'_2,\cdots, A'_{z-1}$ and
functions $\omega_1,\omega_2\cdots, \omega_{z-1}$, and we will ensure that, for each $z' \in [z-1]$, we have: i) $\omega_{z'}: V(T_k) \rightarrow A'_{z'} \cup B_{z'}$ is a $k$-Grundy witness for $G$, ii) for each $z' \in [z-1]$ and $t' \in V(T_k) \setminus \big( \bigcup_{z'' \in [z']} V(T^{t_{z''}}_k) \big)$, we have $\omega_{z'}(t') = \omega(t')$, iii) $A'_{z'} \in \what{\C{F}}_{z,v,z'}$, and iv) for each $z''\in [z']$, there is a minimal $z''$-Grundy set $A'_{z',z''} \subseteq A'_{z'}$, where the unique vertex in $A'_{z',z''} \cap X_{z''}$ is a neighbor of $v$.

Recall that $z \geq 2$ and $\what{\C{F}}_{z,v,1} = \C{F}'_{z,v,1} \subseteq \C{F}_{z,v,1}$. Also, we have $\what{A}_1 = A_1 = \{u'\}$, for some $u' \in N_G(v) \cap X_1$, and $A_1 \cup B_1 = W$ is a $k$-Grundy set. Thus, there must exist $A'_1 \in \C{F}'_{z,v,1}$ such that $A'_1 \cup B_1$ is $\chi$-independent. Moreover by the construction of $\C{F}'_{z,v,1}$, $A'_1 = \{u\}$, for some $u \in N_G(v) \cap X_1$. Let $\omega_{1}: V(T_k) \rightarrow A'_{1} \cup B_{1}$ be the function such that for each $t' \in V(T_k) \setminus \{t_1\}$, we have $\omega_1(t') = \omega(t')$ and $\omega_1(t_1) = u$. As $A'_1 \cup B_1$ is $\chi$-independent and $\{u,v\}\in E(G)$, we can obtain that $\omega_1$ is a $k$-Grundy witness for $G$. Note that if $z=2$, then by the above arguments, we have constructed the desired sets and functions, which is just the set $A'_1$ and the function $\omega_1$.

We now consider the case when $z'\geq 2$. Also, we assume that for some $\what{z} \in [z-2]$, for each $z' \in [\what{z}]$, we have constructed $A'_{z'}$ and $\omega_{z'}$ satisfying the desired condition. Now we prove the statement for $z' = \what{z}+1$. Note that $A'_{z'-1} \cup B_{z'-1}$ is a $k$-Grundy set and $\omega_{z'-1}: V(T_k) \rightarrow A'_{z'-1} \cup B_{z'-1}$ is a $k$-Grundy witness for $G$, where $A'_{z'-1} \in \what{\C{F}}_{z,v,z'}$. Note that $\what{A}_{z'} \subseteq B_{z'-1}$. Let $B'_{z'} = \{\omega_{z'-1}(t') \mid t' \in V(T_k) \setminus V(T_k^{t_{z'}})\}$. 
Note that $|B'_{z'}| \leq \wvec{q_{z'}}$ and $\what{A}_{z'}$ fits $B'_{z'}$, and recall that $\what{A}_{z'} \in \C{F}_{z',v_{z'}}$. As $\C{F}'_{z',v_{z'}} \qrep{\wvec{q}} \C{F}_{z',v_{z'}}$, for every $\wvec{q} \leq \wvec{q_{z'}}$, there must exists $\wtilde{A}_{z'} \in \C{F}'_{z',v_{z'}}$, such that $\wtilde{A}_{z'}$ fits $B'_{z'}$. By the construction of $\C{F}'_{z',v_{z'}}$, we have $v_{z'} \in \wtilde{A}_{z'}$ and $\wtilde{A}_{z'} \cup B'_{z'}$ is $\chi$-independent, and also $v \in B'_{z'}$. From the above discussions we can conclude that $\wtilde{A}_{z'} \cup \{v\} \in \C{F}_{z,v,z'}$. Moreover, as $A'_{z'-1} \in \what{\C{F}}_{z,v,z'-1}$, $A'_{z'-1} \subseteq B'_{z'}$ and $\wtilde{A}_{z'} \cup B'_{z'}$ is $\chi$-independent, we can obtain that $A'_{z'-1} \cup \wtilde{A}_{z'} \cup \{v\} \in \what{\C{F}}_{z,v,z'-1} \star \C{F}'_{z,v,z'} =\wtilde{\C{F}}_{z,v,z'}$. As $\what{\C{F}}_{z,v,z'} \qrep{\wvec{q}} \wtilde{\C{F}}_{z,v,z'}$, for every $\wvec{q} \leq \wvec{q_{z'}}^*$ and $|B_{z'}| \leq \wvec{q_{z'}}^*$, there must exists $A'_{z'} \in \what{\C{F}}_{z,v,z'}$ such that $A'_{z'} \cup B_{z'}$ is $\chi$-independent.

As $A'_{z'} \in \what{\C{F}}_{z,v,z'}$, there must exist $\what{C} \in \what{\C{F}}_{z,v,z'-1}$ and $C' \cup \{v\} \in \C{F}'_{z,v,z'}$, such that $A'_{z'} = \what{C} \cup C' \cup \{v\}$. By the correctness for $z'-1$, $C'$ contains for each $z''\in [z'-1]$, a minimal $z''$-Grundy set $A'_{z'-1,z''} \subseteq A'_{z'-1}$, where the unique vertex in $A'_{z'-1,z''} \cap X_{z''}$ is a neighbor of $v$. Also by the construction of $\C{F}'_{z,v,z'}$, $C'$ contains a minimal $z'$-Grundy set $C''$ in $G$, where the unique vertex in $C'' \cap X_{z'}$ is a neighbor of $v$. From the above discussions, we can conclude that $A'_{z'} \cup B_{z'}$ is a $k$-Grundy set in $G$.
\end{proof}
}

Using the above algorithm, we can compute for each $z \in [k]$ and $v \in X_z$, a family $\C{F}'_{z,v} \subseteq \C{F}_{z,v}$ that satisfies the correctness and the size constraints, in time bounded by $\alpha^{\C{O}(2^{k}+ 1)} \cdot n^{\C{O}(1)}$. Note that $G$ has a Grundy coloring using at least $k$ colors if and only if for some $v \in X_k$, $\C{F}'_{z,v}\neq \emptyset$. This implies a proof of Theorem~\ref{thm:KijGrundy}.

\longversion{	\subsection{Computing Grundy Representatives (Proof of~\Cref{lem:grundy-rep})}\label{sec:Grundy-rep}

The goal of this section is to prove Lemma~\ref{lem:grundy-rep}. Fix any computable function $f: \mathbb{N} \rightarrow \mathbb{N}\setminus \{0\}$. Consider $k \in \mathbb{N}\setminus \{0\}$, a $K_{i,j}$-free graph $G$ on $n$ vertices with a coloring $\chi: V(G)\rightarrow [k]$, where $i \geq j \geq 1$ and $X_z = \chi^{-1}(z)$, for $z \in [k]$. Furthermore, consider $p\in \mathbb{N}$, a vector $\wvec{q} =(q_1,q_2,\cdots, q_k)$, such that $p + \sm{\wvec{q}} \leq f(k)$, and a $p$-family $\C{F}$ of subsets of $V(G)$. Recall that we have \fbox{$\eta = i \cdot f(k) \cdot k$} and \fbox{$\alpha = \alphaval$}, and note that $1 \leq \eta \leq \alpha$. 


We assume that $|\C{F}| \geq 1$, otherwise, we can simply output $\C{F}$. We also assume that the sets in  $\C{F}$ are not repeated, as otherwise, we can remove the repeated entries. We will design a recursive algorithm that will compute the required subset of $\C{F}$. We will first state our base cases, then a correctness invariant for our recursive step, followed by our recursive formula(s).\\

\noindent\textbf{Base Cases:} If $p = 0$, then notice that $\C{F} = \{\emptyset\}$, and thus, $|\C{F}| = 1$, and we simply output $\C{F}$ as the required family. If $\sm{\wvec{q}} = 0$, then, for an arbitrarily chosen set $A \in \C{F}$, we output the set $\{A\}$. The correctness of the base cases is immediate from their descriptions. Moreover, they can be executed in $O(1)$ time.  \\  

\noindent\textbf{Correctness Invariant:} Before we begin computing $\wvec{q}$-Grundy representative for any $p$-family, we assume that we can compute an $\wvec{h}$-Grundy representative for a given $p'$-family $\C{F}'$ with at most $\alpha^{2p' + \sm{\wvec{h}}}$ many sets using at most $\kappa \cdot \alpha^{2p' + \sm{\wvec{h}}} \cdot p \cdot |\C{F}|$ computational steps (where $\kappa \geq 4$ is some fixed constant), whenever: 
\begin{enumerate}
\item $p'=p$ and $\sm{\wvec{h}} < \sm{\wvec{q}}$, or 
\item $p'< p$ and $\sm{\wvec{h}} + p' \leq \sm{\wvec{q}} + p$. 
\end{enumerate}

\noindent\textbf{Recursive Computation:} Now we will discuss the recursive procedure to compute a $\wvec{q}$-Grundy representative for the given $p$-family $\C{F}$. We compute a maximal subfamily $\C{D}\subseteq \C{F}$ of pairwise disjoint sets. We will consider the following cases based on whether $|\C{D}| \leq \eta -1$ or not.



\noindent{\bf Case 1: $|\C{D}| \leq \eta-1$.} This case is a simpler case for us. We let $U_{\C{D}} := \cup_{D \in \C{D}} D$. Note that $|U_{\C{D}}| \leq p \cdot (\eta-1)$, and due to the maximality of $\C{D}$, for any $Y \in \C{F}$, we have $Y \cap U_{\C{D}} \neq \emptyset$. We note that~\Cref{obs:transitive-vec-rep-one-vertex} lies at the crux of the computation for this case. For each $u \in U_{\C{D}}$, let $\C{F}_u := \{Y \in \C{F} \mid u \in Y\}$, and note that $\C{F}_u$ is a $(p,u)$-family and $\C{F}_u -u$ is a $(p-1)$-family. Also, we have $\C{F} = \cup_{u \in U_D} \C{F}_u$. For $u \in U_{\C{D}}$, let $\wvec{q_u}$ be the vector obtained from $\wvec{q}$ by increasing its $\chi(u)$th coordinate by $1$. Compute $\what{\C{F}}_u \qrep{\wvec{q_u}} {\C{F}}_u - u$, where $|\what{\C{F}}_u| \leq \alpha^{2(p-1) + \sm{\wvec{q_u}}} = \alpha^{2p-2+\sm{\wvec{q}} + 1} = \alpha^{2p + \sm{\wvec{q}} -1}$, for each $u \in U_{\C{D}}$. 
For each $u\in U_{\C{D}}$, we also compute $\wtilde{\C{F}}_u \qrep{\wvec{q}} {\C{F}}_u - u$, where $|\wtilde{\C{F}}_u| \leq \alpha^{2(p-1) + \sm{\wvec{q}}} = \alpha^{2p-2+\sm{\wvec{q}}}$. 

Set $\C{F}' := \bigcup_{u\in U_{\C{D}}} \big((\what{\C{F}}_u + u) \cup (\wtilde{\C{F}}_u + u) \big)$, and note that from Observation~\ref{obs:transitive-vec-rep} and~\ref{obs:transitive-vec-rep-one-vertex} we can obtain that $\C{F}'\qrep{\wvec{q}} \C{F}$. Also, we have $\C{F}' \leq 2 \cdot p \cdot (\eta-1) \cdot \alpha^{2p+\sm{\wvec{q}} -1} \leq \alpha^{2p+\sm{\wvec{q}}}$. Notice that the number of computational steps required to compute $\C{F}'$ can be bounded by: $p(\eta-1) \cdot 2 p \cdot \kappa \cdot |\C{F}| \cdot \alpha^{2p + \sm{\wvec{q}}- 1} \leq \kappa \cdot \alpha^{2p + \sm{\wvec{q}}} \cdot p \cdot |\C{F}|$ (recall that $\alpha = \alphaval$). 


\noindent{\bf Case 2: $|\C{D}| \geq \eta$.} Arbitrarily choose a subset $\C{D}^* \subseteq \C{D}$ ($ \subseteq \C{F}$) of size exactly $\eta$, and let $U = \cup_{D\in \C{D}^*} D$. Before proceeding further, we intuitively say what our algorithm will do next. For almost all sets $B$ of size $\wvec{q}$ that fits a set in $\C{F}$, we will obtain that $\C{D}^*$ already contains a set that fits $B$. We will obtain a small set $S$, such that any set $B$ of size $\wvec{q}$ that is not already taken care of by $\C{D}^*$, will contain an element from $S$. Once we have this set $S$, we will compute an appropriate representative for them exploiting the small size of $S$. We now formalize the above. 

Consider any $z \in [k]$, and let $U_z = X_z \cap U$. A vertex $v \in X_z$ is a {\em $z$-heavy} vertex if $|N_{G}(v) \cap U_z| \geq \degval$. (In the above, $v$ could also possibly belong to the set $U_z$.) Let $S_z$ be the set of all $z$-heavy vertices in $G$, and $S = \cup_{z' \in [k]} S_{z'}$. Next, we argue that $\C{D}^*$ already fits sets $B$ of size $\wvec{q}$ that do not contain vertices from $S$ (if there is a set in $\C{F}$ that fits $B$). 

\begin{lemma}\label{clm:Dstr-almost-qresp}
Consider any $B\subseteq V(G)$ of size $\wvec{q}$ for which there is $A \in \C{F}$ that fits $B$. Then, one of the following holds: i) there is $A' \in \C{D}^*$ such that $A'$ fits $B$, or ii) $B \cap S \neq \emptyset$.  
\end{lemma}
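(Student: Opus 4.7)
The plan is to establish the contrapositive: assuming $B \cap S = \emptyset$, I will exhibit some $A' \in \C{D}^*$ that fits $B$. Say that $A' \in \C{D}^*$ is \emph{blocked by} a vertex $b \in B$ if there exists $a' \in A' \cap X_{\chi(b)}$ with $\{a',b\} \in E(G)$. Since each $A' \in \C{D}^*$ is $\chi$-independent (as $\C{D}^* \subseteq \C{F}$) and $B$ itself is $\chi$-independent (which is forced by the hypothesis that some $A \in \C{F}$ fits $B$), it follows that $A' \cup B$ is $\chi$-independent if and only if $A'$ is not blocked by any $b \in B$. Hence it suffices to find one $A' \in \C{D}^*$ that is blocked by no vertex of $B$.

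The heart of the proof is a double-counting estimate on how many members of $\C{D}^*$ a single $b \in B$ can block. Fix $b \in B$ and set $z = \chi(b)$. Since $b \notin S$, by definition $b$ is not $z$-heavy, and the defining threshold gives $|N_G(b) \cap U_z| \leq \degval - 1 = i - 1$. Critically, the sets in $\C{D}^*$ are pairwise disjoint and their union is $U$, so every vertex of $U_z$ lies in exactly one member of $\C{D}^*$. Consequently at most $i-1$ distinct members of $\C{D}^*$ contain a neighbor of $b$ in the color class $X_z$, so $b$ blocks at most $i-1$ sets in $\C{D}^*$.

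Summing over $b \in B$, the total number of members of $\C{D}^*$ blocked by some vertex of $B$ is at most $|B|\cdot (i-1) = \sm{\wvec{q}}\cdot (i-1) \leq f(k)\cdot (i-1) < i\cdot f(k)\cdot k = \eta$, where I used the premise $p + \sm{\wvec{q}} \leq f(k)$ (and $k \geq 1$). Since $|\C{D}^*| = \eta$, at least one $A' \in \C{D}^*$ escapes being blocked by any vertex of $B$, and this $A'$ fits $B$, as required.

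The argument is essentially a pigeonhole estimate, and the only delicate point is recognizing that the disjointness of $\C{D}^*$ lets each neighbor of $b$ in $U_z$ ruin at most one set of $\C{D}^*$, so the non-heavy status of $b$ translates directly into a bound on the number of blocked members. Notably, no appeal to $K_{i,j}$-freeness is needed for this particular claim; that hypothesis will instead be invoked elsewhere to bound $|S|$ and thereby make the recursive branching on vertices of $S$ efficient.
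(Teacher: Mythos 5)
Your proof is correct, and the core idea — using the pairwise disjointness of $\C{D}^*$ to turn the non-heaviness of vertices in $B$ into a bound on blocked sets — is the same pigeonhole argument as the paper's, just phrased via the contrapositive. The paper fixes, for each $A_j \in \C{D}^*$ that fails to fit $B$, a conflicting pair $(v_j,u_j)$ and then pigeonholes twice (first over the $k$ color classes, then over the $\leq \sm{\wvec{q}}$ vertices of $B$) to extract a vertex $u\in B$ adjacent to $i$ distinct members of $U_{z^*}$, hence $z^*$-heavy; you instead cap the total number of blocked sets by $\sm{\wvec{q}}(i-1) < \eta$ and conclude one set survives. Both directions account for the factors $i$, $\sm{\wvec{q}}$, and $k$ in $\eta$ in the same way, so the argument is mathematically identical — your write-up is simply a somewhat cleaner one-shot counting formulation.
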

\begin{proof}
Consider a set $B$ that satisfies the premise of the claim, and note that $|B| = \sm{\wvec{q}}$ (and also $|B| = \wvec{q}$). Recall that $\C{D}^*$ is a set of pairwise disjoint sets from $\C{F}$ with exactly $\eta$ sets. 
Let $\C{D}^* = \{A_1,A_2,\cdots, A_\eta\}$. Suppose that there is no $A \in \C{D}^*$ that fits $B$ (otherwise, the claim trivially follows). This means that, for each $j \in [\eta]$, there is $z_j \in [k]$, $v_j \in A_j \cap X_{z_j}$ and $u_j \in B\cap X_{z_j}$, such that $\{v,u\} \in E(G)$ (and we work with such arbitrarily fixed vertices $v_j,u_j$, for each $j \in [\eta]$). For distinct $j,j' \in [\eta]$, $v_j \neq v_{j'}$, as sets in $\C{D}^*$ are pairwise vertex disjoint, although $u_j$ could possibly be the same as $u_{j'}$. For any $j \in [\eta]$, $v_j \neq u_j$ as $G$ is a simple graph and $\{v_i,u_i\} \in E(G)$. For distinct $j,j' \in [\eta]$, $v_j \neq u_{j'}$, as otherwise, $u_j,u_{j'} \in B \cap X_{z_j}$ with $\{u_j,u_{j'}\} \in E(G)$, contradicting that $B$ fits with some $A\in \C{F}$. Let $B' = \{u_j \mid j \in [\eta]\}$ and $D' = \{v_j \mid j \in [\eta]\}$. From the discussions above we have, $|D'| = \eta$, $D' \cap B' =\emptyset$, each vertex in $D'$ has a neighbor in $B'$, and each vertex in $B'$ has a neighbor in $D'$. Recall that $\eta = i \cdot f(k) \cdot k \geq \etaval{\wvec{q}}$. For each $z \in [k]$, let $B'_z = B' \cap X_z$. By the Pigeonhole principle, there is $z^*\in [k]$, and indices $j'_1,j'_2, \cdots, j'_{\degval \cdot \sm{\wvec{q}}} \in [\eta]$, such that for every $s \in [\degval \cdot \sm{\wvec{q}}]$, $u_{j'_s} \in B'_{z^*}$. As $|B'| \leq |B| = \sm{\vec{q}}$, again by the Pigeonhole principle, there are distinct indices $j_1,j_2, \cdots, j_\degval \in [\eta]$, such that $u_{j_s} = u_{j_{s'}}$, for every $s,s' \in [i]$, and $u_{j_1} \in B_{z^*}$. Note that $u_{j_1} \in B_{z^*}$ is adjacent to each of the vertices in $\{v_{j_1},v_{j_2}, \cdots, v_{j_\degval}\} \subseteq D' \cap X_{z^*}$. Thus we can conclude that $u_{j_1} \in B$ is a $z^*$-heavy vertex, and thus, $B \cap S \neq \emptyset$.  
\end{proof}

Next, we bound the size of $S$ by exploiting the $K_{i,j}$-freeness of the graph $G$.

\begin{lemma}\label{clm:size-bound-S}
For each $z \in [k]$, $|S_z\setminus U_z| < \szsize$, and thus $|S| \leq k \cdot \szsize + p \cdot \eta$.   
\end{lemma}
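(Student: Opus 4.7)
The plan is to prove the per-class bound $|S_z\setminus U_z|<(p\cdot\eta)^{i+1}$ by a standard Kővári–Sós–Turán style double-counting argument that exploits $K_{i,j}$-freeness, and then sum the contributions over $z\in[k]$ to derive the global bound on $|S|$.

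First I would set up the counting. Fix any $z\in[k]$ and recall $U_z=X_z\cap U$, so in particular $|U_z|\le|U|\le p\cdot\eta$ (since $U$ is the union of $\eta$ sets from $\C{F}$, each of size at most $p$). By the definition of a $z$-heavy vertex, every $v\in S_z\setminus U_z$ satisfies $|N_G(v)\cap U_z|\ge\degval=i$, so $v$ determines at least one (in fact at least $\binom{i}{i}=1$) $i$-element subset $T\subseteq U_z$ with $T\subseteq N_G(v)$. I would therefore count pairs $(v,T)$ with $v\in S_z\setminus U_z$, $T\in\binom{U_z}{i}$, and $T\subseteq N_G(v)$; the number of such pairs is at least $|S_z\setminus U_z|$.

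The key step is to bound this count from above using $K_{i,j}$-freeness. For a fixed $T\in\binom{U_z}{i}$, suppose $j$ distinct vertices $v_1,\ldots,v_j\in S_z\setminus U_z$ all satisfy $T\subseteq N_G(v_r)$. Since $S_z\setminus U_z\subseteq X_z$ and the $v_r$'s are disjoint from $T\subseteq U_z$, the sets $T$ and $\{v_1,\ldots,v_j\}$ form the two sides of a copy of $K_{i,j}$ in $G$, contradicting $K_{i,j}$-freeness. Hence each $T$ contributes at most $j-1$ vertices, yielding
\[
|S_z\setminus U_z|\;\le\;(j-1)\binom{|U_z|}{i}\;\le\;(j-1)(p\cdot\eta)^{i}.
\]
Since $j\le i\le\eta\le p\cdot\eta$ (using $p\ge 1$, as the $p=0$ case is handled by the base case, and $\eta=i\cdot f(k)\cdot k\ge i\ge j$), we have $j-1<p\cdot\eta$, so $|S_z\setminus U_z|<(p\cdot\eta)^{i+1}$, as required.

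Finally, since the color classes $X_1,\ldots,X_k$ are pairwise disjoint and $S_z\subseteq X_z$, the sets $S_z$ are pairwise disjoint, so
\[
|S|\;=\;\sum_{z\in[k]}|S_z|\;=\;\sum_{z\in[k]}\bigl(|S_z\setminus U_z|+|S_z\cap U_z|\bigr)\;<\;k\cdot(p\cdot\eta)^{i+1}+|U|\;\le\;k\cdot(p\cdot\eta)^{i+1}+p\cdot\eta,
\]
which gives the stated bound. I do not anticipate a real obstacle here: the whole argument is a one-shot Kővári–Sós–Turán style counting once the right pair $(v,T)$ is identified; the only care needed is to verify the small arithmetic step $(j-1)(p\eta)^i<(p\eta)^{i+1}$ so that the final bound matches the form $(p\eta)^{i+1}$ used later in the proof.
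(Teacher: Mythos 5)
Your proof is correct, and it takes a genuinely different route from the paper's. The paper proves the bound by contradiction: it forms the bipartite graph on $(U_z, S_z\setminus U_z)$, then iteratively applies the pigeonhole principle to build a nested sequence of large sets $W_1 \supseteq \cdots \supseteq W_i$ and distinct vertices $u_1,\ldots,u_i \in U_z$, peeling off one $u_{\hat j}$ at a time (using that every heavy vertex retains at least $i-(\hat j -1) \geq 1$ neighbors outside $\{u_1,\ldots,u_{\hat j -1}\}$), and finally exhibits a $K_{i,i}$ subgraph. You instead run the classic K\H{o}v\'ari--S\'os--Tur\'an double count on pairs $(v,T)$ with $v \in S_z\setminus U_z$, $T \in \binom{U_z}{i}$, $T \subseteq N_G(v)$: the lower bound $\geq |S_z\setminus U_z|$ comes from $i$-heaviness, and the upper bound $(j-1)\binom{|U_z|}{i}$ comes from $K_{i,j}$-freeness directly (no passage through $K_{i,i}$). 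Your approach is shorter, avoids the iterative construction, uses $K_{i,j}$-freeness at its natural strength (the paper weakens to $K_{i,i}$-freeness via $i\geq j$), and in fact yields the slightly sharper intermediate bound $(j-1)(p\eta)^i$ before relaxing to $(p\eta)^{i+1}$; what the paper's more explicit construction buys is a self-contained argument that does not presuppose familiarity with the KST-type counting lemma. Both are valid, and your closing steps (the arithmetic $j-1 < p\eta$ given $p\geq 1$ in Case~2, and the disjointness of the $S_z$'s together with $\sum_z |S_z\cap U_z| \leq |U| \leq p\eta$) correctly fill in the ``and thus'' part of the lemma, which the paper leaves implicit.
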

\begin{proof}
Towards a contradiction suppose there is $z \in [k]$, such that $|S_z\setminus U_z| \geq \szsize$. Let $S'_z = S_z\setminus U_z$. Let $H_z$ be the bipartite graph with bipartition $(U_z, S'_z)$, where for $u \in U_z$ and $v \in S'_z$, we add the edge $\{u,v\}$ to $E(H_z)$ if and only if $\{u,v\} \in E(G)$. Note that $H_z$ is a subgraph of $G$ so it is must be $K_{i,j}$-free, and thus, $K_{i,i}$-free (recall that $i\geq j$). 
As each $v \in S'_z$ is $z$-heavy, we can obtain that $\degr_{H_z}(v) \geq \degval$, and $\sum_{u \in U_z} \degr_{H_z}(u) = \sum_{v \in S'_z} \degr_{H_z}(v) \geq \degval \cdot |S'_z|$. Now we will iteratively construct a sequence of distinct vertices $u_1,u_2, \cdots, u_i \in U_z$ and a sequence of sets $S'_z \supseteq W_1 \supseteq W_2 \supseteq \cdots \supseteq W_i$, where we want to ensure that: i) for each $j' \in [i]$, the vertex $u_{j'}$ is adjacent to all the vertices in $W_{j'}$, and ii) for each $j' \in [i]$, $|W_{j'}| \geq (p \cdot \eta)^{i-j'+1}$. As $|S'_z| \geq \szsize$, $|U_z| \leq p \cdot \eta$ and $N_{H_z}(v) \cap U_z \neq \emptyset$, for each $v \in S'_z$, by the Pigeonhole principle, there exists $u_1 \in U_z$ that is adjacent to at least $(p \cdot \eta)^{i}$ vertices in $S'_z$. Set $W_1 := N_{H_z}(u_1)$, and note that $|W_1| \geq (p \cdot \eta)^{i}$. 

Now suppose for some $2 \leq \what{j} \leq i$, we have constructed the vertex $u_{j'}$ and the set $W_{j'}$ satisfying the required properties, for each $j' \in [\what{j}-1]$. Now we construct the vertex $u_{\what{j}}$ and the set $W_{\what{j}}$ as follows. We have $|W_{{\what{j}}-1}| \geq (p \cdot \eta)^{i-{\what{j}} + 2}$, where $W_{{\what{j}}-1} \subseteq S'_z$. Each vertex $w \in W_{{\what{j}}-1}$ has at least $\degval$ neighbors in $U_z$, and $w$ is adjacent to all the vertices in $\{u_1, u_2 \cdots, u_{{\what{j}}-1}\}$. Thus, each $w \in W_{{\what{j}}-1}$ has at least $i-({\what{j}}-1) \geq 1$ (note that ${\what{j}} \leq i$) neighbors in $U_z \setminus \{u_1, u_2 \cdots, u_{j-1}\}$. Again by the Pigeonhole principle, there exists $u_{\what{j}} \in U_z \setminus \{u_1, u_2 \cdots, u_{j-1}\}$, that is adjacent to at least $|W_{{\what{j}}-1}|/ |U_z \setminus \{u_1, u_2 \cdots, u_{j-1}\}| \geq |W_{{\what{j}}-1}|/ (p \cdot \eta) \geq (p \cdot \eta)^{i-{\what{j}} + 1}$ vertices in $W_{{\what{j}}-1}$. Set $W_{{\what{j}}} = N_G(u_{\what{j}}) \cap W_{{\what{j}}-1}$, and note that $|W_{\what{j}}| \geq (p \cdot \eta)^{i-{\what{j}} + 1}$. This finishes the construction of $u_{\what{j}}$ and $W_{\what{j}}$, for each ${\what{j}} \in [i]$, with the desired properties. Note that each $w \in W_i$ is adjacent to each vertex in $\{u_1,u_2,\cdots, u_i\}$ and $|W_i| \geq p \cdot \eta \geq i$. Moreover, by construction, $\{u_1,u_2,\cdots, u_i\} \cap W_i =\emptyset$ and $|\{u_1,u_2,\cdots, u_i\}| = i$. From the above discussions we can conclude that $G[W_i\cup \{u_1,u_2,\cdots, u_i\}]$ contain $K_{i,i}$ as a subgraph, which is a contradiction. Thus we conclude that $|S_z\setminus U_z| < \szsize$
\end{proof} 

From~\Cref{clm:Dstr-almost-qresp}, we now need to construct $\C{D}' \subseteq \C{F}$, such that each set $B$ of size $\wvec{q}$ that contains a vertex from $S$ and fits a set from $\C{F}$, we have a set in $\C{D}'$ that fits $B$. Note that once we have such a set $\C{D}'$ with the guarantee that $|\C{D}^*\cup\C{D}'| \leq \alpha^{2p +\sm{\wvec{q}}}$, then we can output $\C{D}^*\cup\C{D}'$ as the desired $\wvec{q}$-Grundy representative for $\C{F}$. We now focus on constructing such a $\C{D}'\subseteq \C{F}$.

Let $S'$ be the subset of $S$ containing all those $s \in S$ for which the $\chi(s)$th coordinate of $\wvec{q}$ is at least $1$. For each $s \in S'$, let $\C{F}_s := \{A \in \C{F} \mid A \cup \{s\} \mbox{ is $\chi$-independent}\}$, and let $\wvec{q_s}$ be the vector obtained from $\wvec{q}$ by decreasing its $\chi(s)$th coordinate by $1$, and thus, we have $0 \leq \sm{\wvec{q_s}} < \sm{\wvec{q}}$. We compute $\C{F}'_s \qrep{\wvec{q_s}} \C{F}_s$ with at most $\alpha^{2p+ \sm{\wvec{q}} -1}$ sets (see the correctness invariant). Set $\C{F}' := \C{D}^* \cup (\cup_{s\in S'} \C{F}'_s)$, and note that $|\C{F}'| \leq \eta + \alpha^{2p+ \sm{\wvec{q}} -1} \cdot |S| \leq \eta + \alpha^{2p+ \sm{\wvec{q}} -1} \cdot (k \cdot (p \eta)^{i+1} + p \eta) \leq \alpha^{2p+ \sm{\wvec{q}} -1} \cdot 3k \cdot (p\eta)^{i+1} \leq \alpha^{2p+ \sm{\wvec{q}}}$. We return $\C{F}'$ as the $\wvec{q}$-Grundy representative for $\C{F}$, and the next lemma proves the correctness of this step. 

\begin{lemma}\label{claim-corretrecur-qrep}
$\C{F}' \qrep{\wvec{q}} \C{F}$, and we can compute $\C{F}'$ using at most $\kappa \cdot \alpha^{2p + \sm{\wvec{q}}} \cdot p \cdot |\C{F}|$ computational steps. 
\end{lemma}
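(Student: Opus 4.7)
The plan is to prove the correctness and size/time bounds in three essentially independent steps, with the correctness argument being the most delicate one.

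For correctness, I would take an arbitrary set $B$ of size $\wvec{q}$ that is fitted by some $A \in \C{F}$, and show that $\C{F}'$ contains a set fitting $B$. I would invoke Lemma~\ref{clm:Dstr-almost-qresp} to split into two cases. In the easy case, $\C{D}^*$ already contains a set fitting $B$, and since $\C{D}^* \subseteq \C{F}'$ we are done. In the nontrivial case, pick any $s \in B \cap S$. First I would check that $s$ actually lies in $S'$: since $s \in B \cap X_{\chi(s)}$ and $|B| = \wvec{q}$, the $\chi(s)$-th coordinate of $\wvec{q}$ is at least $1$, so $s \in S'$ by construction. Then set $B' := B \setminus \{s\}$, so $|B'| = \wvec{q_s}$. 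Because $A$ fits $B$ (so $A \cup B$ is $\chi$-independent), in particular $A \cup \{s\}$ is $\chi$-independent, giving $A \in \C{F}_s$; also $A$ fits $B'$. The representative property $\C{F}'_s \qrep{\wvec{q_s}} \C{F}_s$ then yields $A' \in \C{F}'_s$ that fits $B'$.

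The main obstacle is verifying that this $A'$ actually fits the \emph{original} $B$, not merely $B'$. For colors $z \neq \chi(s)$, independence of $(A' \cup B) \cap X_z = (A' \cup B') \cap X_z$ follows directly from $A'$ fitting $B'$. For $z = \chi(s)$, I would combine three facts: $(A' \cap X_z) \cup (B' \cap X_z)$ is independent because $A'$ fits $B'$; $(A' \cap X_z) \cup \{s\}$ is independent because $A' \in \C{F}_s$; and $\{s\} \cup (B' \cap X_z) = B \cap X_z$ is independent because $B$ itself is $\chi$-independent (a consequence of $A$ fitting $B$). Together these cover all pairs in $(A' \cup B) \cap X_{\chi(s)}$, so $A'$ fits $B$, as desired, and $A' \in \C{F}'_s \subseteq \C{F}'$.

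For the size bound, I would simply plug in: $|\C{F}'| \leq |\C{D}^*| + \sum_{s \in S'} |\C{F}'_s| \leq \eta + |S| \cdot \alpha^{2p + \sm{\wvec{q}} - 1}$, then apply Lemma~\ref{clm:size-bound-S} and the definition $\alpha = \alphaval$ to telescope this into $\leq \alpha^{2p + \sm{\wvec{q}}}$, as already indicated in the text preceding the statement.

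For the running-time bound, I would note that only two nontrivial subroutines are invoked recursively: for each $s \in S'$ we compute $\C{F}'_s \qrep{\wvec{q_s}} \C{F}_s$, and since $\sm{\wvec{q_s}} = \sm{\wvec{q}} - 1 < \sm{\wvec{q}}$, the correctness invariant applies with $p' = p$, giving a cost of at most $\kappa \cdot \alpha^{2p + \sm{\wvec{q}} - 1} \cdot p \cdot |\C{F}_s| \leq \kappa \cdot \alpha^{2p + \sm{\wvec{q}} - 1} \cdot p \cdot |\C{F}|$ per call. Summing over $s \in S'$ and using $|S'| \leq |S| \leq k(p\eta)^{i+1} + p\eta \leq \alpha$ yields a total of at most $\kappa \cdot \alpha^{2p + \sm{\wvec{q}}} \cdot p \cdot |\C{F}|$ recursive work, which absorbs the polynomial overhead for constructing $\C{D}$, identifying heavy vertices in $S$ (by iterating over each $v \in V(G)$ and counting $|N_G(v) \cap U_z|$), and splitting $\C{F}$ into the $\C{F}_s$'s, provided $\kappa$ was chosen as a sufficiently large constant at the outset.
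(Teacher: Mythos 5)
Your correctness argument matches the paper's proof essentially step by step, and in fact is slightly more explicit than the paper in two places: you verify that the chosen $s \in B \cap S$ actually lies in $S'$ (because the $\chi(s)$th coordinate of $\wvec{q}$ is at least $1$), and you spell out the pairwise independence check within $X_{\chi(s)}$ that the paper leaves implicit. Both are faithful to the paper's intent.

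The one place to be careful is the running-time accounting. You relax $|S'| \leq k(p\eta)^{i+1} + p\eta$ all the way to $\leq \alpha$, which makes the recursive cost alone already hit the target $\kappa \cdot \alpha^{2p+\sm{\wvec{q}}} \cdot p \cdot |\C{F}|$, leaving no multiplicative room to absorb the overhead for building $\C{D}$, $S$, and the $\C{F}_s$'s; and since the recursive calls also carry the factor $\kappa$, you cannot fix this by simply choosing $\kappa$ large at the outset. The fix is exactly the bound you already wrote down before relaxing it: $|S| \leq k(p\eta)^{i+1} + p\eta \leq 2\alpha/3$ (using $\alpha = 3k(p\eta)^{i+1}$), so the recursive work is at most $\frac{2\kappa}{3}\,\alpha^{2p+\sm{\wvec{q}}}\, p\, |\C{F}|$, and the $\frac{1}{\alpha}$-fraction overhead then fits within $\kappa \geq 4$, which is the computation the paper carries out.
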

\begin{proof}
Consider any $B\subseteq V(G)$ of size $\wvec{q}$, for which there is $A \in \C{F}$, such that $A$ fits $B$. If $B \cap S = \emptyset$, then from~\Cref{clm:Dstr-almost-qresp}, there is $A' \in \C{D}^*\subseteq \C{F}'$, such that $A'$ fits $B$. Now we consider the case when $B \cap S \neq \emptyset$, and choose an arbitrary vertex $s \in B \cap S$. Note that $A \in \C{F}_s$ as $A$ fits $B$, and thus, $A \cup \{s\}$ is $\chi$-independent. Also, note that $A$ fits $B':=B \setminus \{s\}$, and $|B'| = \wvec{q_s}$. As $\C{F}'_s \qrep{\wvec{q_s}} \C{F}_s$ and $A \in \C{F}_s$ fits $B'$ of size $\wvec{q_s}$, there exists $\what{A} \in \C{F}'_s$ that fits $B'$. As $\what{A} \in \C{F}'_s \subseteq \C{F}_s$, we can obtain that $\what{A}\cup \{s\}$ is $\chi$-independent. Moreover, we also have that $B$ is $\chi$-independent. From the above discussions, we can conclude that $\what{A}$ fits $B$. Hence we conclude that $\C{F}' \qrep{\wvec{q}} \C{F}$. 

The number of computational steps required to compute $\C{F}'$ can be bounded by:

$$|\C{F}| \cdot p + (k \cdot \szsize + p \cdot \eta) \cdot (\kappa \cdot \alpha^{2p + \sm{\wvec{q}} - 1} \cdot p \cdot |\C{F}|)$$

$$\leq |\C{F}| \cdot p + 2k \cdot \szsize \cdot (\kappa \cdot \alpha^{2p + \sm{\wvec{q}} - 1} \cdot p \cdot |\C{F}|)$$

 As $\alpha = \alphaval$, we can obtain that $2\alpha/3 = 2k \szsize$. Also, recall that $\kappa \geq 4$. Thus, we have:
 
 $$|\C{F}| \cdot p + 2k \cdot \szsize \cdot \kappa \cdot \alpha^{2p + \sm{\wvec{q}} - 1} \cdot p \cdot |\C{F}| \leq |\C{F}| \cdot p + \frac{2\alpha}{3} \cdot \kappa \cdot \alpha^{2p + \sm{\wvec{q}} - 1} \cdot p \cdot |\C{F}|$$
 
 
  $$\leq |\C{F}| \cdot p \cdot \alpha^{2p + \sm{\wvec{q}}} \cdot (\frac{1}{\alpha} + \frac{2\kappa}{3}) \leq \kappa \cdot |\C{F}| \cdot p \cdot \alpha^{2p + \sm{\wvec{q}}}$$
  
This concludes the proof.
\end{proof}

The correctness of the above algorithm and the desired running time follows from the correctness of the base cases, the correctness of Case 1 (from Observation~\ref{obs:transitive-vec-rep} and~\ref{obs:transitive-vec-rep-one-vertex}), and their runtime analysis, and Lemma~\ref{claim-corretrecur-qrep}. This gives us proof of~\Cref{lem:grundy-rep}. 

}


\bibliography{References}

\end{document}